%% file: main.tex
\documentclass[showkeys,floatfix,noshowpacs, preprintnumbers, twocolumn, aps,superscriptaddress,pra,10pt,nofootinbib]{revtex4-2}

\input{preamble}

\newcommand{\ulmshort}{Institute of Theoretical Physics, Ulm University, Ulm, Germany} 
\newcommand{\iqstshort}{Center for Integrated Quantum Science and Technology (IQST), 89081 Ulm, Germany} 

\begin{document}

\title{Zeno-Assisted Quantum Heat Engines}

\begin{abstract}
    Finite-time quantum heat engines (QHEs) typically extract less work than their quasistatic counterparts because fast driving generates coherences and non-adiabatic transitions during the work strokes, a phenomenon commonly referred to as quantum friction. Quantum lubrication denotes a broad class of strategies that use auxiliary systems or controls to mitigate this loss. In this work, we introduce a lubrication protocol based on the quantum Zeno dynamics (QZD). By coupling the working medium to an auxiliary lubricant system and frequently monitoring the lubricant, we confine the joint evolution to a Zeno subspace and obtain an effective shortcut to adiabaticity during the work strokes of a QHE running an Otto cycle. In the ideal Zeno limit, the protocol reproduces the transitionless dynamics required to preserve populations in the instantaneous energy basis and recover the Otto efficiency at finite stroke duration. We also analyze several implementation-dependent thermodynamic costs, including switching, driving, monitoring, and imperfect thermalization, in order to assess how these costs constrain the practical gains in efficiency and power. Our results identify QZD as a conceptually distinct route to quantum lubrication and highlight quantum heat engines as a useful setting in which to study the interplay between strong coupling, measurement, and quantum thermodynamic control.
\end{abstract}

\setlength{\skip\footins}{12pt}
\def\thefootnote{$\star$}\footnotetext{These authors contributed equally to this work.}\def\thefootnote{\arabic{footnote}}

\author{Selma Memi\'{c}$^\star$}
\email{selma.memic@uni-ulm.de}
\affiliation{\ulmshort}
\affiliation{\iqstshort}
\author{Rafael Wagner$^\star$}
\email{rafael.wagner@uni-ulm.de}
\affiliation{\ulmshort}
\affiliation{\iqstshort}
\author{Susana F. Huelga}
\email{susana.huelga@uni-ulm.de}
\affiliation{\ulmshort}
\affiliation{\iqstshort}
\author{Martin B. Plenio}
\email{martin.plenio@uni-ulm.de}
\affiliation{\ulmshort}
\affiliation{\iqstshort}

\maketitle

\tableofcontents

\section{Introduction}

Quantum heat engines (QHEs)~\cite{cangemi_quantum_2024,scovil_three-level_1959} lie at the core of research in quantum thermodynamics~\cite{deffner2019quantum,kosloff_quantum_1984,alicki1979quantumopensystemheatengines,alicki_introduction_2018}. Their study connects to a wide range of practical and foundational questions, from the design of microscopic refrigerators for quantum computation~\cite{aamir_thermally_2025} to the energetic requirements of quantum information processing~\cite{auffeves2022quantumenergy}. They also raise conceptual issues about how to define work and heat in a microscopic quantum setting~\cite{pusz1978passive,alicki1979quantumopensystemheatengines,talkner2007fluctuation,baumer2018fluctuating}, how those definitions impact the efficiency and power of nanoscale engines~\cite{rossnagel2014nanoscale,nelly2017surpassing}, and what are the essential differences between classical and quantum thermodynamics~\cite{woods2019maximumefficiencyof,lostaglio2020certifying,lostaglio2018quantumfluctuation,comar2025contextuality}.

A quantum heat engine is the quantum analog of a classical heat engine: the working medium is a quantum system, the reservoirs are modeled as large quantum systems in thermal states, and work and heat are usually defined through changes in expectation values of the relevant Hamiltonians. Here, we focus on \emph{stroke-based engines} (rather than continuous engines~\cite{kosloff_quantum_2014}) and in particular on the prototypical four-stroke Otto cycle~\cite{kosloff_quantum_2017}. In this setting the dynamics is separated into \emph{work strokes}, during which an external control drives the working medium, and \emph{heat strokes}, during which the working medium is coupled to a thermal bath.

Previous literature~\cite{kosloff_discrete_2002,feldmann2006quantum,feldmann_heat_1996,feldmann2003quantum} on both classical and quantum engines  has shown that \emph{finite-time} operation can degrade performance. In classical finite-time thermodynamics, this deterioration stems from various
limitations, such as finite heat transfer~\cite{curzon1975efficiency,vandenbroeck2005thermodynamic} or low dissipation in stochastic engines~\cite{schmiedl2007efficiency, espositio2010efficiency}. In the quantum setting, the main difficulty during finite-time work strokes is different: when the drive is not quasistatic,~\footnote{In quantum control,~``\emph{adiabatic}'' usually means that the Hamiltonian changes slowly enough to suppress transitions between instantaneous eigenstates~\cite{kato1950ontheadiabatic,wang2016necessary,albash2018adiabatic}. In quantum thermodynamics, the same term is often used to express that the system is \emph{closed}, so that no heat is exchanged with the environment. To avoid confusion, we will normally use ``adiabatic'' in its thermodynamic sense, and use ``\emph{quasistatic}'' or ``\emph{transitionless}''~\cite{berry2009transitionless} for the quantum-control notion. The standard expression ``\emph{shortcuts to adiabaticity}'' is retained because it is established terminology in the control literature.} it generally induces transitions between instantaneous energy eigenstates and creates coherences in that basis~\cite[Chapter 10]{griffiths2005introduction}. This effect is commonly known as \emph{quantum friction}~\cite{rezek_reflections_2010,kosloff_discrete_2002, delCampo2018friction-free} because it reduces the extractable work relative to the quasistatic ideal.

Several strategies have been proposed to mitigate quantum friction in finite-time QHEs. One route is to engineer an auxiliary dephasing environment that selectively removes coherences while preserving populations in the working medium; this idea has been referred to as \emph{quantum lubrication}~\cite{feldmann2006quantum}. For example, Weber \emph{et al.}~\cite{weber2023thermodynamiccostspuredephasing} considered a setup where the working system is coupled to an engineered lubricant comprising a harmonic oscillator that interacts with a dissipative thermal bath. This composite system, if correctly calibrated (as they show), yields a Markovian evolution that suitably drives the system, dumping coherences into the auxiliary lubricant and thereby avoiding unwanted transitions. Another approach is to optimize timing and stroke durations so that coherences generated during different parts of the protocol destructively interfere~\cite{camati2019coherence,rezek2009thequantumrefrigerator}. Yet another complementary family of techniques is known under the heading of \emph{shortcuts to adiabaticity}~\cite{del_campo_shortcuts_2013,alipour_shortcuts_2020,takahashi_shortcuts_2024,guery-odelin_shortcuts_2019}: by adding specially designed control terms, one enforces quasistatic-like state transformations in finite time~\cite{campo_more_2014,beau_scaling-up_2016, hou2025efficiency,shende2023otto,Deng2018superadiabatic,pedram2023quantum}. 

All of these methods explore variants of the same foundational question: What are the most feasible, effective, and physically well-motivated ways to assist a finite-time QHE in approaching the performance of the ideal (quasistatic) engine while operating at finite power? 

\textbf{Main results.} In this work, we propose and analyze a distinct form of lubrication based on the \emph{quantum Zeno effect}~\cite{misra1977zeno,greenfield_unified_2025,facchi2008quantum,vonNeumann1932mathematische}. We highlight our main results in~\firstbox. Rather than relying on dephasing to remove coherences, we trade dissipation for continuous monitoring that inhibits the formation of unwanted coherences. The key idea is to frequently monitor additional systems coupled to the working medium so that its evolution is constrained within certain \emph{Zeno subspaces}~\cite{facchi2002zeno}. This strategy, known as \emph{quantum Zeno dynamics} (QZD)~\cite{facchi2000quantumzeno,facchi2008quantum}, has been investigated in a variety of contexts and found to be useful for tasks ranging from error detection and correction~\cite{hacohen2018incoherent,erez_correcting_2004,cramer_repeated_2016,bluvstein_architectural_2025}, to adiabatic quantum computation~\cite{paz-silva_zeno_2012,berwald_grover_2024,berwald_zeno-effect_2025,zhang_optimal_2025,lewalle_multi-qubit_2023}, and other quantum control tasks~\cite{hacohen-gourgy_continuous_2020,dizaji_Hamiltonian_2024,lewalle2024optimal}.

Concretely, we consider a finite-time drive that, left unchecked, produces coherences which degrade the engine's work output. We then show how to assist the working medium during the work strokes of an Otto cycle by coupling it to an auxiliary (lubricant) system that is \emph{strongly coupled} to the working system during the drive and subject to a suitable monitoring protocol. The effective dynamics is then described by the interplay between \emph{two} different manifestations of the QZD: a strong-coupling drive~\cite{burgarth2019generalized,burgarth2022oneboundtorulethem,facchi2002zeno,schulman1998continuous} and a limiting sequence of strong projective measurements~\cite{misra1977zeno,facchi2001from,facchi2010quantumzenoeffect}, which together yield an effective time-dependent Hamiltonian that includes the original one and a counter-diabatic term inducing a shortcut to adiabaticity. Therefore, this dynamics (in the Zeno limit) preserves the populations of the working medium, inhibits the generation of unwanted coherences due to the finite drive, and allows the cycle to reach the Otto efficiency while operating at relatively short stroke duration. 

We also compare our results with those of Weber~\emph{et al.}~\cite{weber2023thermodynamiccostspuredephasing}. In their work, thermodynamic costs of their lubrication protocol were analyzed and shown to disappear in certain limits. Here, we demonstrate that a closely related system dynamics,  numerically found in Ref.~\cite{weber2023thermodynamiccostspuredephasing}, can be obtained from a Zeno-assisted QHE. Thus, we recover the transitionless drive that they obtain in analogous parameter regimes of their lubricant model.

\begin{mainresultsbox}{\hypertarget{box_1}{BOX 1: Main Results}}
\vspace{0.05cm}
{\parbox[h]{\dimexpr\linewidth\relax}{
\begin{center} \usym{1F441}~\textbf{Zeno-assisted scheme}\end{center}
Using the quantum Zeno dynamics, we engineer a lubrication scheme whose effective Hamiltonian contains a counter-diabatic term, yielding an emergent shortcut to adiabaticity. We simulate the complete dynamics, including the lubricant and its monitoring, for a single-qubit engine, while the theoretical result applies to any finite-dimensional system.
\begin{center} \textbf{\usym{1F501}}~\textbf{Comparison with Weber~et~al.}\end{center}
We compare our protocol, which relies on strong coupling and continuous monitoring, with their strong-dephasing protocol. We show that, in a limiting regime, our Zeno-assisted engine reproduces a working-medium dynamics closely related to one found numerically by Weber et al.~\cite{weber2023thermodynamiccostspuredephasing}.
\begin{center}\usym{2699}~\textbf{Thermodynamic costs}\end{center} We thoroughly investigate whether the
ideal protocol continues to perform well in specific regimes once detrimental effects
are taken into account. We find that the fundamental thermodynamic costs become negligible in the Zeno limit and in suitable parameter regimes.
}}\par
\vspace{0.4cm}
\end{mainresultsbox}

At the level of the idealized model, our protocol suggests that one can approach the Otto efficiency at finite stroke duration and, in that sense, alleviate the usual efficiency-power trade-off. In particular, if the work strokes are lubricated so that they remain effectively transitionless, and if the thermalization strokes are also made sufficiently fast, then the cycle can in principle deliver  large power while remaining close to the ideal Otto benchmark. In practice, however, this conclusion is constrained by implementation-dependent costs. Quantum heat engines are delicate nanoscale devices~\cite{oftelie2026impact}, and coupling the working medium to a lubricant can introduce additional energetic and entropic penalties. For that reason, we analyze several such costs in detail---including switching work cost~\cite{molitor2020stroboscopic}, control-driving cost~\cite{zheng2016counterdiabaticcost}, losses due to imperfect thermalization~\cite{rivas2012open, breuer2002theory}, cost of preparing an initial pure coherent state~\cite{abdelkhalek2016energy}, the cost of Zeno stabilization~\cite{abdelkhalek2016energy}, and entropy production associated with rare jumps between Zeno subspaces~\cite{landi2021irreversible,santos2019roleofcoherence,Elouard2017}---in order to determine which of them materially limit the advantages of lubrication.   

Our analysis shows that many of the costs considered are either negligible in the regimes of interest or are not specific to the lubrication of the work strokes themselves. The main exception is the cost of implementing the strong-coupling drive through an external control field~\cite{zheng2016counterdiabaticcost}, which can become significant in realistic devices. This indicates that the protocol is most promising either in architectures where the relevant strong coupling arises naturally, or in platforms for which the control overhead can be kept sufficiently small. Even with this caveat, the protocol is of clear theoretical interest and motivates further the study of thermodynamics in Zeno regimes~\cite{gherardini2018nonequilibrium,gherardini2021thermalization,zambon2023boundseffectivethermalizationzeno,barontini2025quantumzeno}.

\begin{figure*}[t]
    \centering
    \includegraphics[width=0.8\textwidth]{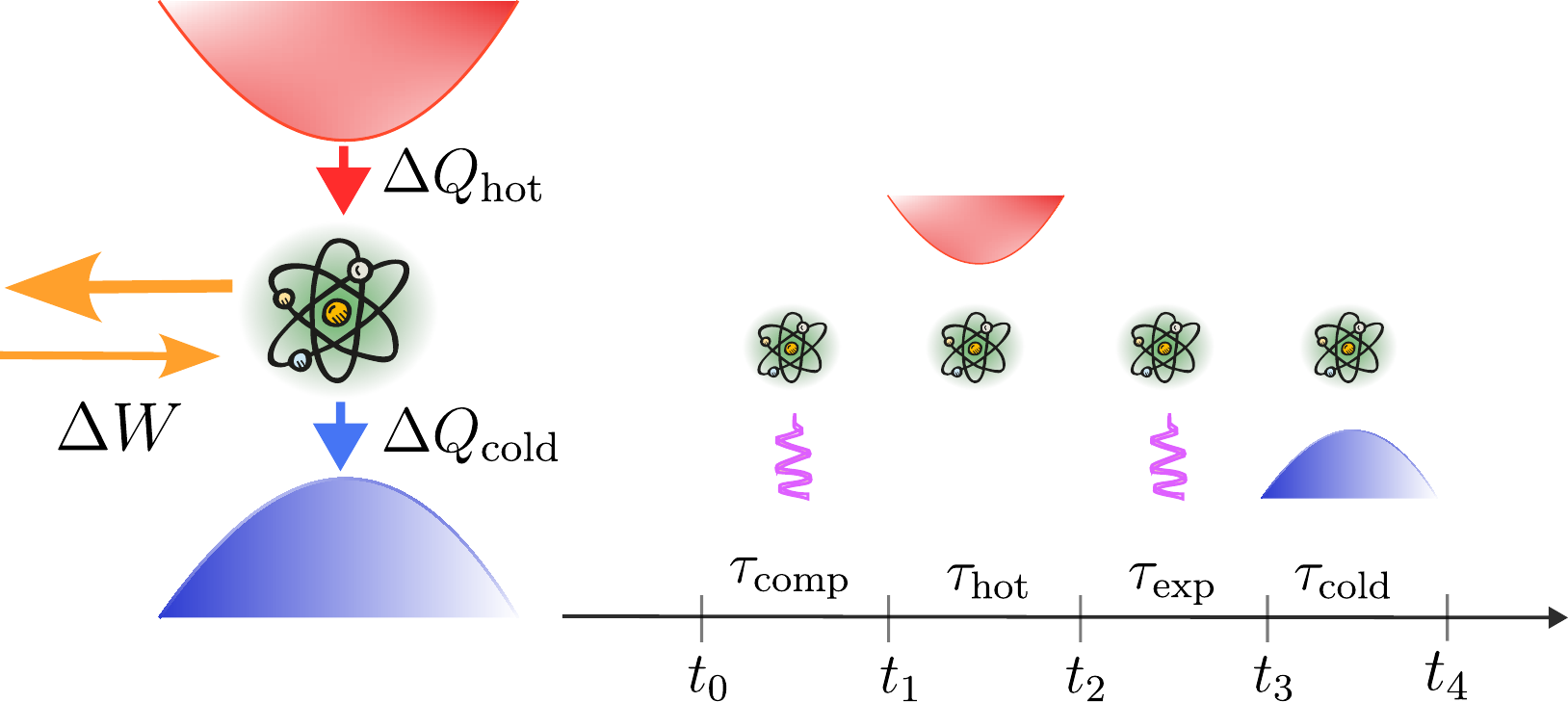}
    \caption{\textbf{Four-stroke quantum heat engine operating an Otto cycle.} (Left) A single working system $\mathcal{H}_S$  interacts alternately with two thermal baths, $\mathcal{H}_{B_{\rm h}}$ and $\mathcal{H}_{B_{\rm c}}$, at inverse temperatures $\beta_{\rm h}=T_{\rm h}^{-1}$ and $\beta_{\rm c}=T_{\rm c}^{-1}$, respectively. The labeled arrows indicate the average heat and work flows according to the sign convention adopted later in the text. (Right) The cycle consists of four strokes: an adiabatic compression, a hot isochoric thermalization, an adiabatic expansion, and a cold isochoric thermalization. During the work strokes the working medium is isolated from the baths and evolves unitarily; during each heat stroke it interacts with only one bath.} 
    \label{fig:QHE}
\end{figure*}

\textbf{Outline.} The structure of this work is as follows. In Sec.~\ref{sec:background_section} we describe the required background and notation for the presentation of our results in the following sections. The type of QHE we consider is presented in Sec.~\ref{sec:QHE_and_Otto_cycle}, and the known theory of efficiency, power, work extraction, optimality, and finite-time effects in these engines is reviewed in Sec.~\ref{sec:efficiency_power_and_finite_time}. These two sections can be skipped by readers with a solid background in quantum thermodynamics. We also describe the relevant background on the quantum Zeno effect in Sec.~\ref{sec:different_manifestations_zeno}, where we detail the known theory of the two manifestations of QZD that are relevant to us, namely those arising from frequent monitoring and strong coupling.

In Sec.~\ref{sec:Zeno_assisted_quantum_heat_engine} we begin presenting our results by describing the structure of our Zeno-assisted QHE. In that section we describe the device and specify how the Zeno-assistance takes place. We show explicitly how the Zeno-assistance provides an improvement via the emergence of a shortcut to adiabaticity. We then numerically verify the transitionless dynamics and show the effect of lubrication on efficiency, power, and extracted work, comparing finite-time non-lubricated and finite-time lubricated protocols.

Section~\ref{sec:relationship_between_decoherence_and_zeno} compares our Zeno-assisted method with the standard approach based on dephasing-assisted work strokes. In particular, we focus on the mechanism introduced in Ref.~\cite{weber2023thermodynamiccostspuredephasing}, as it allows for a direct comparison. In Sec.~\ref{sec:thermodynamic_costs}, we analyze the thermodynamic costs and limitations of our lubrication protocol. We conclude in Sec.~\ref{sec:discussion_and_outlook} with a discussion of our results, their relation to existing literature, and possible future research directions.

\section{Background}\label{sec:background_section}

\subsection{Quantum heat engines and the Otto cycle}\label{sec:QHE_and_Otto_cycle}

A \emph{quantum heat engine} is a quantum thermodynamic device operating a protocol capable of extracting work from the spontaneous heat flow between two baths at different temperatures. Standardly, a quantum thermodynamic heat device is defined by a multipartite quantum system, where each subsystem has a specific internal Hamiltonian, and the many subsystems have fixed allowed interactions. The QHE we consider is a device described by a tripartite quantum system 
\begin{equation}\label{eq:engines_system}
\mathcal{H}_{\rm QHE}:=\mathcal{H}_{B_{\rm c}}\otimes \mathcal{H}_S\otimes \mathcal{H}_{B_{\rm h}}.
\end{equation}
It is composed of two systems $\mathcal{H}_{B_{\rm h}}$ and $\mathcal{H}_{B_{\rm c}}$---referred to as \emph{heat baths}---in thermal states at temperatures $T_{\rm h}$ (`hot') and $T_{\rm c}$ (`cold') such that $T_{\rm h} $ is larger than $ T_{\rm c}$, and another system $\mathcal{H}_S$---called the \emph{working medium} (also known as  \emph{working system} or \emph{fluid}). This type of device is shown in Fig.~\ref{fig:QHE}. In standard QHEs the two baths never interact directly and energetic exchanges happen via the working medium viewed as an energetic mediator.

A \emph{protocol specification} is the description of the dynamical evolution of the engine system together with the specification of how one estimates energetics of the engine (i.e., as a specification of how work and heat are defined). We refer to the protocol as a \emph{cycle} whenever the specification is such that the system Hamiltonian of the working medium is cyclic, i.e. 
\begin{equation}
    H_S(t+\tau) = H_S(t),
\end{equation}
and we call $\tau$ the duration of the cycle. The same thermodynamic device can operate distinct protocols depending on the distinct specification of its dynamics and cycle operation (e.g. continuous dynamics~\cite{kosloff_quantum_2014} or a Carnot cycle~\cite{lee_Carnot_2017,bender2000quantummechanicalCarnot,geva_quantummechanical_1992,quan2007quantum}) and on the distinct specification of energetic regimes (e.g. engine or refrigerator regimes). 

In this work we consider QHEs operating an \emph{Otto cycle}~\cite{feldmann2003quantum,kieu2006quantum,feldmann2000performance,kosloff_discrete_2002}, a four-stroke protocol consisting of two work strokes and two heat strokes~\cite{solfanelli2020nonadiabatic}. We use the term \emph{stroke} for a stage of the cycle having a specified generator of the dynamics acting for a finite duration. During the work strokes the working medium is driven by a time-dependent Hamiltonian, while during the heat strokes its reduced dynamics is generated effectively by a bath-induced open system evolution. The distinction between work and heat strokes is therefore operational, even though each stroke is defined by its effective generator. Figure~\ref{fig:QHE} summarizes the four strokes considered here. Throughout this work we assume $\hbar = k_B = 1$ units.  

\subsubsection{Device specification and preparation}

We take the total Hilbert space of the engine (cf. Eq.~\eqref{eq:engines_system}) to be $$\mathcal{H}_{\rm QHE} = \bigotimes_k \mathcal{F}_k \otimes \mathbbm{C}^2 \otimes \bigotimes_k \mathcal{F}_k$$ so that the working medium is a single qubit $\mathcal{H}_S=\mathbbm{C}^2$~\cite{solfanelli2020nonadiabatic}, and each bath is modeled as an infinite-dimensional bosonic thermal reservoir $\mathcal{H}_{B_{\rm h}} = \mathcal{H}_{B_{\rm c}} =\bigotimes_k \mathcal{F}_k$, where  $\mathcal{F}_k$ is the bosonic Fock space associated with mode $k$.

The initial system Hamiltonian
is given by \begin{equation}\label{eq:initial_system_Hamiltonian}
H_{\rm c} = \frac{\omega}{2}Z = \frac{\omega}{2}\left(\vert 0 \rangle \langle 0 \vert - \vert 1 \rangle \langle 1 \vert\right)
\end{equation}
(the label ``c'' indicating ``cold'' will become clear shortly) where $Z$ denotes the third Pauli matrix (we will also use the first Pauli matrix $X$), and 
\begin{equation}\label{eq:initial_incoherent_state}
\rho_S(t_0) = p_0 \vert 0 \rangle \langle 0 \vert + p_1 \vert 1 \rangle \langle 1 \vert 
\end{equation}
is diagonal relative to the  eigenbasis of $H_{\rm c}$. For simplicity, we also assume that $\rho_S(t_0) = {e^{-\beta_{\rm c}H_{\rm c}}}/{\mathrm{Tr}[e^{-\beta_{\rm c}H_{\rm c}}]}$ is a thermal state of inverse temperature $\beta_{\rm c}$. 

We quantify work in the microscopic setting as an integral over the derivative of the time-dependent part of the system Hamiltonian $H_S(t)$,
\begin{equation}\label{eq:work}
\Delta W(t_f,t_i) :=\int_{t_i}^{t_f} \mathrm{d}s\, \mathrm{Tr}\left[\frac{\mathrm{d} H_{S}(s)}{\mathrm{d}s} \rho_S(s)\right].
\end{equation}
This expression is the standard microscopic definition of work used in the so-called \emph{semiclassical}~\cite{Dann2023} formulation of quantum thermodynamics. It follows from the first law at the level of expectation values, as developed by Pusz and Woronowicz~\cite{pusz1978passive}, Spohn and Lebowitz~\cite{Spohn1978}, and Alicki~\cite{alicki1979quantumopensystemheatengines}. In the same framework, the heat exchanged during the thermalization strokes is given by
\begin{equation}\label{eq:heat}
    \Delta Q(t_f,t_i) = \int_{t_i}^{t_f}\mathrm ds \,\mathrm{Tr}\left[H_{S}(s)\frac{\mathrm d \rho_S(s)}{\mathrm d s}\right],
\end{equation}
which attributes energy changes due to the state variation---rather than to explicit time dependence of the Hamiltonian---to heat.

\subsubsection{Four-stroke Otto cycle}

For our engine, the working qubit is driven by a cyclic time-dependent Hamiltonian given by a linear combination of the Pauli operators $X$ and $Z$, which, as we will see shortly, induces appreciable quantum friction when the working qubit is driven far from the quasistatic regime.~\footnote{Note that one \emph{can} drive a system with $\omega(t)Z$, in such a way that coherences are not generated and populations are not changed by this dynamics. Lubrication is only necessary whenever the drive is such that the time-dependent Hamiltonian does not commute at different times.} Let us choose $t_0 =0$ and modular variables $t=t \bmod \tau$ relative to the cycle time $\tau=t_4-t_0 \equiv t_4$ (as illustrated in Fig.~\ref{fig:QHE}). The total cyclic time-dependent system Hamiltonian $H_S(t)$ is therefore given by
\begin{equation}\label{eq:Hamiltonian_S}
    H_S(t) = \frac{\omega}{2}Z + \frac{f(t)}{2} X,
\end{equation}
where $X$ is the first Pauli matrix, and the time-dependent function $f(t)$ describes a linear drive acting on the qubit in the following way:
\begin{equation}\label{eq:function_f}
    f(t)=\left \{ \begin{array}{ll}   
    \Omega_0\frac{t}{t_1}& 0\, \leq t < t_1\\
    \Omega_0& t_1 \leq t < t_2\\ 
    \Omega_0\frac{t_3-t}{t_3-t_2}&t_2\leq  t < t_3\\
    0& t_3 \leq t < t_4.   \end{array}\right. 
\end{equation}
This definition naturally elucidates the duration of each stroke as seen in Fig.~\ref{fig:QHE}. In summary, the four-stroke engine runs in the following manner:
\begin{enumerate}
    \item \textbf{Adiabatic compression stroke:} In the first stroke we isolate the working medium from the baths and drive the qubit for the time $\tau_{\rm comp}=t_1$, thus \emph{increasing} its energy gap (see Fig.~\ref{fig:workstrokes}). 
    
    \item \textbf{Hot isochoric~thermalization~stroke:} Following the compression stroke we implement the first heat stroke by letting the working medium interact with the hot bath for $\tau_{\rm hot} = t_2-t_1$. The system Hamiltonian is maintained constant.
        
    \item \textbf{Adiabatic expansion stroke:} Once the hot thermalization is concluded, we uncouple the system from the hot bath and turn on the drive once more during the time interval $\tau_{\rm exp} = t_3-t_2$. Here, the qubit's energy gap is \emph{decreased} (see Fig.~\ref{fig:workstrokes}).
    
    \item \textbf{Cold isochoric thermalization stroke:} During the time interval $\tau_{\rm cold} = t_4-t_3$ the working system evolves similarly as during the hot thermalization stroke. We let the working medium interact with the cold bath, while the system Hamiltonian is, once more, maintained constant.
\end{enumerate}

\begin{figure}[t]
    \centering
    \includegraphics[width=0.35\textwidth]{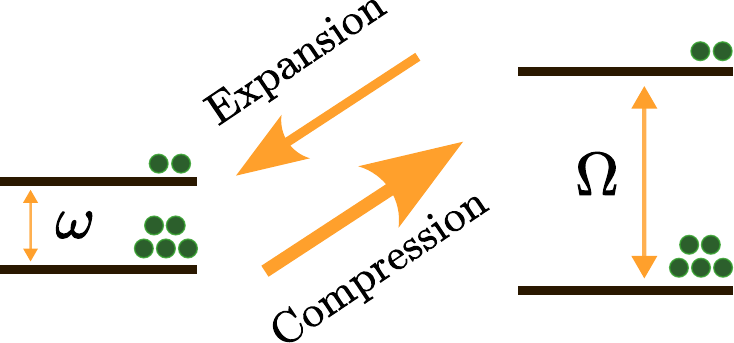}
    \caption{\textbf{Compression and expansion strokes.} During the engine cycle, the compression stroke begins with the system Hamiltonian $H_{\rm c}$ initialized at an energy gap $\omega$. The system Hamiltonian is then linearly driven to $H_{\rm h}$ such that the energy gap increases to $\Omega = \sqrt{\omega^2+\Omega_0^2} > \omega$. Ideally, the state remains unchanged in the instantaneous energy eigenbasis. This process extracts work from the system. The reverse process, in which the energy gap is reduced from $\Omega$ back to $\omega$, constitutes the expansion stroke and requires work to be performed on the system.}
    \label{fig:workstrokes}
\end{figure}

We provide further details on the full microscopic evolution during the compression and expansion strokes in Sec.~\ref{subsubsec:work_strokes}, as well as during the hot and cold thermalization strokes in Sec.~\ref{subsubsec:thermalisation_strokes}. Once the cold isochoric thermalization stroke is performed we re-initialize the compression stroke and continue implementing the same series of strokes. The total duration of the Otto cycle in our QHE is therefore 
\begin{equation}
\tau := \tau_{\rm comp} + \tau_{\rm hot} + \tau_{\rm exp} + \tau_{\rm cold}.
\end{equation}

\subsubsection{Work strokes}\label{subsubsec:work_strokes}

During the compression and expansion strokes  we isolate the working medium from the baths and drive the qubit such that it undergoes a unitary evolution
\begin{equation}
    U(t_f,t_i) = \mathcal{T}\left[\exp\left(-i\int_{t_i}^{t_f}H_S(s)\,\mathrm d s\right)\right],
\end{equation}
generated by the time-dependent Hamiltonian \eqref{eq:Hamiltonian_S}. This Hamiltonian can be rewritten as 
\begin{equation}
    H_S(t) = \frac{\varepsilon(t)}{2}R(t)
\end{equation}
where 
\begin{equation}\label{eq:eigenvalues_compression}
    \varepsilon(t) = \sqrt{\omega^2+ f^2(t)},
\end{equation}
and the diagonalized operator $R(t)$ in the (rotated) instantaneous eigenbasis $\{\vert 0_{t}\rangle, \vert 1_{t}\rangle \}$ is given by
\begin{equation}\label{eq:R(t)}
    R(t) = \cos(\theta_t)Z+\sin(\theta_t)X = \vert 0_t\rangle \langle 0_t\vert - \vert 1_t \rangle \langle 1_t \vert.
\end{equation}
Above, we have denoted the eigenvectors 
\begin{subequations}\label{eq:instantaneous_states}
\begin{align}
    \vert 0_t \rangle &= \cos\left(\frac{\theta_t}{2}\right)\vert 0 \rangle
    + \sin\left(\frac{\theta_t}{2}\right)\vert 1 \rangle, \label{eq:0t} \\
    \vert 1_t \rangle &= \sin\left(\frac{\theta_t}{2}\right)\vert 0 \rangle
    - \cos\left(\frac{\theta_t}{2}\right)\vert 1 \rangle, \label{eq:1t}
\end{align}
\end{subequations}
with the angle
\begin{equation}\label{eq:theta(t)}
    \theta_t=\arctan\left(\frac{f(t)}{\omega}\right),
\end{equation}
describing the rotation of working qubit's basis.

As mentioned, this Hamiltonian \emph{increases} the gap between energy levels in the working medium during the compression stroke, while \emph{decreasing} the gap during the expansion stroke. Therefore, it maps the system Hamiltonian from $H_{\rm c}$ to $H_{\rm h} = H_S(t_1)$ given by
\begin{equation}\label{eq:final_system_Hamiltonian}
H_{\rm h} = \frac{\omega}{2}Z+\frac{\Omega_0}{2}X = \frac{\Omega}{2}(\vert 0_{\rm h}\rangle \langle 0_{\rm h}\vert-\vert 1_{\rm h} \rangle \langle 1_{\rm h}\vert ),
\end{equation}
where $\{\vert 0_{\rm h}\rangle, \vert 1_{\rm h}\rangle \}$ is given in Eq.~\eqref{eq:instantaneous_states} with 
\begin{equation}
\theta_{\rm h} \equiv \theta_{t_1}=\arctan (\Omega_0/\omega), 
\end{equation}
and back from $H_{\rm h}$ to $H_{\rm c} = H_S(t_3)$, as seen in Fig.~\ref{fig:workstrokes}. For the cold eigenbasis, we simplify our notation by identifying $\{\vert 0_{\rm c}\rangle,\vert 1_{\rm c}\rangle\} \equiv \{\vert 0\rangle,\vert 1\rangle\}$. We use $\Omega$ (see Fig.~\ref{fig:workstrokes}) to denote the value of the energy gap $\varepsilon (t=t_1)$ at the end of the compression stroke (and at the beginning of the expansion stroke):
\begin{equation}
\varepsilon (t_1) \equiv \Omega = \sqrt{\omega^2+\Omega_0^2}.
\end{equation}

Using Eq.~\eqref{eq:work}, we can express the net work extracted from the system during the  compression stroke as (writing $\mathrm{Tr}[B\rho] \equiv \langle B \rangle_{\rho}$ for an arbitrary operator $B$)
\begin{align}
    \Delta W (t_1,0) &=\mathrm{Tr}\bigr[H_{\rm h}\rho_S(t_1)\bigr]-\mathrm{Tr}\bigr[H_{\rm c}\rho_S(0)\bigr] \nonumber \\
    &=\langle H_{\rm h}\rangle_{\rho_S(t_1)} - \langle H_{\rm c}\rangle_{\rho_S(0)}, \label{eq:work_compression_stroke}
\end{align}
and during the expansion stroke as 
\begin{align}
    \Delta W(t_3,t_2) &=\mathrm{Tr}\bigr[H_{\rm c}\rho_S(t_3)\bigr]-\mathrm{Tr}\bigr[H_{\rm h}\rho_S(t_2)\bigr] \nonumber \\
    &=\langle H_{\rm c}\rangle_{\rho_S(t_3)}-\langle H_{\rm h}\rangle_{\rho_S(t_2)} \label{eq:work_expansion_stroke}.
\end{align}
Because the working medium is isolated from the baths during these strokes, the evolution is unitary and no heat is exchanged (i.e. $\Delta Q_{\rm comp} = \Delta Q_{\rm exp} = 0$) with the environment. In that thermodynamic sense the stroke is \emph{adiabatic}.

\subsubsection{Thermalization strokes}~\label{subsubsec:thermalisation_strokes} 

During the hot and cold thermalization strokes, the energy gap of the working qubit is kept constant, having values $\Omega$ and $\omega$, respectively. The qubit interacts with one heat bath at a time, such that the effective evolution of the subsystem $\rho_S(t)$ is described by the Markovian master equation, whose solution admits the following form~\cite{rivas2012open,breuer2002theory} 
\begin{equation}
    \rho_S(t) = e^{(t-t_i)\mathcal{L}_{\rm x}}\rho_S(t_i),
\end{equation}
where $t_i \in \{t_1, t_3 \}$, $\mathrm{x}\in \{\rm h, c\}$, and the Lindblad superoperator $\mathcal{L}_{\rm x}$ which generates the effective dynamics is given by~\cite{weber2023thermodynamiccostspuredephasing}
\begin{align}
    \mathcal{L}_{\rm x}[\rho_S] = \gamma_{\rm x} n_{\rm x} \left(\sigma^+_{\rm x}\rho_S\sigma^-_{\rm x}-\frac{1}{2}\{\sigma_{\rm x}^-\sigma_{\rm x}^+,\rho_S\}\right)\nonumber \\
    +\gamma_{\rm x} (n_{\rm x}+1)\left(\sigma^-_{\rm x}\rho_S\sigma^+_{\rm x}-\frac{1}{2}\{\sigma_{\rm x}^+\sigma_{\rm x}^-,\rho_S\}\right), \label{eq:thermalization_Lindbladian}
\end{align}
with $\sigma^{+}_{\rm x} := \vert 0_{\rm x}\rangle \langle 1_{\rm x}\vert$ and $\sigma_{\rm x}^-=(\sigma_{\rm x}^+)^{\dagger} = \vert 1_{\rm x}\rangle \langle 0_{\rm x}\vert$ being raising and lowering operators relative to the eigenbasis $\{\vert 0_{\rm x} \rangle, \vert 1_{\rm x} \rangle\}$ of $H_{\rm x}$. Above, $n_{\rm x} = 1/(e^{\varepsilon_{\rm x}\beta_{\rm x}}-1)$ is the mean thermal photon number characterizing each thermal bath, where $\varepsilon_{\rm h}= \Omega$ and $\varepsilon_{\rm c}= \omega$, and $\gamma_{\rm x}$ is the dissipative coupling rate between the bath and the system. Note that $$\rho_{\beta_{\rm x}}^{\rm th} = \frac{e^{-\beta_{\rm x}H_{\rm x}}}{\mathrm{Tr}[e^{-\beta_{\rm x}H_{\rm x}}]}$$ is a steady state for $\mathcal{L}_{\rm x}$. According to this coupling, the system relaxes exponentially fast towards that steady state. 

During the heat strokes the working medium interacts dissipatively with a thermal reservoir while the system Hamiltonian is held fixed. We therefore treat these strokes as \emph{isochoric}: the Hamiltonian does not change in time, so the work contribution vanishes in Eq.~\eqref{eq:work}, leading to the energy exchange during hot, 
\begin{equation}\label{eq:hot_heat_stroke}
    \Delta Q(t_2,t_1) = \mathrm{Tr}\Bigr[H_{\rm h}\bigr(\rho_S(t_2)-\rho_S(t_1)\bigr)\Bigr] \equiv \Delta Q_{\rm hot},
\end{equation}
and cold, 
\begin{equation}\label{eq:cold_heat_stroke}
    \Delta Q(t_4,t_3)=\mathrm{Tr}\Bigr[H_{\rm c}\bigr(\rho_S(t_4)-\rho_S(t_3)\bigr)\Bigr] \equiv \Delta Q_{\rm cold},
\end{equation}
thermalization entirely accounted for as heat.

To conclude, we describe our sign convention: When we \emph{extract work} from the system we have $\Delta W<0$, and therefore when we \emph{input work} on the system we have $\Delta W>0$. With this sign convention, during the compression stroke we \emph{extract work}. This may appear counterintuitive because, in many  conventions, compression requires work input. The difference here comes from the symmetric spectrum $\left\{\pm \varepsilon (t)/2 \right\}$ of the driven Hamiltonian. As the gap increases, the excited level moves upward while the ground level moves downward (see Fig.~\ref{fig:workstrokes}). Since the state is initially cold and therefore mostly supported on the ground state, the internal energy of the working medium decreases during compression, which in our convention corresponds to extracted work.~\footnote{If one considers an asymmetric description of the Hamiltonian where the ground state is taken to have energy $\varepsilon_1 = 0$ with the excited state having energy $\varepsilon_0 = \varepsilon(t)$ the Hamiltonian becomes $H'(t) = \varepsilon(t) \vert 0_t \rangle \langle 0_t \vert$ and one extracts work during the expansion stroke instead.} 

\subsubsection{Total work and total heat}

Following from the two strokes, we have a total change in work given by
\begin{align}\label{eq:total_work_implicit}
    \Delta W_{\rm tot} &= \Delta W(t_1,t_0)+\Delta W(t_3,t_2),
\end{align}
and therefore, using Eqs.~\eqref{eq:work_compression_stroke} and~\eqref{eq:work_expansion_stroke}, 
\begin{align}\label{eq:total_work_explicit}
    \Delta W_{\rm tot} =&+ \langle H_{\rm h}\rangle_{\rho_S(t_1)}-\langle H_{\rm c}\rangle_{\rho_S(0)}\nonumber \\&+\langle H_{\rm c}\rangle_{\rho_S(t_3)}-\langle H_{\rm h}\rangle_{\rho_S(t_2)}.
\end{align}
Moreover, we have two isochoric heat thermalization strokes yielding a total heat exchange of 
\begin{equation}
    \Delta Q_{\rm tot} = \Delta Q(t_2,t_1)+\Delta Q(t_4,t_3).
\end{equation}
Using Eqs.~\eqref{eq:hot_heat_stroke} and~\eqref{eq:cold_heat_stroke} we conclude that
\begin{align}
    \Delta Q_{\rm tot} = &+\langle H_{\rm h} \rangle_{\rho_S(t_2)} -\langle H_{\rm h} \rangle_{\rho_S(t_1)} \nonumber\\
    &+\langle H_{\rm c} \rangle_{\rho_S(t_4)} -\langle H_{\rm c} \rangle_{\rho_S(t_3)}.
\end{align}

\subsection{Efficiency and power of the Otto cycle}\label{sec:efficiency_power_and_finite_time}

The \emph{power} of  an engine running a cyclic protocol of duration $\tau$ is quantified by the ratio of total output work to the cycle duration, i.e.
\begin{equation}\label{eq:power_cyclic_QHE}
    P_{\rm tot} = \frac{-\Delta W_{\rm tot}}{\tau},
\end{equation}
and the \emph{efficiency} of the engine is given by the ratio of the net work extracted to the heat absorbed during the hot thermalization stroke (which can be viewed as the amount of average heat that flows from the hot bath to the working medium that can later be extracted as work). Succinctly, 
\begin{equation}\label{eq:efficiency_cyclic_QHE}
    \eta = \frac{-\Delta W_{\rm tot}}{\Delta Q_{\rm hot}}.
\end{equation}
Recall that by our convention, we extract work when $\Delta W < 0$, therefore, we have included a minus sign for having power $P_{\rm tot}>0$ and $\eta > 0$ whenever we extract more work than we inject (assuming that $\Delta Q_{\rm hot} > 0$).

\subsubsection{Ideal Otto  efficiency}

Let us assume that the working medium fully thermalizes during the isochoric strokes, reaching the thermal states 
\begin{equation}
    \rho_{S}(0) = \frac{e^{-\beta_{\rm c}H_{\rm c}}}{\mathrm{Tr}[e^{-\beta_{\rm c}H_{\rm c}}]} \quad \text{and}\quad \rho_{S}(t_2) = \frac{e^{-\beta_{\rm h}H_{\rm h}}}{\mathrm{Tr}[e^{-\beta_{\rm h}H_{\rm h}}]}.
\end{equation}
\emph{Perfect} thermalization in our model occurs only in the limit $\tau_{\rm hot}, \tau_{\rm cold} \to \infty$. Let us also assume that during the adiabatic work strokes the energy gap changes quasistatically such that no coherences or excitations are created. In this case, it is straightforward to show that 
\begin{equation}\label{eq:ideal_otto_compression}
    \Delta W(t_1,0) = -\frac{\Omega-\omega}{2}\,\tanh\left(\frac{\omega}{2T_{\rm c}}\right)
\end{equation}
and
\begin{equation}\label{eq:ideal_otto_expansion}
    \Delta W(t_3,t_2) = \frac{\Omega-\omega}{2}\,\tanh\left(\frac{\Omega}{2 T_{\rm h}}\right). 
\end{equation}
From Eqs.~\eqref{eq:ideal_otto_compression} and~\eqref{eq:ideal_otto_expansion} we find the total extracted work to be
\begin{equation}
    \Delta W_{\rm tot} = \frac{\Omega-\omega}{2}\left(\tanh\left(\frac{\Omega}{2 \, T_{\rm h}}\right)-\tanh\left(\frac{\omega}{2 \, T_{\rm c}}\right)\right).
\end{equation}

The total heat exchange during the hot thermalization stroke is given by
\begin{equation}
    \Delta Q_{\rm hot} = \frac{\Omega }{2}\left(\tanh\left(\frac{\omega}{2 \, T_{\rm c}}\right)-\tanh\left(\frac{\Omega}{2 \, T_{\rm h}}\right)\right),
\end{equation}
and for the cold thermalization stroke is 
\begin{align}
\Delta Q_{\rm cold} &= -\frac{\omega}{2}\left(\tanh\left(\frac{\omega}{2 \, T_{\rm c}}\right)-\tanh\left(\frac{\Omega}{2 \, T_{\rm h}}\right)\right)\\&=-\frac{\omega}{\Omega}\Delta Q_{\rm hot}.
\end{align}
From that, we can calculate the (ideal) Otto efficiency to be
\begin{equation}\label{eq:Otto_eff_def}
    \eta_{\mathrm{Otto}} = 1-\frac{\omega}{\Omega}. 
\end{equation}
The Otto efficiency is the optimal efficiency for the engine we have described. 

Assuming that $\Delta W_{\rm tot} < 0$ (we extract work), which is obtained whenever 
\begin{align}
    \tanh\left(\frac{\Omega}{2T_{\rm h}}\right)<\tanh\left(\frac{\omega}{2T_{\rm c}}\right) \iff \frac{\omega}{\Omega}>\frac{T_{\rm c}}{T_{\rm h}},
\end{align}
we can show that the Otto efficiency is upper bounded by Carnot's efficiency, since  
\begin{equation}\label{eq:optimal_eff}
    \eta_{\mathrm{Otto}} = 1-\frac{\omega}{\Omega} \leq 1-\frac{T_{\rm c}}{T_{\rm h}} = \eta_{\rm Carnot}.
\end{equation}

Above, we have assumed that the drive is implemented quasistatically, which then implies that $\tau_{\rm comp}, \tau_{\rm exp} \to \infty$. Note that, in this case, while the efficiency is optimal, the power captured by Eq.~\eqref{eq:power_cyclic_QHE} goes to zero. In what follows we will start by focusing on the imperfect (finite-time) drive during the work strokes. We will later relax the assumption of perfect thermalization (see also Sec.~\ref{subsec:imperfect_therm}).  

\subsubsection{Effect of finite-time drive on efficiency and power}

An ideal work stroke should change the energy gap without generating additional entropy or redistributing populations in the instantaneous energy basis. In the present setting this means that the dynamics should remain unitary and, at the same time, avoid creating coherences in the instantaneous eigenbasis of the driven Hamiltonian. For example, during compression we would ideally realize
\begin{equation}\label{eq:condition_goal_1}
    p_0 \vert 0 \rangle \langle 0 \vert + p_1 \vert 1 \rangle \langle 1 \vert \mapsto p_0\vert 0_{\rm h} \rangle \langle 0_{\rm h} \vert + p_1 \vert 1_{\rm h} \rangle \langle 1_{\rm h} \vert,
\end{equation}
and analogously during expansion. Finite-time driving generally violates this condition because the Hamiltonian changes too quickly for the populations to follow the instantaneous eigenbasis without transitions. 

More explicitly, whenever the drive is not quasistatic, the coupling $\langle 0_t\vert \dot 1_t \rangle =\sfrac{\dot \theta_t}{2}$ is not approximately zero and the work output is significantly reduced by coherence-generated friction. We now derive the expression for this reduction. Starting from Eq.~\eqref{eq:work} and using the parametrization introduced above, one finds the work
\begin{align}
     \Delta W(t_1,0) &= \int_0^{t_1} \frac{\dot{\varepsilon} (t)}{2}(\,\rho_{0_t0_t}-\rho_{1_t1_t})\mathrm dt  \nonumber \\&-\int_0^{t_1}  \frac{\omega \Omega_0 \mathfrak{Re}\!\big[\rho_{0_t1_t}]} {\sqrt{(\omega \tau_{\rm comp})^2 + (\Omega_0 t)^2} }  \mathrm dt \label{eq:compression_non_quasi_static}.
\end{align}
Above, $\mathfrak{Re}[x+\mathrm{i}y]=x$ and $\rho_{i_tj_t} = \langle i_t \vert \rho_S(t)\vert j_t \rangle$, where $i,j\in \{0,1\}$. The second integral in that expression isolates the contribution
\begin{equation}
    \delta W_{\rm comp}^{\rm (fric)}(t) = -\frac{\omega \Omega_0 \mathfrak{Re}\!\big[\rho_{0_t1_t}]} {\sqrt{(\omega \tau_{\rm comp})^2 + (\Omega_0 t)^2} }
\end{equation}
associated with the coherences generated in the instantaneous energy basis. For the expansion stroke, the analogous coherence-contribution term is
\begin{equation}
    \delta W_{\rm exp}^{\rm (fric)} = \frac{\omega \Omega_0 \mathfrak{Re}\left[ \rho_{ 0_t  1_t} \right]}{\sqrt{(\omega \tau_{\rm exp})^2 + (\Omega_0 (\tau_{\rm exp} - t + t_2))^2}},
\end{equation}
where $\rho_{ 0_t 1_t} = \langle  0_t \vert \rho_{S}(t)\vert  1_t\rangle$. The derivation of these can be found in Appendix~\ref{app:friction_terms_work}.

\subsubsection{Effect of finite-time thermalization on efficiency and power}

Finite-time thermalization can also limit performance. For \emph{endoreversible}~\footnote{Endoreversible engines are composed of subsystems that interact reversibly internally, while exchanging energy irreversibly with their surroundings.} classical heat engines~\cite{hoffmann2005Endoreversible}, finite thermal conductance prevents the working medium from equilibrating perfectly with the reservoirs, which leads to well-known bounds such as the Curzon--Ahlborn efficiency~\cite{curzon1975efficiency} 
\begin{equation}
    \eta_{\rm max} \leq \eta_{\rm CA}=1-\sqrt{\frac{T_{\rm c}}{T_{\rm h}}},
\end{equation}
where $\eta_{\rm max}$ is the efficiency of the engine when running at maximum power. In the quantum setting studied here, the same basic issue appears: If the thermalization strokes are too short, the working medium fails to approach the relevant Gibbs state closely enough, and both efficiency and power can be affected.

The effective master equations we consider for the thermalization strokes (Eq.~\eqref{eq:thermalization_Lindbladian}) are derived under the usual approximations~\cite{breuer2002theory,rivas2012open} (Born--Markov, weak coupling, rotating wave approximation, etc.). Under these assumptions, the reduced system relaxes exponentially fast to the Gibbs state of the Hamiltonian (either relative to $H_{\rm c}$ during the cold stroke or $H_{\rm h}$ during the hot stroke), and the Gibbs state is the unique steady state. The perfect thermalization holds only asymptotically, but the distance to the Gibbs state typically decays exponentially with the spectral gap of the generator.

As a first step, we will focus on losses generated during the work strokes, because for the parameter regime emphasized later in Sec.~\ref{sec:Zeno_assisted_quantum_heat_engine} the dominant finite-time effect is the coherence generated by fast driving. This simplification is meant only to isolate the lubrication mechanism more clearly; it should not be read as a general claim that imperfect thermalization is always negligible. We return to the thermalization bottleneck in Sec.~\ref{sec:thermodynamic_costs} and analyze explicitly when it becomes the dominant limitation.

\subsection{Manifestations of the quantum Zeno dynamics}\label{sec:different_manifestations_zeno}

The quantum Zeno effect~\cite{misra1977zeno,greenfield_unified_2025, vonNeumann1932mathematische,facchi2008quantum} is commonly characterized by the total inhibition of dynamical evolution due to frequent monitoring via strong measurements. Fundamentally, it presents an intrinsic tension between the two possible evolutions predicted by quantum theory: while the system attempts to evolve unitarily, thus changing its state, a constant monitoring of the system is described via the measurement postulate which, then, with high probability, avoids any evolution, freezing the system to a single state. 

In the quantum Zeno \emph{dynamics}~\cite{machida1999reflection,facchi2000quantumzeno,facchi2008quantum,hacohen2018incoherent,burgarth2014exponential,signoles2014confined,bretheau2015quantum}, instead of freezing the system, its implementation can dynamically change the state of the system when restricted to a degenerate subspace defined by the measurement apparatus. Such subspaces are known as \emph{Zeno subspaces}~\cite{facchi2002zeno}. 

In what follows, it will be important for our lubrication protocol to consider two different manifestations of the QZD: the standard one that follows from frequent projective measurements and a continuous version arising from strong coupling between two systems.

\subsubsection{Frequent projective measurements}

Let $\{P_\ell\}_{\ell}$ be a projection-valued measure (PVM), and denote $$\mathcal{H}_{P_\ell} := \text{Range}(P_\ell) = \left\{P_\ell\vert \psi \rangle \mid \vert \psi \rangle \in \mathcal{H}\right\}$$ the relative subspace each Hermitian projector $P_\ell$ projects onto. This PVM induces a partition $\mathcal{H} = \bigoplus_\ell \mathcal{H}_{P_\ell}.$ Each subspace $\mathcal{H}_{P_\ell}$ is called a \emph{Zeno subspace}. For our purposes, we can focus on the simplest case $M=\{P_1,P_2\}$, with $P_1P_2=0$ and $\mathcal{H} = \mathcal{H}_{P_1} \oplus \mathcal{H}_{P_2}$. Suppose we have operationally the situation provided in Fig.~\ref{fig:zeno_drive}. We consider a drive $H(t)$ which generates the evolution $t \mapsto U(t)$ given by
\begin{equation}
    U(t_f,t_i) = \mathcal{T}\left[\exp \left(-i\int_{t_i}^{t_f} \mathrm{d}s \,H(s)\right)\right].
\end{equation}
Now, we let $t_i \equiv t_{(1)}$, $t_f \equiv t_{(n)}$ and divide the evolution so that $t_f-t_i = \sum_{k=1}^{n-1} \delta t_{(k)}$ where $\delta t_{(k)} = t_{(k+1)}-t_{(k)} = (t_{f}-t_i)/(n-1)\sim (t_f-t_i)/n$. We alternate between short unitary pulses $U(t_{(k+1)},t_{(k)})$ and dichotomic measurements $M$. If we assume that we start within the $P_\ell$ subspace $\mathcal{H}_{P_\ell}$, s.t. 
\begin{equation}
\rho(t_{(1)}) = P_\ell\, \rho(t_{(1)}) \,P_\ell,
\end{equation}
the effective evolution \emph{within this subspace}, in the limit of $n \to \infty$, is provided by a unitary
\begin{equation}
    U_{\rm Zeno}^{(\ell)}(t_f,t_i) = \mathcal{T}\left[\exp \left(-i\int_{t_i}^{t_f} \mathrm{d}s\, H^{(\ell)}_{\rm Zeno}(s)\right)\right]
\end{equation}
where
\begin{equation}
    H^{(\ell)}_{\rm Zeno}(t) := P_\ell \,H(t) \, P_\ell
\end{equation}
is the so-called (time-dependent) \emph{Zeno Hamiltonian}.
\begin{figure}[t]
    \centering
    \includegraphics[width=\columnwidth]{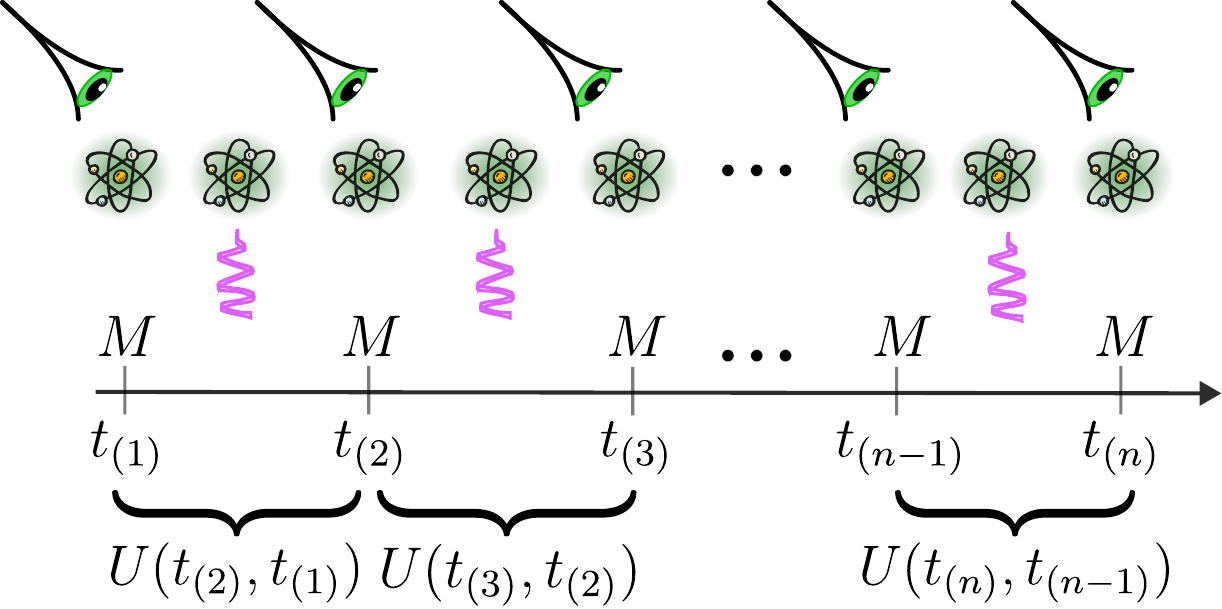}
    \caption{\textbf{ Quantum Zeno drive.} An interval $0\leq t_{(1)} \leq t \leq t_{(n)} $ is subdivided into $n-1$ intervals of duration $\delta t_{(k)} = t_{(k+1)}-t_{(k)}$, for $k = 1,\ldots,n-1$. At each instant $t_{(k)}$ a projective measurement $M$ is applied to the system. During the intervals $\delta t_{(k)}$ we drive the system according to a time-dependent Hamiltonian $H(t)$ yielding a unitary $U(t_{(k+1)},t_{(k)})$. These are sometimes referred to as \emph{unitary kicks} or \emph{pulses}~\cite{facchi2008quantum}. We thus alternate between unitary kicks of duration $\delta t_{(k)}$ and instantaneous projective measurements $M$. We refer to this as a time-dependent \emph{Zeno-drive}.}
    \label{fig:zeno_drive}
\end{figure}

\subsubsection{Strong coupling}

The effective dynamics can also be governed by a Zeno Hamiltonian in a distinct regime, where Zeno behavior emerges from \emph{strong coupling} rather than from frequent projective  measurements~\cite{schulman1998continuous,facchi2002zeno,burgarth2019generalized,burgarth2022oneboundtorulethem}. Let us consider a situation where we have a time-dependent Hamiltonian given by
\begin{equation}
    H_\Gamma(t) = \Gamma H_0(t)+H_1(t)
\end{equation}
and assume that $\Gamma \to \infty$. In this limit, the evolution is approximately given by~\cite{burgarth2022oneboundtorulethem}
\begin{equation}\label{eq:strong_coupling_effective}
    U_{\rm eff}(t) = \mathcal{T}\left[\exp\left(-i\int_0^t\mathrm{d}s \, H_{\rm eff}(s)\right)\right]
\end{equation}
generated by the effective Hamiltonian 
\begin{equation}
    H_{\rm eff}(t) := \Gamma  H_0(t)+A(t)+H_{\rm Zeno}^{(\rm str)}(t)
\end{equation}
where $A(t)$ is  the \emph{generator of the quasistatic transporter}~\cite{kato1950ontheadiabatic} defined by 
\begin{equation}\label{eq:adiabatic_transporter}
    A(t):=\frac{i}{2}\sum_\ell [\dot P_\ell^{(H_0)}(t),P_\ell^{(H_0)}(t)],
\end{equation}
$\{P_\ell^{(H_0)}(t)\}_{\ell}$ is the instantaneous eigenbasis of $H_0(t)$, and 
\begin{equation}\label{eq:strong_coupling_limit}
    H_{\rm Zeno}^{\rm (str)} (t) := \sum_\ell P_\ell^{(H_0)}(t) H_1(t)P_\ell^{(H_0)}(t)
\end{equation}
is the time-dependent Zeno Hamiltonian relative to $\{P_\ell^{(H_0)}(t)\}_\ell$. We refer to  Theorem~\ref{theorem:strong_coupling_theorem} in Appendix~\ref{app:generalization} for the precise statement of this result.

\section{Zeno-assisted quantum heat engines}\label{sec:Zeno_assisted_quantum_heat_engine}

In this section we introduce a variant of the engine from Sec.~\ref{sec:QHE_and_Otto_cycle} in which the working medium is augmented by an auxiliary system: the \emph{lubricant}. The resulting device still operates an Otto cycle, but the work strokes are now implemented through a Zeno-assisted protocol acting on the joint system of working medium and lubricant. The heat strokes remain essentially unchanged. As we show later in Sec.~\ref{sec:relationship_between_decoherence_and_zeno}, this construction also provides a useful point of contact with existing dephasing-assisted lubrication schemes.

\subsection{Lubricated device specification and preparation}

We start by engineering our device to be the system (cf.~Eq.~\eqref{eq:engines_system}) 
\begin{equation}
    \mathcal{H}_{\rm QHE} = \bigotimes_k \mathcal{F}_k\otimes \mathbbm{C}^2 \otimes \mathbbm{C}^2  \otimes \bigotimes_k \mathcal{F}_k
\end{equation}
so that the working medium $\mathcal{H}_S = \mathbbm{C}^2$ is augmented by a qubit lubricant  $\mathcal{H}_L = \mathbbm{C}^2$ (see Fig.~\ref{fig:QHE_zeno_assisted}).  As before, the two heat baths are given by infinite-dimensional bosonic thermal reservoirs where each $\mathcal{F}_k$ labels an infinite-dimensional bosonic Fock space.

We prepare the initial state of the whole system at time $t_0 = 0$ to be a product quantum state
\begin{align}
    \rho_{\rm QHE}(0) = \rho_{B_{\rm c},\beta_{\rm c}}^{\rm th}\otimes \rho_{SL}(0)\otimes \rho_{B_{\rm h},\beta_{\rm h}}^{\rm th},
\end{align}
where $\rho_{B_{\rm c},\beta_{\rm c}}^{\rm th}$ and $\rho_{B_{\rm h},\beta_{\rm h}}^{\rm th}$ are the thermal states for the cold and hot baths, respectively, and $\rho_{SL}(0)$ is given by
\begin{equation}\label{eq:initial_state_lubricated_engine}
    \rho_{SL}(0) = \bigr(p_0 \vert 0 \rangle \langle 0 \vert  + p_1\vert 1 \rangle \langle 1 \vert \bigr) \otimes  \vert + \rangle \langle + \vert,
\end{equation}
where $0\leq p_0\leq 1$ and $p_1=1-p_0$. The initial system Hamiltonian is $H_S(0) = \frac{\omega}{2}Z$ as in Eq.~\eqref{eq:initial_system_Hamiltonian}, and we let the lubricant's local Hamiltonian be time-independent and given by 
\begin{equation}\label{eq:lubricant_hamiltonian}
    H_L = \frac{\omega_L}{2}Z.
\end{equation}

\subsection{Adiabatic Zeno-assisted work strokes}

We implement the work strokes by isolating the composite system $\mathcal{H}_{SL} = \mathbbm{C}^2 \otimes \mathbbm{C}^2$ from the baths and subjecting it to a Zeno-drive described by a time-dependent Hamiltonian $H_{\rm tot}(t)$ and a local \emph{selective and non-destructive measurement} relative to the $X$ basis $\{\vert + \rangle, \vert - \rangle \}$ on the lubricant system. The total time-dependent Hamiltonian $H_{\rm tot}(t)$ is given by
\begin{equation}\label{eq:total_hamiltonian_SL}
    H_{\rm tot}(t) = H_S(t)\otimes \mathbbm{1}+\mathbbm{1}\otimes H_L+H_{SL}(t)
\end{equation}
where $H_S(t)$ is the cyclic Hamiltonian given by Eq.~\eqref{eq:Hamiltonian_S}  and $H_L$ by Eq.~\eqref{eq:lubricant_hamiltonian}. The time-dependent interaction Hamiltonian $H_{SL}(t)$ is given by
\begin{equation}\label{eq:H_SL(t)}
    H_{SL}(t) = \Gamma(t) \,R(t) \otimes X,
\end{equation}
where $R(t)$ is given by Eq.~\eqref{eq:R(t)}, and~\footnote{For simplicity, we choose interaction strength $\Gamma$ to be the same during compression and expansion strokes. However, it may be useful to consider $\Gamma_{\rm exp} \neq \Gamma_{\rm comp}$, since one of the work strokes may generate less coherence than the other and therefore require weaker lubrication.}
\begin{equation}
    \Gamma(t)=\left \{ \begin{array}{ll}
        
    \Gamma& \quad 0\, \leq t \leq  t_1 \, \cup \, t_2 \leq t \leq  t_3 \\
    0& \quad  \rm otherwise.  \end{array}\right. 
\end{equation}

\begin{figure}[t]
    \centering
    \includegraphics[width=\columnwidth]{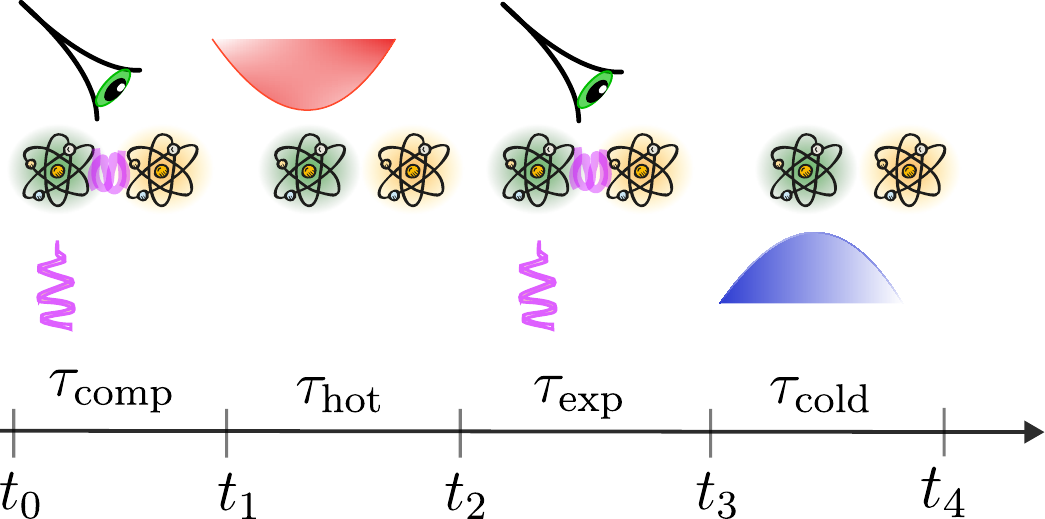}
    \caption{\textbf{Four-stroke Zeno-assisted quantum heat engine.} The engine shown in Fig.~\ref{fig:QHE} is modified by adding an auxiliary single-qubit system $\mathcal{H}_L=\mathbbm{C}^2$ to the working medium, referred to as the quantum lubricant. The engine is operated according to a four-stroke Otto cycle in which the unitary strokes are replaced by those generated by a Zeno-drive: strong time-dependent coupling between system and lubricant, together with frequent monitoring of the lubricant's state. The heat strokes remain unchanged, with only the system $\mathcal{H}_S$ brought into contact with the baths.}\label{fig:QHE_zeno_assisted}
\end{figure}

We now proceed to show how the QZD can lead to an effective shortcut to adiabaticity. Our full Zeno-assisted lubrication scheme will combine two manifestations of the QZD. The first ingredient is a strong-coupling regime,  $\Gamma \to \infty$ between the working medium and the lubricant. The second is a sequence of frequent selective measurements of the lubricant in the $X$ basis. The key point is that the strong coupling generates an effective Hamiltonian containing the quasistatic transporter, while the monitoring confines the dynamics to a suitable Zeno subspace. Throughout,  the \emph{ideal Zeno regime}, or equivalently the \emph{Zeno limit},  refers to the asymptotic parameter region where the monitoring interval is short compared with all intrinsic dynamical timescales and $\Gamma$ is the dominant energy scale.

We remark that another approach is to drive the system using frequent time-dependent (adaptive) non-destructive measurements $P(t)$ on the system~\cite{barontini2025quantumzeno}, an alternative that we will refer to as \emph{Zeno-dragging}. We shall comment on this alternative later on in Sec.~\ref{sec:discussion_and_outlook}.

Introducing both strong coupling and frequent measurements substantially changes the thermodynamic setting of the engine. In particular, the monitoring can generate fluctuations and entropy production, and the strong interaction can modify the energetics of work extraction. These issues are deferred to  Sec.~\ref{sec:thermodynamic_costs}. For now, we focus on the control-theoretic question of how the Zeno mechanism yields a shortcut to adiabaticity for the working medium.

\subsubsection{Strong coupling between working medium and lubricant}

Let us start with the compression stroke by calculating the effective strong-coupling unitary as given by Eq.~\eqref{eq:strong_coupling_effective}. Note that $R(t) \otimes X$ has an instantaneous eigendecomposition given by the product of those from $R(t)$ and $X$, thus we can write
\begin{align}
    P_+(t) &= \vert 0_t \rangle \langle 0_t \vert \otimes \vert+ \rangle \langle + \vert + \vert 1_t\rangle \langle 1_t \vert \otimes \vert - \rangle \langle - \vert \nonumber \\
    &\equiv  P_{0}(t) \otimes P_+ + P_{1}(t)\otimes P_-, \label{eq:P_+_time_dependent}\\    
    P_-(t) &=\vert 0_t \rangle \langle 0_t \vert \otimes \vert - \rangle \langle - \vert + \vert 1_t \rangle \langle 1_t \vert \otimes \vert + \rangle \langle + \vert \nonumber \\ 
    &\equiv P_0(t)\otimes P_-+P_1(t)\otimes P_+, \label{eq:P_-_time_dependent}
\end{align}
where $\vert 0_t \rangle$ and $\vert 1_t \rangle$ are the instantaneous eigenstates of $R(t)$ given by Eq.~\eqref{eq:instantaneous_states}.  The transporter $A(t)$ from Eq.~\eqref{eq:adiabatic_transporter} in this case becomes
\begin{align}
    A(t) &= \frac{i}{2}\big([\dot P_+(t),P_+(t)]+[\dot P_-(t),P_-(t)]\big)\nonumber \\
        &=\frac{i}{2}\big([\dot P_+(t),P_+(t)]+[-\dot P_+(t),\mathbbm{1}-P_+(t)]\big) \nonumber \\
        &=i\big[\dot P_+(t),P_+(t)\big], \label{eq:intermediate_At}
\end{align}
where we have used that $$\dot P_-(t) = \frac{\mathrm{d}}{\mathrm{d}t}(\mathbbm{1}-P_+(t))=-\dot P_+(t).$$ We can proceed by further simplifying $A(t)$ as given by Eq.~\eqref{eq:intermediate_At} using the explicit decomposition for $P_+(t)$ from Eq.~\eqref{eq:P_+_time_dependent}:
\begin{align}
    A(t) &= i \big[\dot P_+(t),P_+(t) \big] \nonumber\\
    &=i\big[\dot P_0 (t)\otimes \vert + \rangle \langle + \vert,P_0(t) \otimes \vert + \rangle \langle + \vert \big]\nonumber\\
    &\hspace{1cm}-i\big[\dot P_0(t) \otimes \vert - \rangle \langle - \vert,P_1(t)\otimes \vert - \rangle \langle - \vert \big]\nonumber\\
    &=i\big[\dot P_0(t),P_0(t)\big] \otimes \vert + \rangle \langle + \vert \nonumber\\
    &\hspace{1cm}- i\big[\dot P_0(t),\mathbbm{1}-P_0(t)\big] \otimes \vert - \rangle \langle - \vert \nonumber\\
    &=i\big[\dot P_0(t),P_0(t)\big]\otimes \big(\vert + \rangle \langle + \vert + \vert - \rangle \langle - \vert \big).
\end{align}
Therefore, we conclude that 
\begin{equation}\label{eq:AS_comp}
    A(t) = i \left[\frac{\mathrm{d}}{\mathrm{d}t} \Big(\vert 0_t \rangle \langle 0_t \vert \Big),\vert0_t \rangle \langle 0_t \vert\right] \otimes \mathbbm{1} \equiv A_S(t) \otimes \mathbbm{1}.
\end{equation}
Recall that 
\begin{equation}
    \vert \dot 0_t \rangle = -\frac{\dot \theta_t}{2}\sin(\theta_t/2)\vert 0 \rangle + \frac{\dot \theta_t}{2} \cos(\theta_t/2)\vert 1 \rangle =-\frac{\dot \theta_t}{2}\vert 1_t \rangle,
\end{equation}
(similarly $\vert \dot 1_t \rangle = \sfrac{\dot \theta_t}{2} \vert 0_t \rangle $), and that 
\begin{align}
    \frac{\mathrm d}{\mathrm dt}\big(\vert 0_t \rangle \langle 0_t \vert\big) &= \vert \dot 0_t \rangle \langle 0_t \vert + \vert 0_t \rangle \langle \dot 0_t \vert \\
    & =-\frac{\dot \theta_t}{2}\big(\vert 1_t \rangle \langle 0_t \vert + \vert 0_t \rangle \langle 1_t \vert \big).
\end{align}
Using this, $A_S(t)$ in Eq.~\eqref{eq:AS_comp} is
\begin{align}
    \frac{2}{i\dot \theta_t}A_S(t) &= -\big(\vert 1_t \rangle \langle 0_t \vert + \vert 0_t \rangle \langle 1_t \vert \big) \vert 0_t \rangle \langle 0_t \vert \nonumber\\ &\hspace{0.4cm}+\vert 0_t \rangle \langle 0_t \vert \big (\vert 1_t \rangle \langle 0_t \vert + \vert 0_t \rangle \langle 1_t \vert \big)  \\
    &=-\vert 1_t \rangle \langle 0_t \vert + \vert 0_t \rangle \langle 1_t \vert,
\end{align}
and thus
\begin{equation}
    A_S(t) = i\frac{\dot \theta_t}{2}\big(\vert 0_t \rangle \langle 1_t \vert - \vert 1_t \rangle \langle 0_t\vert \big).
\end{equation}

The Zeno Hamiltonian term coming from a strong-coupling limit is now given by (cf.~Eq.~\eqref{eq:strong_coupling_limit})
\begin{align}
    H_{\rm Zeno}^{\rm (str)}(t) &= \sum_{\ell \in \{+,-\}} P_\ell(t)\big(H_S(t) \otimes \mathbbm{1} + \mathbbm{1}\otimes H_L\big)P_\ell(t) \label{eq:strong Zeno Hamiltonian} \\
    &= H_S(t)\otimes \mathbbm{1}. \label{eq:strong Zeno Hamiltonian2}
\end{align}
This calculation can be split into two terms, as $\sum_{\ell \in \{+,-\}} P_\ell(t)\big(H_S(t) \otimes \mathbbm{1} \big) P_\ell(t)$ and $\sum_{\ell \in \{+,-\}} P_\ell(t) \big(\mathbbm{1}\otimes H_L \big)P_\ell(t)$. Let us calculate the first contribution. Using Eqs.~\eqref{eq:P_+_time_dependent} and~\eqref{eq:P_-_time_dependent} we get 
\begin{widetext}
\begin{equation}\label{eq:equation_to_be_referenced_later}
    \begin{split}
        \sum_{\ell \in \{+,-\}} P_\ell(t)H_S(t) \otimes \mathbbm{1} P_\ell(t) &= \Big( P_0(t) \otimes P_+ + P_1(t) \otimes P_- \Big) \Big( H_S(t) \otimes \mathbbm{1} \Big) \Big( P_0(t) \otimes P_+ + P_1(t) \otimes P_- \Big) \\
    &\;\;\;\;\;+ \Big( P_0(t) \otimes P_- + P_1(t) \otimes P_+ \Big) \Big( H_S(t) \otimes \mathbbm{1} \Big) \Big( P_0(t) \otimes P_- + P_1(t) \otimes P_+ \Big).
    \end{split}
\end{equation}
After multiplying all terms, we obtain
\begin{align}
    &\stackrel{\eqref{eq:equation_to_be_referenced_later}}{=}\Bigr(P_0(t)H_S(t)P_0(t) + P_1(t)H_S(t)P_1(t) \Bigr)\otimes P_+ + \Bigr(P_0(t)H_S(t)P_0(t) + P_1(t)H_S(t)P_1(t) \Bigr)\otimes P_- \nonumber \\
    &=H_S(t) \otimes \bigr( P_+ + P_- \bigr) = H_S(t) \otimes \mathbbm{1}.
\end{align}
\end{widetext}
Above, we have used that $P_+P_- = P_-P_+ = 0$, and similarly $P_0(t)P_1(t) = P_1(t)P_0(t)=0$ for all $0\leq t < t_1$ and $t_2 \leq t < t_3$. Between the second and third rows we have also used $P_{\pm}^2=P_{\pm}$, $P_+ + P_- = \mathbbm{1}$, and
\begin{align*}
    H_S(t) &= P_0(t) H_S(t) P_0(t) +  P_1(t) H_S(t) P_1(t).
\end{align*}  
Thus, we can rewrite
\begin{equation}
    \begin{split}
        &\sum_{\ell \in \{+,-\}} P_\ell(t) \big(H_S(t) \otimes \mathbbm{1}\big) P_\ell(t) \\
        &\;\;\;\;\;\;= P_0(t)H_S(t)P_0(t) \otimes \mathbbm{1} + P_1(t)H_S(t)P_1(t) \otimes \mathbbm{1} \\
        &\;\;\;\;\;\;= H_S(t) \otimes \mathbbm{1}.
    \end{split} 
\end{equation}
The second part of Eq.~\eqref{eq:strong Zeno Hamiltonian} is calculated similarly, yielding
\begin{equation}
    \sum_{\ell \in \{+,-\}} P_\ell(t) \big( \mathbbm{1} \otimes H_L \big) P_\ell(t) = 0,
\end{equation}
since the projectors on the lubricant are onto the $X$-basis while its local Hamiltonian is given by the third Pauli matrix $Z$. Thus the validity of Eq.~\eqref{eq:strong Zeno Hamiltonian2} is proved. Putting it all together, the effective Hamiltonian evolution in the strong coupling limit becomes
\begin{equation}\label{eq:strong_FINAL_approx}
    H_{\rm eff}(t) = \Gamma (t) R(t) \otimes X + \big( A_S(t)+H_S(t) \big) \otimes \mathbbm{1}.
\end{equation}
In Appendix~\ref{app:generalization} we discuss a generalization of this shortcut to adiabaticity induced by Zeno driving to any time-dependent Hamiltonian drive  $H_S(t)$ of any finite-dimensional quantum system $\mathcal{H}_S$.

It is important to emphasize that, although the evolution generated by $H_{\mathrm{eff}}(t)$ can be made arbitrarily close to the evolution generated by $H_{\rm tot}(t)$ in the strong-coupling limit, the Hamiltonians themselves need not be close in operator norm. In fact, for this case we see that
\begin{equation}
    \Vert H_{\rm tot}(t) - H_{\rm eff}(t) \Vert = \left \Vert \frac{\omega_L}{2} \mathbbm{1} \otimes Z - A_S(t) \otimes \mathbbm{1}\right\Vert 
\end{equation}
which is independent of $\Gamma$. As explained in detail in Ref.~\cite{burgarth2022oneboundtorulethem}, closeness of propagators does not imply closeness of generators. Thermodynamically, this matters because one \emph{cannot} simply replace the physical Hamiltonian $H_{\mathrm{tot}}(t)$ by the effective Hamiltonian $H_{\mathrm{eff}}(t)$ when evaluating work and heat.

It is worth noticing that Eq.~\eqref{eq:strong_FINAL_approx} is sufficient for lubrication (see Fig.~\ref{fig:instantaneous_full}(a)). In the strong-coupling regime, the effective dynamics suppresses the coherences that would otherwise be generated in the instantaneous energy basis of the working medium.   If the joint system is initialized in the product state $\rho_{SL}(0)$ given by Eq.~\eqref{eq:initial_state_lubricated_engine}, then evolving with the effective Hamiltonian produces the transitionless behavior required of the working medium, even before the measurement step is added. We will later compare this strong-coupling-only regime with the full Zeno-driven protocol, and in Sec.~\ref{sec:thermodynamic_costs} we will show that the two differ sharply once switching work costs are taken into account. To counter that, we combine a strong-coupling manifestation of the Zeno effect with the standard frequent monitoring which will yield an effective Zeno Hamiltonian governing the drive within a certain Zeno subspace as we now discuss.

\subsubsection{Frequently monitoring the lubricant's state}

We now add frequent selective measurements of the lubricant in the $X$ basis, described by the projectors $\mathbbm{1}\otimes \vert \ell \rangle \langle \ell \vert$ on $\mathcal{H}_{SL}$ with $\ell \in \{+,-\}$. Here \emph{selective} means that, after each measurement, the state is updated according to the actual outcome obtained in that run. We do \emph{not} postselect on a preferred sequence of outcomes or discard runs that leave a chosen subspace. Rather, each measurement record defines a trajectory, and the Zeno limit makes trajectories that remain in the same subspace overwhelmingly likely.

In the idealized protocol we understand the Zeno-assisted limit sequentially: first the strong-coupling limit $\Gamma\to\infty$ is taken at fixed stroke duration, yielding the effective Hamiltonian containing the quasistatic transporter; the frequent-monitoring limit is then taken within this effective description to confine the lubricant to a fixed $X$-measurement sector (we also note in App.~\ref{app:commutativity} that in some situations one cannot swap the order of these procedures).

The measurement $\{\mathbbm{1} \otimes \vert \ell \rangle \langle \ell \vert\}_{\ell \in \{+,-\}}$ induces a partition $\mathcal{H}_{SL} = \mathcal{H}_S \otimes \mathcal{H}_{\vert + \rangle }\oplus\mathcal{H}_S \otimes \mathcal{H}_{\vert - \rangle}$, where $\mathcal{H}_{\vert \ell \rangle} = \mathrm{span}(\{\vert \ell \rangle \})$. This leads to an evolution guided by the following Zeno Hamiltonian
\begin{equation}
    H_{\rm Zeno}^{(\ell)}(t) = \mathbbm{1} \otimes \vert \ell \rangle \langle \ell \vert H_{\rm eff}(t)  \mathbbm{1} \otimes \vert \ell \rangle \langle \ell \vert
\end{equation}
within the subspace $\mathbbm{1} \otimes \vert \ell \rangle \langle \ell \vert$, where $H_{\rm eff}(t)$ is given by Eq.~\eqref{eq:strong_FINAL_approx}. This, in turn, leads to
\begin{equation}\label{eq:Zeno_Hamiltonian}
    H_{\rm Zeno}^{(\ell)}(t) = \ell\, \Gamma (t)  R(t)\otimes \vert \ell \rangle \langle \ell \vert +\big( A_S(t)+H_S(t) \big) \otimes \vert \ell \rangle \langle \ell \vert. 
\end{equation}
Therefore, we conclude that within the $\mathcal{H}_S \otimes  \mathcal{H}_{\vert \ell \rangle}$ subspaces, the \emph{composite} system and lubricant evolve according to the unitary evolution \begin{equation}
U_{\rm Zeno}^{(\ell)}(t) = \mathcal{T}\left[\exp\left(-i\int H_{\rm Zeno}^{(\ell)}(s)\, \mathrm{d}s\right)\right].
\end{equation}
Within a fixed Zeno subspace---e.g., $\mathcal{H}_S \otimes \mathcal{H}_{\vert + \rangle }$---the effective evolution of the working medium is that of a transitionless drive generated by $H_{\mathrm{S}}(t)+A_{\mathrm{S}}(t)$. In the ideal Zeno limit, and for the dominant trajectory that remains in the fixed Zeno subspace throughout, the lubricant returns to its initial state and merely mediates the control of the working medium. In that limited sense it behaves as a reusable auxiliary system. More precisely, starting from $\rho_{SL}(0)=\rho_S(0)\otimes\vert +\rangle\langle + \vert $, the idealized Zeno protocol yields $\rho_{SL}(t_1)=\rho_S(t_1)\otimes|+\rangle\langle+|$.
\begin{figure*}[t]
    \centering
    \includegraphics[width=1\textwidth]{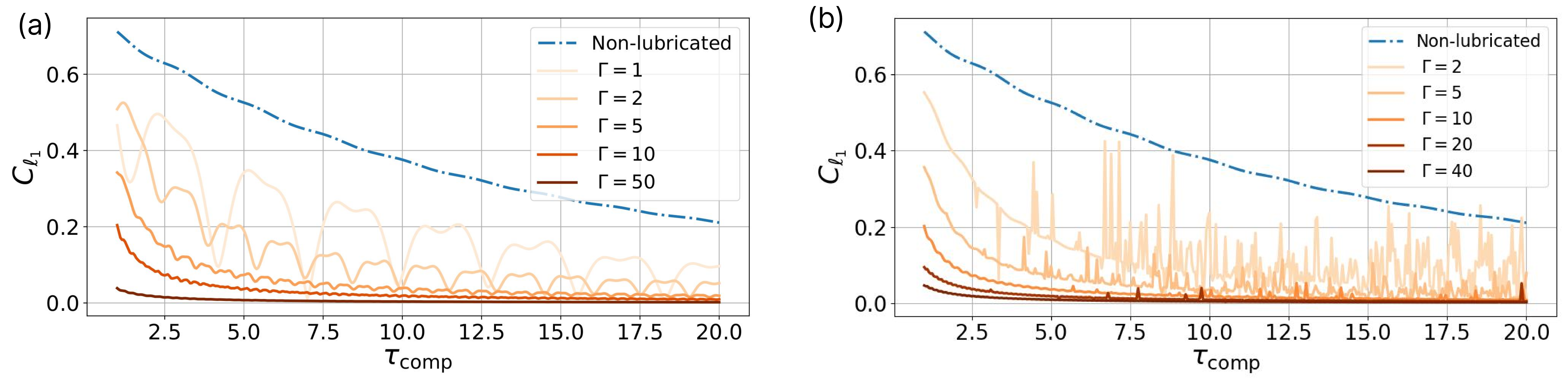}
    \caption{\textbf{Instantaneous coherence in the working system after strong-coupling and Zeno-assisted drives.} (a) Coherence quantifier $C_{\ell_1}(\rho_S(t_1)\vert H_S(t_1))$ of the final state $\rho_S(t_1)$ at the end of the compression stroke, relative to the final Hamiltonian $H_S(t_1)=H_{\rm h}$, as a function of the compression-stroke duration $\tau_{\rm comp}$. The curve is obtained from simulated values sampled every 0.05 in $\tau_{\rm comp}$. The total evolution of the system and the lubricant is generated unitarily by $H_{\rm tot}(t)$ in Eq.~\eqref{eq:total_hamiltonian_SL}; the coupling strength is varied by changing $\Gamma$ from 0 (the non-lubricated case) to 50, approaching the strong-coupling limit. (b) Same coherence quantifier for a finite-time Zeno drive. The curve is obtained from a single stochastic trajectory, sampled every 0.05 in $\tau_{\rm comp}$. Because the Zeno drive is based on selective measurements, different measurement records lead to different realizations, so the dynamics are inherently trajectory dependent. Parameters: $\omega=\omega_L=1$, $\Omega_0=5$, $T_{\rm c}=0.5$, and $n=100$.} 
    \label{fig:instantaneous_full}
\end{figure*}

\subsubsection{Comparison with counter-diabatic drive}

It remains to identify the additional term $A_{\mathrm{S}}(t)$ with the standard counter-diabatic correction~\cite{berry2009transitionless,demirplak2003adiabatic,demirplak2005assisted}. Once this is done, the meaning of the Zeno-assisted protocol becomes transparent: within each Zeno subspace, the working medium evolves as though it were driven by the counter-diabatic Hamiltonian associated with $H_{\mathrm{S}}(t)$.

To see this, recall that given a time-dependent drive $H(t)$, with instantaneous eigenbasis given by $\{\vert n(t) \rangle \}_n$, one has a shortcut to adiabaticity via counter-diabatic drive~\cite{del_campo_shortcuts_2013} whenever we can drive the system with the different effective (counter-diabatic) Hamiltonian $H_{\rm CD}(t) = H(t)+\tilde H(t)$ where 
\begin{equation}
    \tilde H(t) =  i\sum_{n}\big( \vert \dot n_t\rangle \langle n_t\vert -\langle n_t\vert \dot n_t\rangle \vert n_t \rangle \langle n_t \vert\big).
\end{equation}
Driving the system according to $H_{\rm CD}(t)$ achieves precisely the features we are aiming for, i.e. coherences in the instantaneous eigenbasis are not generated and populations are preserved.

Translating the condition to our case, we see that $\tilde H (t)$ for the drive during the work strokes $H_S(t)$ is precisely given by $A_S(t)$ since 
\begin{align}
    \tilde H := \,& i\sum_{\ell \in \{0,1\}} \big( \vert \dot \ell_t \rangle \langle \ell_t \vert -\langle \ell_t\vert \dot \ell_t \rangle \vert \ell_t \rangle \langle \ell_t \vert \big) \nonumber \\
    =\,&i \big( \vert \dot 0_t \rangle \langle 0_t \vert -\langle 0_t\vert \dot 0_t \rangle \vert 0_t \rangle \langle 0_t \vert \nonumber\\
    & +\vert \dot 1_t \rangle \langle 1_t \vert -\langle 1_t\vert \dot 1_t \rangle \vert 1_t \rangle \langle 1_t \vert\big) \nonumber\\
    =& \,i \frac{\dot \theta_t}{2}\big( -\vert 1_t \rangle \langle 0_t \vert + \vert 0_t \rangle \langle 1_t \vert \big) = A_S(t).
\end{align}
The result holds for both work strokes with appropriate definitions of $H_S(t)$, given by Eq.~\eqref{eq:Hamiltonian_S} and $\theta_t$ from Eq.~\eqref{eq:theta(t)}. Therefore, the combined action of the two Zeno regimes induces an effective counter-diabatic Hamiltonian that generates a quasistatic unitary evolution confined to a specific Zeno subspace. Moreover, the transporter $A(t)$ emerging from the strong-coupling Zeno regime coincides with the quasistatic gauge potential associated
with the driven Hamiltonian~\cite{sels2017minimizing}. In this sense, our Zeno-assisted protocol provides a \emph{dynamical}
mechanism for generating the geometric connection underlying counter-diabatic driving~\cite{guery-odelin_shortcuts_2019}.

\subsubsection{Numerical simulations for a single lubricated work stroke}

Recall that $\{\vert 0_{\rm h}\rangle , \vert 1_{\rm h}\rangle \}$ is the spectral basis for $H_{S}(t_1) = H_{\rm h}$, the local Hamiltonian of the system at the end of  the compression stroke. Let us start by investigating numerically the coherence relative to $H_{\rm h}$ generated during a finite-time compression of duration $\tau_{\rm comp}=t_1-t_0\equiv t_1$. For that, we use the $C_{\ell_1}$ coherence monotone~\cite{baumgratz2014quantifying,streltsov2017colloquium} which in this case is equal to
\begin{equation}
    C_{\ell_1}\big(\rho_S(t_1) \mid H_{\rm h} \big) = 2\vert \langle 0_{\rm h}\,\vert\, \rho_S(t_1)\,\vert \,1_{\rm h}\rangle \vert.
\end{equation}
We choose $T_{\rm c} = 0.5$ such that $p_0 \approx 0.1192$ in Eq.~\eqref{eq:initial_state_lubricated_engine}. Then, we evolve the system according to $H_{\rm tot}(t)$ as given by Eq.~\eqref{eq:total_hamiltonian_SL}, so that $$\rho_{SL}(t_1) = U_{\rm tot}(t) \big(\rho_{S}(t_0)\otimes \vert +\rangle \langle + \vert \,\big)U_{\rm tot}(t)^\dagger$$ where $U_{\rm tot}(t)$ is the unitary evolution generated by $H_{\rm tot}(t)$. We then calculate $C_{\ell_1}\big(\rho_S(t_1) \vert H_{\rm h}\big)$ for  increasing values of $\Gamma$ and different values of $t_1 = \tau_{\rm comp}$ and show the results in Fig.~\ref{fig:instantaneous_full}(a).~\footnote{The complete replication material for all simulations presented in this work can be found in Ref.~\cite{zenodo2026zenoassisted}.} The remaining parameters are listed in the figure caption.

In the non-lubricated protocol from Sec.~\ref{sec:background_section}, short compression times generate a substantial amount of coherence in the instantaneous eigenbasis at the end of the stroke. As $\tau_{\mathrm{comp}}$ increases, the protocol approaches the quasistatic regime and this coherence decreases. By contrast, when we drive the system according to $H_{\rm tot}(t)$ and let $\Gamma$ increase, the dynamics approaches a shortcut to adiabaticity: the final coherence is strongly suppressed even at comparatively short stroke durations.

The evolution of coherences shown in Fig.~\ref{fig:instantaneous_full}(a) is deterministic and the duration of the unitary evolutions (which we will also refer to as a \emph{pulse}) is given by the entire duration of $\tau_{\rm comp}$. Once we include $n$ measurements of the lubricant, the unitary evolutions are divided and implemented in intervals $\delta t \sim \tau_{\rm comp}/n$. We end up with a sequence of short pulses of duration $\delta t$, contrasted with (selective) non-destructive $X$ basis measurements of the lubricant as shown in Fig.~\ref{fig:zeno_drive}. 

Figure~\ref{fig:instantaneous_full}(b) shows the corresponding simulation when the drive is implemented through frequent measurements as well as strong coupling (here with $n= 100$ measurements and system-lubricant coupling up to $\Gamma = 40$). In this Zeno drive case the outcome is trajectory dependent: each realization of the measurement record produces a different final reduced state of the working medium, and therefore a different value of the coherence. Outside the ideal Zeno regime one observes fluctuations associated with occasional jumps between Zeno subspaces. These fluctuations become smaller when the measurements are sufficiently frequent, i.e. when $\delta t\sim \tau_{\mathrm{comp}}/n$ is small.

\begin{figure*}[t]
    \centering
    \includegraphics[width=1\linewidth]{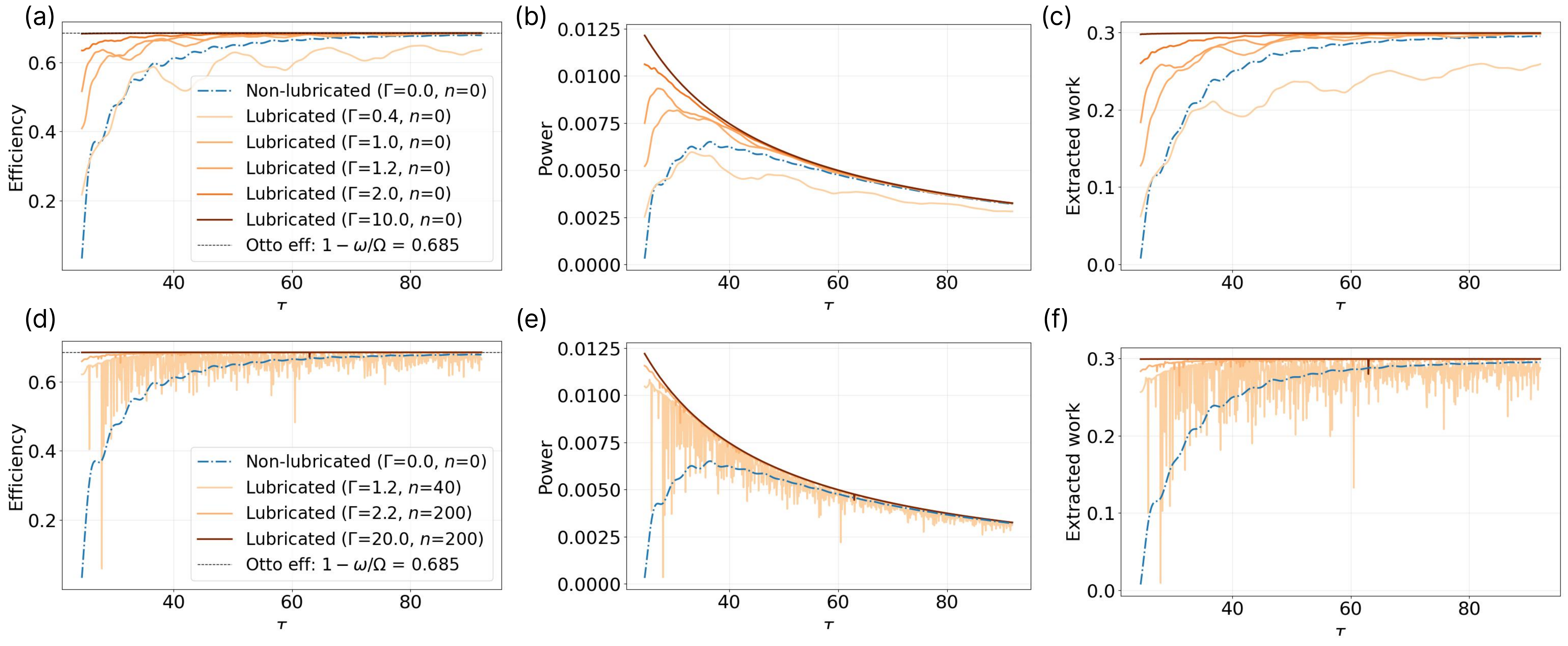}
    \caption{\textbf{Efficiency, power, and extracted work gains from lubrication via the Zeno effect without accounting for thermodynamic costs.} The non-lubricated case is shown for comparison, illustrating the detrimental effects of coherence when the work strokes are fast. Panels (a)-(c) show the gains obtained from lubrication in the strong-coupling limit. Panels (d)-(f) show the gains obtained when strong coupling is combined with frequent monitoring of the lubricant. In these, a single trajectory with $n$ measurements is shown for each choice of lubrication parameter. The duration of the cold thermalization stroke is $\tau_{\rm cold}=12$, and that of the hot thermalization stroke is $\tau_{\rm hot}=5$. The duration of the compression stroke varies in the range $5 \leq \tau_{\rm comp} \leq 50$, while the expansion stroke is set to half of the compression duration, $\tau_{\rm exp}=\tau_{\rm comp}/2$. {We always consider $\Gamma_{\rm comp} = \Gamma_{\rm exp}=\Gamma$. The curves are obtained from simulated values at intervals of $0.05$ in $\tau_{\rm comp}$. The net extracted work is plotted as $-\Delta W_{\rm tot}$, which for the chosen parameters equals $\Delta W_{\rm tot} \approx -0.30107$ in the ideal  limit of transitionless drive and ideal thermalization. In both lubrication scenarios, the Zeno limit suppresses instantaneous coherences and fluctuations, and the engine approaches the ideal quasistatic behavior.} Parameters: $\gamma_{\rm h}=\gamma_{\rm c}=0.5$, $\omega=\omega_L=1$, $\Omega_0=3.01105$, $T_{\rm c}=0.5$, and $T_{\rm h}=3$.}\label{fig:full_figure_test}
\end{figure*}

\subsection{Isochoric thermalization strokes}

Before simulating a full engine cycle, we must specify how the lubricant is treated during the heat strokes. During those strokes we \emph{switch off} the interaction between working medium and lubricant and allow only the working medium to interact with the thermal bath. Operationally, the lubricant is treated as an auxiliary control degree of freedom that is used during the work strokes and removed during thermalization.

If the work stroke is implemented by the full Zeno protocol, the ideal final state is a product state of the form $\rho_S(t_1)\otimes|\ell\rangle\langle\ell|$, with $\ell\in\{+,-\}$ determined by the trajectory. For sufficiently frequent measurements, trajectories that remain in the same Zeno subspace dominate. In the strong-coupling-only protocol, by contrast, the final state is the joint state $\rho_{SL}(t_1)$, which may contain correlations between working medium and lubricant. In both cases, the state brought into contact with the bath is the reduced state $\rho_S(t_1)=\mathrm{Tr}_L[\rho_{SL}(t_1)].$ At the beginning of the next work stroke we reinitialize the lubricant in the state $|+\rangle\langle+|$.

For the numerical illustrations in the next section, we approximate the heat strokes as sufficiently long to prepare the working medium close to the corresponding thermal state. This is an idealization: in the Markovian model used here, exact Gibbs states are reached only asymptotically. The approximation is nevertheless useful because it isolates the role of the lubricated work strokes. Under this assumption, deviations from ideal performance can be attributed primarily to the driving stage, allowing us to assess more cleanly how much improvement is produced by the Zeno-assisted protocol. As mentioned before, the limitations of this approximation are revisited in Sec.~\ref{sec:thermodynamic_costs}.

\subsection{Numerical simulations of a complete cycle of the lubricated engine}

We now evaluate the heat, work, efficiency, and power of the lubricated engine using the reduced state of the working medium and the definitions introduced in Secs.~\ref{sec:QHE_and_Otto_cycle} and~\ref{sec:efficiency_power_and_finite_time}. We show these in Figure~\ref{fig:full_figure_test}. To highlight the dynamical mechanisms behind lubrication, we compare three cases: (i) non-lubricated (reference) dynamics, (ii) strong-coupling, and (iii) combined strong coupling with frequent monitoring of the lubricant (Zeno drive). 

Figures~\ref{fig:full_figure_test}(a)-\ref{fig:full_figure_test}(c) show that, as $\Gamma$ increases, the strong-coupling protocol approaches the transitionless benchmark at shorter cycle times: the extracted work moves toward the quasistatic value and the efficiency approaches the efficiency of the quasistatically driven Otto engine. In this regime, the work strokes can be made arbitrarily fast, leaving the bottleneck for the power generation to be given by the duration of the thermalization strokes. We also remark that numerical results indicate a \emph{non-monotonic} effect of implementing the drive together with the interaction with the lubricant system on the engine performance. In the parameter regime considered here, the presence of the lubricant actually deteriorates the engine operation for $0 < \Gamma < 1$. Only for $\Gamma > 1$ does the inclusion of the lubricant system become beneficial. 

For the full Zeno-driven protocol, the work strokes are implemented as a sequence of short unitary pulses generated by $H_{\rm tot}(t)$, interspersed with selective measurements of the lubricant in the $X$ basis. Each simulation therefore produces a single stochastic trajectory. Figures~\ref{fig:full_figure_test}(d)-\ref{fig:full_figure_test}(f) show a  representative trajectory for different pairs $(\Gamma, n)$ of coupling strength $\Gamma$ and number of measurements $n$ for a Zeno drive. Even though these parameters are independent, they work together as increasing each aids lubrication. As both parameters increase, the fluctuations decrease and the dynamics approaches the transitionless limit.

We remark that the work difference $\Delta W_{\rm tot}$ given by Eq.~\eqref{eq:total_work_explicit} is negative (so that the net extracted work $- \Delta W_{\rm tot}$ shown in Figs.~\ref{fig:full_figure_test}(c) and (f) is positive) for the parameter ranges considered in this work. In our numerical studies we typically set $T_{\rm c}=0.5,\; T_{\rm h}=3,\; \omega=\omega_L=1$. Recalling that $\Omega=\sqrt{\omega^2+\Omega_0^2}\,, $ one finds that, assuming perfect thermalization during the isochoric hot and cold strokes, obtaining $\Delta W_{\rm tot}<0$ would require choosing $\Omega_0$ in the range $0<\Omega_0\lesssim 5.9$. To maximize the useful work extracted from the quantum heat engine for the temperatures and frequencies above, the numerically optimal choice lies near $\Omega_0\approx 3.01$.

In all these plots we show, for comparison, the  non-lubricated predictions. In the limit where $\tau_{\rm comp}$ is large (in this case, shown to be the case when the full cycle time $\tau \geq 60$) the non-lubricated engine approaches the quasistatic regime and both lubricated and non-lubricated engines converge.

These numerical results show that the protocol can substantially suppress coherence generation during the work strokes and thereby improve the apparent efficiency and power of the engine within the idealized model used so far. They also motivate a more careful analysis of implementation costs done in Sec.~\ref{sec:thermodynamic_costs}. Before turning to that analysis, it is useful to compare our approach with a related scheme that leads to similar conclusions and has already been examined from a similar perspective, namely the dephasing-assisted QHEs studied recently by Weber \emph{et al.}~\cite{weber2023thermodynamiccostspuredephasing}. Such lubrication mechanisms are among the most extensively studied and prominent in the literature~\cite{rezek_reflections_2010,camati2019coherence,kosloff_discrete_2002,feldmann2006quantum}.

\section{On the comparison between Zeno- and dephasing-assisted QHEs}\label{sec:relationship_between_decoherence_and_zeno}

\begin{figure*}[t]
    \centering
    \includegraphics[width=0.9\textwidth]{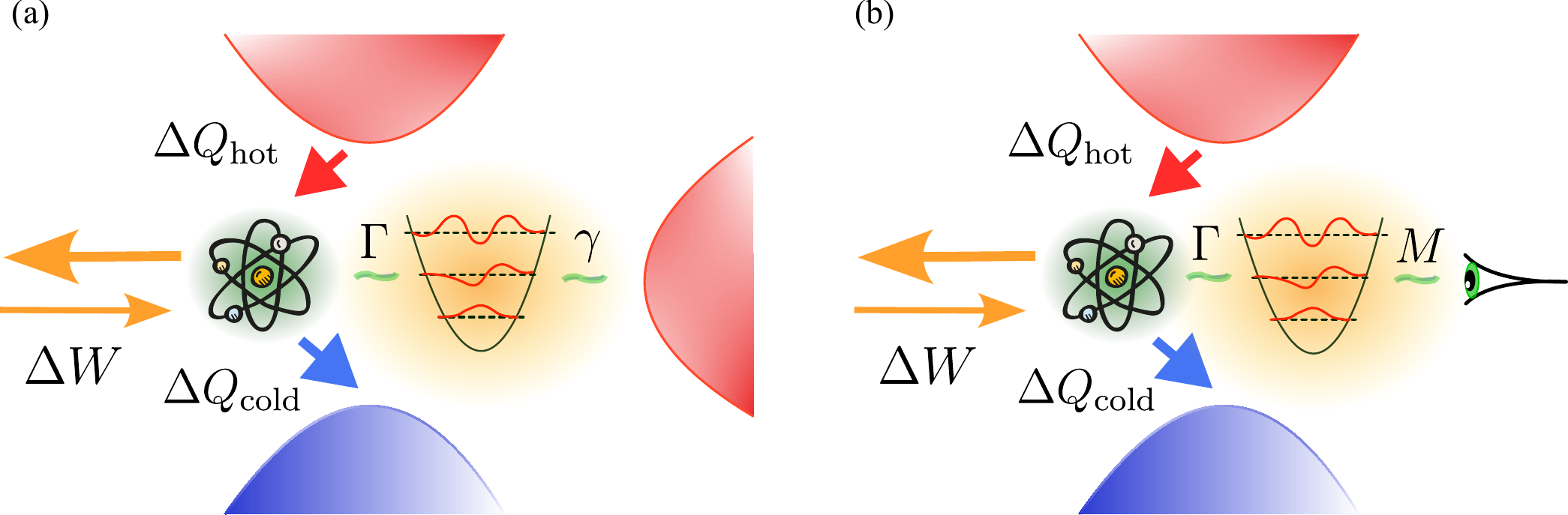}
    \caption{\textbf{Comparison between dephasing-assisted and Zeno-assisted quantum heat engine.} (a) Dephasing-assisted QHE from Ref.~\cite{weber2023thermodynamiccostspuredephasing}. The working system is provided by a single qubit. The lubricant is a composite system of a harmonic oscillator and a thermal bath at some temperature $T_{L^{(2)}}$. The working qubit couples with the harmonic oscillator, with strength $\Gamma$, which in turn couples to a thermal bath with strength $\gamma$. (b) An analogous Zeno-assisted engine where the harmonic oscillator strongly couples to the working system and is frequently measured.}\label{fig:dephasing_vs_zeno}
\end{figure*}

\subsection{Dephasing-assisted QHE from Weber \emph{et al.}}

Reference~\cite{weber2023thermodynamiccostspuredephasing} studied a dephasing-assisted lubrication scheme for essentially the same QHE introduced in Sec.~\ref{sec:QHE_and_Otto_cycle}. Their protocol follows the general idea of quantum lubrication proposed in  Ref.~\cite{feldmann2006quantum}, namely to couple the working medium to an auxiliary system that suppresses the harmful coherences generated during fast driving. Because our Zeno-assisted protocol plays a similar conceptual role, it is useful to compare the two mechanisms directly. Their setup is shown in Fig.~\ref{fig:dephasing_vs_zeno}(a). The lubricant is taken to be a system $\mathcal{H}_{L}^{(1)}\otimes \mathcal{H}_L^{(2)}$, composed of a quantum harmonic oscillator $\mathcal{H}_L^{(1)}$---which strongly interacts with the single-qubit working medium forming a typical spin-boson composite~\cite{Anto_Sztrikacs2021strong}---and a bosonic thermal bath $\mathcal{H}_L^{(2)}$ at temperature $T_{L^{(2)}}$ ($\beta_{L^{(2)}}=T_{L^{(2)}}^{-1}$). The bosonic bath and the working medium never interact directly. They refer to the composite lubricant system $\mathcal{H}_L^{(1)}\otimes\mathcal{H}_L^{(2)}$ as the \emph{dephasing bath}. 

Weber \emph{et al.} considered two specific models for this dephasing bath, motivated by distinct simulation techniques: one which is targeted for simulations using the Time Evolving Density matrices using Orthogonal Polynomials Algorithm (TEDOPA)~\cite{prior2010efficient,chin2010exact}, and another based on the Dissipation Assisted Matrix Product factorization (DAMPF) approach~\cite{somoza2019dissipation,mascherpa2020optimized}. In what follows, we consider only their DAMPF case, which is also the one that fits better with recent investigations on the role of strong couplings in thermodynamic devices~\cite{Anto_Sztrikacs2021strong,ivander2022strong,albarelli2024pesudomode}.

Their dephasing model is characterized by a full time-dependent Hamiltonian
\begin{equation}
    H_{SL^{(1)}}(t)=H_S(t)\otimes\mathbbm{1}+ \mathbbm{1}\otimes H_{L^{(1)}} + H_{\rm int}(t),
\end{equation}
where 
\begin{equation}
H_{L^{(1)}}=\omega_0 b^\dagger b\end{equation} 
(with $b$ and $b^\dagger$ being the annihilation and creation operators of the harmonic oscillator), $H_S(t)$ is the system Hamiltonian given by Eq.~\eqref{eq:Hamiltonian_S}, and $H_{\rm int}(t)$ is
\begin{equation}
    H_{\rm int}(t)=\Gamma(t) \,R(t)\otimes(b^\dagger+b).
\end{equation}
They consider the full dynamics given by the unitary part together with the oscillator damping:
\begin{equation}
    \frac{\mathrm d}{\mathrm d t}\rho = -i\big[H_{SL^{(1)}}(t),\rho\big] + \gamma\,\mathcal{D}_{\beta_{L^{(2)}}}[\rho],
\end{equation}
with
\begin{align}
    \mathcal{D}_{\beta_{L^{(2)}}}[\rho] &= \big(1+n_{L^{(2)}}\big)\left(b\,\rho\,b^\dagger - \tfrac{1}{2}\{b^\dagger b,\rho\}\right) \nonumber\\
    &\quad + n_{L^{(2)}}\left(b^\dagger\,\rho\,b - \tfrac{1}{2}\{b b^\dagger,\rho\}\right),\label{eq:time_dependent_dissipator_weber}
\end{align}
where $n_{L^{(2)}}=1/(e^{\omega_0\beta_{L^{(2)}}}-1)$ is the mean thermal photon number. Numerically, they find that the optimal lubrication regime is one in which $\Gamma,\gamma\gg\omega,\Omega_0,\gamma_{\rm h},\gamma_{\rm c}$. In other words, the working system is strongly coupled with $\mathcal{H}_L^{(1)}$, which in turn is strongly damped by $\mathcal{H}_L^{(2)}$, engineered so that the ratio $\Gamma/\gamma$ remains finite. In this regime, the full dephasing bath $\mathcal{H}_L^{(1)}\otimes\mathcal{H}_L^{(2)}$ can be described analytically by a temperature-dependent Lorentzian spectral density,
\begin{align}
    J_{\beta_{L^{(2)}}}(\omega) &= \tfrac{1}{2}J(\omega)\!\left(\coth\frac{\beta_{L^{(2)}}\omega_0}{2}+1\right) \nonumber\\
    &\quad + \tfrac{1}{2}J(-\omega)\!\left(\coth\frac{\beta_{L^{(2)}}\omega_0}{2}-1\right),
\end{align}
with
\begin{equation}
    J(\omega)=\frac{2\Gamma^2\gamma}{\gamma^2+4(\omega-\omega_0)^2}.
\end{equation}
They further derive an effective (for the case when the temperature of the lubricant thermal bath $T_{L^{(2)}} \approx 0$), controllable dephasing rate
\begin{equation}
    \Gamma_{\rm eff}=\frac{8\Gamma^2\gamma}{\gamma^2+4\omega_0^2},
\end{equation}
and engineer the parameters such that $\gamma/\omega_0\ll 1$. In this regime, they are able to approach the Otto efficiency $\eta_{\rm Otto}$~\eqref{eq:Otto_eff_def} at finite (in fact, arbitrarily high) power relative to the working system.

Through a series of numerical experiments they observe that, with good approximation, the joint state of $\mathcal{H}_S\otimes\mathcal{H}_L^{(1)}$ during work and heat strokes takes the form
\begin{equation}
    \rho_{SL^{(1)}}(t) \approx p_0\,|0_t\rangle\langle 0_t|\otimes |-\alpha\rangle\langle -\alpha| + p_1\,|1_t\rangle\langle 1_t|\otimes |\alpha\rangle\langle\alpha|,\label{eq:Weber_ansatz}
\end{equation}
and they compute the displacement $\alpha$ for which the working medium in contact with the strong dephasing bath is close to a steady state. The value
\begin{equation}\label{eq:steady_state_alpha}
    \alpha \;=\; 2\Gamma\,\frac{2\omega_0 + i\gamma}{4\omega_0^2+\gamma^2} \approx \frac{\Gamma}{\omega_0}
\end{equation}
is found to provide a parametrization for the steady states (note that we have used $\gamma \ll \omega_0$).

As a result, the evolution within the system's subspace is given by the state
\begin{equation}
    \rho_S(t) = \mathrm{Tr}_{L^{(1)}}\left[\rho_{SL^{(1)}}(t)\right] \approx p_0 \vert 0_t \rangle \langle 0_t \vert + p_1 \vert 1_t \rangle \langle 1_t \vert
\end{equation}
which is exactly the type of transitionless drive required for a successful quantum lubrication procedure.

\subsection{A Zeno-assisted reformulation}

We now show that the same effective transitionless drive can be recovered from a Zeno-assisted description of the setup in  Fig.~\ref{fig:dephasing_vs_zeno}-(b). Assume that the harmonic-oscillator lubricant is prepared within the subspace spanned by $\{\vert \alpha \rangle, \vert -\alpha \rangle\}$, with $\alpha$ given approximately by Eq.~\eqref{eq:steady_state_alpha}. In the strong-coupling regime this displacement is large enough that the two coherent states are nearly orthogonal, which allows us to interpret that subspace as an effective two-level system. Recall that $$\langle \alpha \vert (b+b^\dagger) \vert \alpha \rangle = 2 \mathfrak{Re}[\alpha],\quad \langle -\alpha \vert (b+b^\dagger) \vert -\alpha \rangle = -2 \mathfrak{Re}[\alpha]$$ and that for large enough $\alpha$ (and therefore large enough coupling $\Gamma$) we have
\begin{equation}
    \langle \mp\alpha \vert b+b^\dagger \vert \pm \alpha \rangle \approx 0.
\end{equation}

The position quadrature operator $(b+b^\dagger)/\sqrt{2}$ has infinitely many eigenstates. However, \emph{within the subspace} spanned by $\{\vert \alpha \rangle, \vert - \alpha \rangle\}$, its action approximates that of implementing the operation given by 
\begin{equation}
    \frac{b+b^\dagger}{\sqrt{2}} \approx \sqrt{2}\mathfrak{Re}[\alpha]\Bigr(\vert \alpha \rangle \langle \alpha \vert - \vert -\alpha \rangle \langle -\alpha \vert\Bigr). 
\end{equation}
Thus, within that subspace we let the coupling $\Gamma R(t) \otimes (b + b^\dagger)$ be described by a spectral decomposition with spectrum $\{\pm 2\Gamma\,\mathfrak{Re}[\alpha]\}$ and projectors  
\begin{equation}
    P_-(t) \approx \vert 0_t \rangle \langle 0_t \vert \otimes \vert -\alpha \rangle \langle -\alpha \vert + \vert 1_t \rangle \langle 1_t \vert \otimes \vert \alpha \rangle \langle \alpha \vert
\end{equation}
and $P_+(t) = \mathbbm{1}-P_-(t)$. The orthogonality approximation is valid up to $\vert \langle - \alpha \vert \alpha \rangle \vert^2 = e^{-4 \vert \alpha \vert^2}$, which is approximately zero for moderately high values of $\vert \alpha\vert$. 

Assuming that we are in this regime we can apply the effective unitary evolution from Eq.~\eqref{eq:strong_coupling_effective} relative to the subspace $\mathcal{H}_{S} \otimes \mathcal{H}_{L}^{(1)}$. In this case, the generator of the quasistatic transporter $A(t)$ from Eq.~\eqref{eq:adiabatic_transporter} is given by 
\begin{equation}
    A(t) \approx A_S(t) \otimes S_\alpha 
\end{equation}
where we have defined
\begin{equation}
    S_\alpha \equiv \vert -\alpha \rangle \langle -\alpha \vert + \vert \alpha \rangle \langle \alpha \vert.
\end{equation}
Similarly, the Zeno Hamiltonian arising from a strong coupling regime from Eq.~\eqref{eq:strong_coupling_limit} is, in this case,  
\begin{align}
    H_{\rm Zeno}^{(\rm str)}(t) &= \sum_{\ell \in \{+,-\}}P_\ell(t) \left(H_S(t)\otimes \mathbbm{1}+\mathbbm{1}\otimes H_{L}^{(1)}\right)P_{\ell }(t)\nonumber \\
    &\approx \left(H_S(t) + \vert \alpha \vert^2\omega_0 \mathbbm{1}\right)\otimes S_\alpha.
\end{align}
In summary, we end up with an approximate effective Hamiltonian drive:
\begin{align}
    H_{SL^{(1)}\rm,eff}&(t) \approx \Gamma (t) R(t) \otimes (b+b^\dagger)\nonumber +\\ &+(H_S(t)+A_S(t)+\vert \alpha \vert^2 \omega_0 \mathbbm{1})\otimes S_\alpha. \label{eq:time-dependent-effective-damping}
\end{align}
From this, we see that we can motivate the ansatz from Eq.~\eqref{eq:Weber_ansatz} using the strong coupling approximation. Starting the lubricant system $\mathcal{H}_{L^{(1)}}$ in a pure state $\vert \phi \rangle \langle \phi \vert $ implies that (writing $\mathcal{U}[\cdot ] \equiv U(\cdot )U^\dagger$)
\begin{align}
\rho_{SL^{(1)}}(t) &= \mathcal{U}_{\rm tot}(t) \bigr[p_0\vert 0\phi \rangle \langle 0\phi \vert + p_1 \vert 1\phi \rangle \langle 1\phi \vert \bigr ]\nonumber \\
&=p_0 \mathcal{U}_{\rm tot}(t) \bigr [\vert 0\phi \rangle \langle 0\phi \vert \bigr ] + p_1 \mathcal{U}_{\rm tot}(t) \bigr [\vert 1\phi \rangle \langle 1\phi \vert \bigr ] \nonumber \\
&\approx  p_0 \mathcal{U}_{\rm eff}(t) \bigr [\vert 0\phi \rangle \langle 0\phi \vert \bigr ] + p_1 \mathcal{U}_{\rm eff}(t) \bigr [\vert 1\phi \rangle \langle 1\phi \vert \bigr ] \nonumber \\
&\approx p_0 \vert 0_{t}\rangle \langle 0_{ t}\vert \otimes \vert \phi_0\rangle \langle \phi_0\vert + p_1 \vert 1_{t}\rangle \langle 1_{t}\vert \otimes \vert \phi_1 \rangle \langle \phi_1 \vert. \label{eq: Ansaz_connecting_to_Weber_et_al}
\end{align}
In the last approximation we use the fact that, in the strong-coupling regime of interest, the effective evolution approximately preserves product structure between the working medium and the relevant coherent-state subspace of the lubricant. In the two-qubit model considered earlier this is reflected in the suppression of entanglement shown later in Fig.~\ref{fig:log_neg}. The point of Eq.~\eqref{eq: Ansaz_connecting_to_Weber_et_al} is therefore not that entanglement is exactly absent at finite $\Gamma$, but that it becomes negligible in the regime relevant for the comparison.

To conclude, we include frequent monitoring of the position quadrature of the harmonic oscillator, which we approximate as a measurement $\{\vert \alpha \rangle \langle \alpha \vert  , \vert - \alpha \rangle \langle -\alpha \vert  \}$. In the frequent monitoring limit, we have that the harmonic oscillator will stabilize in a specific coherent state and the joint evolution will be given by the Zeno Hamiltonian 
\begin{align}
    H_{\rm Zeno}^{(\alpha)}(t) &\approx 2 \Gamma \mathfrak{Re}[\alpha] \, R(t) \otimes \vert \alpha \rangle \langle \alpha \vert \,\,+ \nonumber \\
    &+ \,\bigr (H_S(t)+A_S(t) + \vert \alpha \vert^2 \omega_0 \mathbbm{1}\bigr )\otimes \vert \alpha \rangle \langle \alpha \vert. 
\end{align}

These calculations have a simple explanation from the perspective of the so-called \emph{coherently encoded qubits}~\cite{ralph2003coherentstates}. Within the subspace $\{\vert \alpha \rangle, \vert- \alpha \rangle\}$ the two states can be interpreted as defining a logical quantum bit via the specification $\vert 0 \rangle_{\rm logical} \equiv \vert \alpha \rangle$ and $\vert 1 \rangle_{\rm logical} = \vert -\alpha \rangle$. In this case, the quadrature operator acts as a phase flip operator, playing the role of the logical Pauli $Z_{\rm logical}$ matrix. The whole analysis then easily parallels that of the Zeno-assisted engine we considered in Sec.~\ref{sec:Zeno_assisted_quantum_heat_engine}, where the role played there by $X$ is played here by the logical operator $Z_{\rm logical}$.

\section{Thermodynamic footprint of a Zeno-assisted engine}\label{sec:thermodynamic_costs} 

Up to this point, we have shown that QZD can induce an effective shortcut to adiabaticity, thereby enabling the transitionless dynamics required to enhance a QHE. This leads to a natural question: Does this control advantage remain meaningful once the protocol is embedded in a \emph{realistic} thermodynamic account? In this section, we therefore examine where the idealized picture may break down and which costs are unavoidable in a physical implementation.

Two issues must be separated clearly. First, the previous sections tracked only energetic changes in the working medium and therefore ignored several implementation costs, such as the energy required to switch the system-lubricant interaction on and off, the cost of driving, the resources needed to perform frequent measurements, and the preparation of the lubricant in a low-entropy state. Second, even the way work is estimated must be reconsidered once the lubricant and the measuring process are included explicitly. Both points are addressed below.

To make the issue concrete, consider a single lubricated work stroke. The working medium begins with local Hamiltonian $H_{\mathrm{c}}$ or $H_{\mathrm{h}}$; a lubricant with its own local Hamiltonian $H_L$ is brought in; the interaction between the two is switched on; the joint system is driven for a duration $\tau_{\mathrm{comp}}$ or $\tau_{\mathrm{exp}}$; and, in the Zeno protocol, measurements are performed on the lubricant during that interval. At the end of the stroke the interaction is switched off again, the working medium has reached the new local Hamiltonian, the lubricant has local Hamiltonian $H_L$, and the joint state is some $\rho_{SL}(t)$. A faithful work accounting should therefore explain under which assumptions one is allowed to ignore the lubricant and retain only the reduced energetics of the working medium.

The situation is perhaps more drastic when implementing a Zeno drive. In the Zeno limit, the drive is a unitary process confined to a particular Zeno subspace. Outside of this limit, however, the process is \emph{not} adiabatic, which implies that Eq.~\eqref{eq:work} cannot be directly employed, and the associated dissipation must be taken into account. Moreover, and as can be seen in Fig.~\ref{fig:full_figure_test}, given a finite number of implemented measurements, we expect that for each sequence of selective measurements performed on the lubricant, the output work should \emph{fluctuate}, implying that a better description of work should be as an \emph{average} of all possible trajectories. In this section, we therefore revisit our assumptions about how work and heat are quantified, and examine whether they remain valid in our setting.

\subsection{What counts as a thermodynamic cost?}

Before analyzing specific costs, it is helpful to introduce a simple operational classification of what counts as a ``thermodynamic cost'' in our setting. The point of this classification is not to propose mutually exclusive thermodynamic primitives, but to organize the different ways in which a realistic implementation can reduce useful work output or power.

We will use \emph{work cost} for operations that require external energy input through a controlled intervention, such as switching interactions, driving the system, or re-preparing auxiliary degrees of freedom. We will use \emph{heat cost} for losses associated with imperfect energy exchange with the baths, for example incomplete thermalization or unwanted heat flow into auxiliary systems. Finally, we will use \emph{dissipative cost} for irreversibility generated by non-ideal dynamics, including entropy production, measurement-induced disturbance, and finite-time transitions that reduce the useful work later available from the cycle.

Taken together, these categories capture the dominant operational mechanisms by which an otherwise ideal QHE loses useful work or power.

\subsection{Revisiting the estimation of extracted work}\label{sec:revisiting}

Our earlier definition of work follows the standard semiclassical treatment of stroke-based QHEs~\cite{Dann2023}, in which one tracks the energetics of the working medium while treating the external drive as classical. This framework is appropriate for the bare Otto cycle, but once a lubricant and a measurement protocol are introduced it becomes necessary to check more carefully which energetic contributions can still be neglected and which cannot.

In our lubricated device, however, this assumption of adiabaticity may be challenged in at least two ways. In the strong-coupling regime, introducing a lubricant, coupling it to the working medium, and later tracing it out cannot occur without some form of dissipation---even in an idealized error-free setting. This is because the final state of the composite system may exhibit entanglement or classical correlations, and the reduced state of the lubricant may retain residual instantaneous coherence. In the Zeno-driving regime, the situation is different: In the ideal Zeno limit, the evolution is adiabatic, no entanglement is generated, and the lubricant returns to its initial state. Outside this limit, however, the measurements themselves may give rise to heat dissipation.

These considerations compel us to revisit how we have applied our definition of work to the lubricated case. In what follows, we therefore treat the combined system and lubricant as a new effective working medium $\mathcal{H}_{SL} = \mathcal{H}_{S} \otimes \mathcal{H}_L$, and consider two distinct scenarios that parallel the setup of Fig.~\ref{fig:full_figure_test}.

\subsubsection{Strong coupling}\label{sec:strong_coupling_COST}
\begin{figure}[t]
    \centering
    \includegraphics[width=\columnwidth]{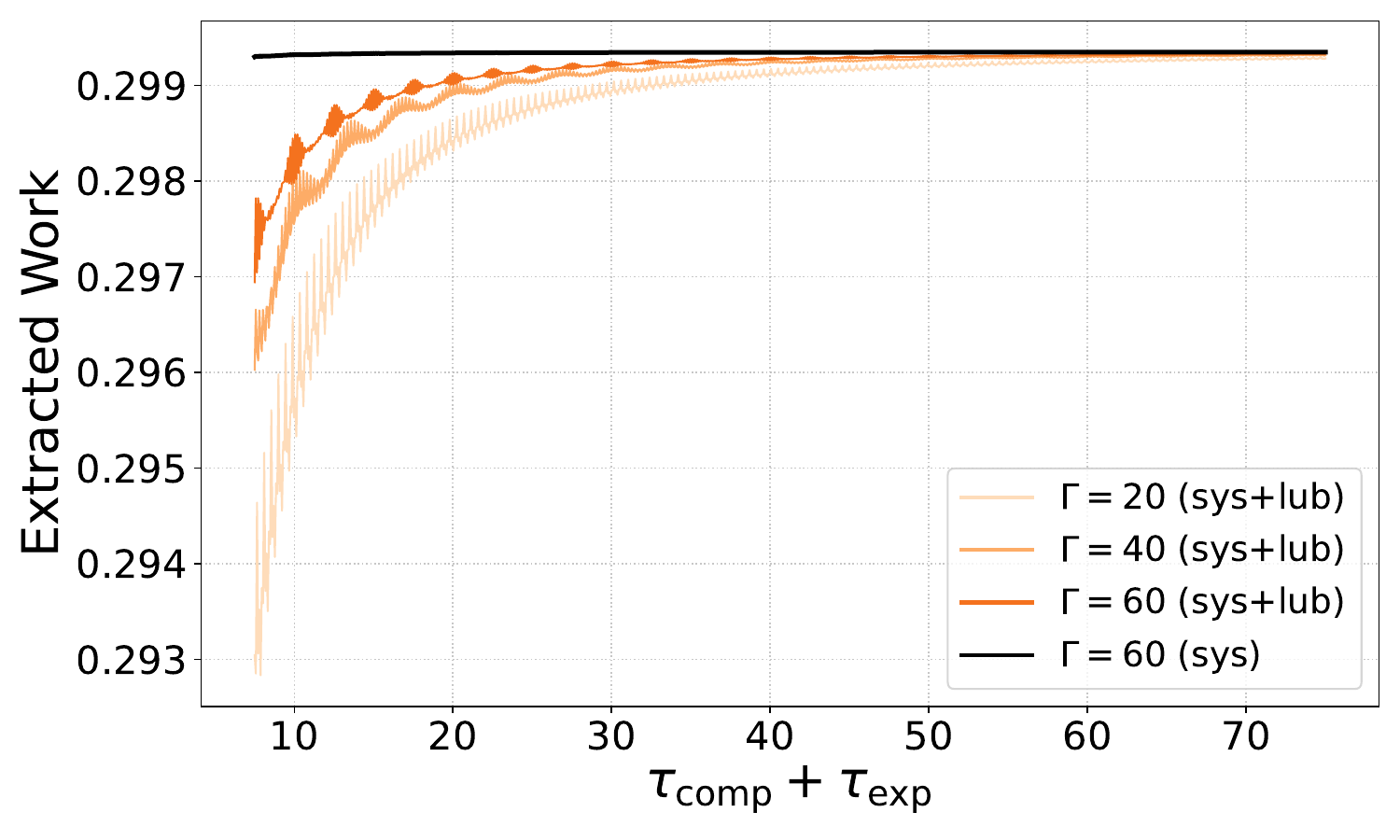}
    \caption{\textbf{Comparison between $-\Delta W_{\rm tot}$ and $-\Delta W_{\rm tot}^{\rm (s.c.)}$.} In the strong coupling limit $\Gamma \to \infty$ we have that the work contribution provided by $-\Delta W_{\rm tot}^{\rm (s.c.)}$ which includes the energetics of the lubricant and the full Hamiltonian $H_{\rm tot}(t)$ converges to $-\Delta W_{\rm tot}$, the transitionless work from Fig.~\ref{fig:full_figure_test}(c). The black line is calculated using Eq.~\eqref{eq:total_work_explicit}, while orange lines using Eq.~\eqref{eq:s.c.total_work_joint} and different values of strong coupling as shown in the legend.}
    \label{fig:str_composite_work}
\end{figure}

\begin{figure*}[t]
    \centering
    \includegraphics[width=1\linewidth]{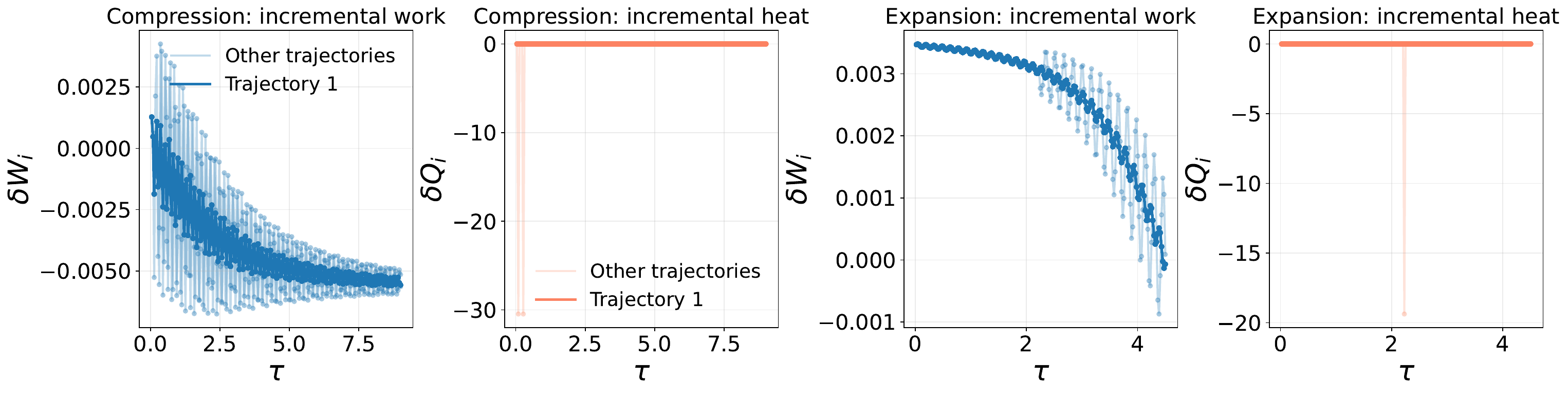}
    \caption{\textbf{Incremental work $\delta W_i$ and dissipated heat $\delta Q^{(\rm meas)}_i$.} Incremental energetic considerations during compression (left two panels) and expansion (right two panels) strokes. Each thin line represents one of the $N_{\rm traj} = 50$ trajectories; the highlighted dark blue (for the work) and dark red (for the heat) lines correspond to the first trajectory taken as a representative. Outside the Zeno limit, and for sufficiently many trajectories, unlikely trajectories where there is a jump between Zeno subspaces can occur, inducing a high dissipation cost proportional to the strong coupling  $\Gamma$.  Parameters: $\tau_{\rm comp} = 9, \tau_{\rm exp} = 4.5, n= 200, \Gamma = 20, N_{\rm traj} = 50, \omega = \omega_L = 1$, $T_{\rm c} = 0.5, T_{\rm h} = 3, \gamma_{\rm h} = \gamma_{\rm c}=0.5,$ and $ \Omega_0 = 3.01105$.}
    \label{fig:zeno_increments}
\end{figure*}
To start, we consider lubricating the device solely via the strong-coupling drive provided by $H_{\rm tot}(t)$. In this case, the evolution is fully unitary, and our revised notion of work becomes
\begin{align}
    \Delta W_{\rm tot}^{(\rm s.c.)} = &+ \langle H_{\rm tot}(t_1)\rangle_{\rho_{SL}(t_1)}- \langle H_{\rm tot}(t_0)\rangle_{\rho_{SL}(t_0)} \nonumber \\
&+\langle H_{\rm tot}(t_3)\rangle_{\rho_{SL}(t_3)}-\langle H_{\rm tot}(t_2)\rangle_{\rho_{SL}(t_2)},\label{eq:s.c.total_work_joint}
\end{align}
where
\begin{align}
\rho_{SL}(t_0) &= \left(p_0 \vert 0 \rangle \langle 0 \vert + p_1 \vert 1 \rangle \langle 1 \vert \right) \otimes \vert + \rangle \langle + \vert \label{eq: initial_state_compression}\\
\rho_{SL}(t_2)&=\left(q_0 \vert 0_{\rm h} \rangle \langle 0_{\rm h} \vert + q_1 \vert 1_{\rm h} \rangle \langle 1_{\rm h} \vert \right) \otimes \vert + \rangle \langle + \vert. \label{eq: initial_state_expansion}
\end{align}
Here, $p_0$ and $p_1$ denote the populations of the Gibbs state corresponding to the local Hamiltonian $H_{\rm c}$, while $q_0$ and $q_1$ are the analogous populations for $H_{\rm h}$. In general, the states $\rho_{SL}(t_1)$ and $\rho_{SL}(t_3)$ may contain non-trivial correlations between the system and lubricant.

We numerically estimate $\Delta W_{\rm tot}^{\rm (s.c.)}$ from Eq.~\eqref{eq:s.c.total_work_joint}, and plot the results in Fig.~\ref{fig:str_composite_work}. We see that for fixed values of $\tau_{\rm comp}$ and $\tau_{\rm exp}$, as $\Gamma \to \infty$ the net extracted work $\Delta W_{\rm tot}^{(\rm s.c.)}$ converges to the extracted work $\Delta W_{\rm tot}$ considered in Sec.~\ref{sec:Zeno_assisted_quantum_heat_engine}, which in turn approaches the transitionless work. \footnote{As one increases the coupling strength $\Gamma$, the duration of the stroke $\tau$ needs to scale adequately to ensure the dynamics approaches the quasistatic limit. The details are given in App.~\ref{app:generalization}, specifically in Eq.~\eqref{eq:precise_bound_operator_norm}.} With that, we conclude that in the strong coupling regime the lubricant system truly behaves as a near-ideal lubricant, introducing negligible changes to the total effective work extracted.

\subsubsection{Zeno drive}

For the Zeno-driven protocol,  Eq.~\eqref{eq:work} is no longer sufficient by itself because the work stroke is interrupted by measurements and the outcome of those measurements is stochastic. A more appropriate description is trajectory based: between two consecutive measurements the joint system evolves unitarily, while each measurement updates the state according to the outcome obtained in that run. Work and measurement-induced energy changes should then be assigned increment by increment and averaged over trajectories.

Therefore, for a given partitioned interval $t_i \equiv t_{(1)} \leq t_{(2)} \leq \ldots \leq t_{(n)} \equiv t_f$ of $n-1$ unitary pulses of duration $\delta t_{(k)} = t_{(k+1)}-t_{(k)}$,~\footnote{Notice that $t_0, t_1, t_2, \ldots$ denote the instants defining the QHE, whereas $t_{(1)}, t_{(2)}, \ldots, t_{(n)}$ denote the instants defining a particular partition of a work-stroke interval, given by $t_0 \leq t \leq t_1$ or $t_2 \leq t \leq t_3$.} we estimate the work for a given realization within each interval. To do so, we introduce a more explicit notation that accounts for whether a measurement has occurred. After a selective measurement at instant $t_{(k)}$ yielding outcome $\ell_k$, the joint state of the system and lubricant becomes
\begin{align*}
\rho_{SL}(t_{(k)})^{(\ell_k)} &= \frac{\mathbbm{1}\otimes \vert \ell_k \rangle \langle \ell_k \vert\rho_{SL}(t_{(k)}) \mathbbm{1}\otimes \vert \ell_k \rangle \langle \ell_k \vert}{\mathrm{Tr}\bigr[\rho_{SL}(t_{(k)}) \mathbbm{1}\otimes \vert \ell_k \rangle \langle \ell_k \vert \bigr]}
\end{align*}
Starting from $\rho^{(\ell_k)}_{SL}(t_{(k)})$, the system evolves unitarily to $\rho_{SL}(t_{(k+1)})$. The work increment $\delta W_k$ for that interval is then given by
\begin{equation}
\delta W_k(\ell_k) = \langle H_{\rm tot}(t_{(k+1)})\rangle_{\rho_{SL}(t_{(k+1)})}-\langle H_{\rm tot}(t_{(k)})\rangle_{\rho_{SL}^{(\ell_k)}(t_{(k)})}.
\end{equation}

At each instant $t_{(k)}$ a measurement is performed on the lubricant. Depending
on its outcome, we assume that a form of heat is dissipated, which we express as
\begin{align}
\delta Q_{k}^{(\rm meas)}(\ell_{k}) =
\mathrm{Tr}\left[ H_{\rm tot}(t_{(k)}) \left(\rho_{SL}(t_{(k)})-\rho_{SL}^{(\ell_{k})}(t_{(k)})\right) \right].
\end{align}
Here $\rho_{SL}(t_{(k)})$ denotes the state produced by the pulse
$U_{\rm tot}(t_{(k)},t_{(k-1)})$, i.e.\ immediately before the measurement at
$t_{(k)}$. For $k=1$ no pulse precedes the measurement and the lubricant is
prepared in $\vert + \rangle$, so the measurement leaves the state unchanged and
$\delta Q_1^{(\rm meas)} = 0$.

Outside the ideal Zeno limit, this contribution can be significant because rare jumps between Zeno subspaces occur at an energy scale set by the strong coupling $\Gamma$. As we will see, in the ideal limit, these jumps become negligible and so does the associated dissipative contribution. The complete energetic account for a given trajectory $\vec \ell = (\ell_1,\dots,\ell_{n})$ is therefore obtained by summing over all intervals,
\begin{equation}
\Delta W(t_f,t_i \mid \vec \ell \,\,) = \sum_{k=1}^{n-1} \delta W_k(\ell_k).
\end{equation}
In turn, the total heat dissipated is similarly given by
\begin{equation}
    \Delta Q^{(\rm meas)}(t_f,t_i \mid \vec \ell\,\,) = \sum_{k=1}^n\delta Q_k^{(\rm meas)}(\ell_k).
\end{equation}
The actual work output is then obtained by averaging over all possible trajectories.
\begin{equation}\label{eq:fluctuating_Zeno_work}
\Delta W^{\rm(Zeno)}(t_f,t_i) = \sum_{\vec \ell } p(\,\vec \ell\, \,) \, \Delta W(t_f,t_i \mid \vec \ell \,\,).
\end{equation}
Above, $p(\, \vec \ell \,\, )$ is the probability of obtaining a certain trajectory $\vec \ell$. In our setting, this is 
\begin{equation}\label{eq:trajectory_probability}
    p(\, \vec \ell \, \,) = \prod_{k=1}^{n} \mathrm{Tr}\Bigr [\rho_{SL}(t_{(k)}) \mathbbm{1} \otimes \vert \ell_k \rangle \langle \ell_k \vert \Bigr ].
\end{equation}

In the Zeno limit, the dominant contribution comes from the trajectory that remains in the same Zeno subspace throughout the stroke. All other trajectories become overwhelmingly unlikely as $n \to \infty$.~\footnote{The probability of leaving the Zeno subspace per measurement step scales as \(O(1/n^2)\)~\cite{facchi2008quantum}, where $n$ is the number of measurements. Hence the probability of remaining in the Zeno subspace we started at is approximately \[p_{\mathrm{no\ jump}} \simeq\prod_{k=2}^{n}\left(1-\frac{c_k}{n^2}\right) \simeq\exp\!\left[-\frac{1}{n^2}\sum_{k=2}^{n} c_k \right]\to 1,\] assuming bounded $c_k$'s which depend on the operator norm of $H_{\rm tot}(t_{(k)})$. Therefore, the probability of observing one or more jumps during the whole interval $\tau$ scales as \(O(1/n)\to 0\).}~At the same time, the average measurement-induced energy change tends to zero. This is why the trajectory-based description reduces, in the ideal limit, to the simpler transitionless-work picture used earlier. Figure~\ref{fig:zeno_increments} presents the specific estimation of the increments $\delta W_k(\ell_k)$ and $\delta Q_{k}(\ell_k)$ for 50 trajectories outside the ideal Zeno limit. The dissipations are high as they are proportional to $\Gamma$, yet occur only when there is a jump between different Zeno subspaces. Otherwise, the dissipative increments are significantly small for large $n$~(scaling as $O(1/n)$.) in comparison to the extracted work. 

We now estimate  Eq.~\eqref{eq:fluctuating_Zeno_work} numerically. Because the number of possible trajectories grows as $2^n$, an exhaustive evaluation quickly becomes impractical in the regime of frequent measurements. Apart from impracticality coming from an incredibly large number of trajectories, the vast majority of these trajectories have a near zero contribution. Therefore, we sample a certain finite number $N_{\rm traj}$  of trajectories $(\vec \ell_s)_{s=1}^{N_{\rm traj}}$ where we alternate between unitary evolutions and sampling the state update relative to the Born rule predictions appearing in Eq.~\eqref{eq:trajectory_probability}.  We then approximate $\Delta W^{(\rm Zeno)}(t_f,t_i)$ in Eq.~\eqref{eq:fluctuating_Zeno_work} as
\begin{align}
    \Delta W^{(\rm Zeno)}(t_f,t_i) &\approx \Delta W_{\rm mean}^{(\rm Zeno)}(t_f,t_i) \\
    &= \frac{1}{N_{\rm traj}}\sum_{s=1}^{N_{\rm traj}} \Delta W(t_f,t_i \mid \vec \ell_{s}).
\end{align}
We estimate the dissipative contribution in the same way, by averaging the measurement-induced energy change over the sampled trajectories
\begin{align}
    \Delta Q^{(\rm meas)}(t_f,t_i) & \approx \Delta Q^{(\rm meas)}_{\rm mean}(t_f,t_i) \\
    &= \frac{1}{N_{\rm traj}}\sum_{s=1}^{N_{\rm traj}} \Delta Q^{(\rm meas)}(t_f,t_i \mid \vec \ell_{s}). 
\end{align}
One should keep in mind that this procedure can overestimate the average dissipation if the sample is too small: jumps between Zeno subspaces are rare, but when they occur they contribute an $O(\Gamma)$ energy change and can therefore have a disproportionate effect on the sample mean.

The results are shown in Fig.~\ref{fig:otto_averaged_zeno} for 50 trajectories. We plot the averaged work extracted under a Zeno drive with $\Gamma = 20$ as the number of measurements is increased. Each point is averaged over 50 trajectories. In the simulation, the duration of the compression stroke is varied between $5$ and $10$, while the expansion stroke duration is set to $\tau_{\rm comp}/2$, as before. In this regime, the heat contribution approaches zero in the Zeno limit $n \to \infty$.

\begin{figure}[t]
    \centering
    \includegraphics[width=\columnwidth]{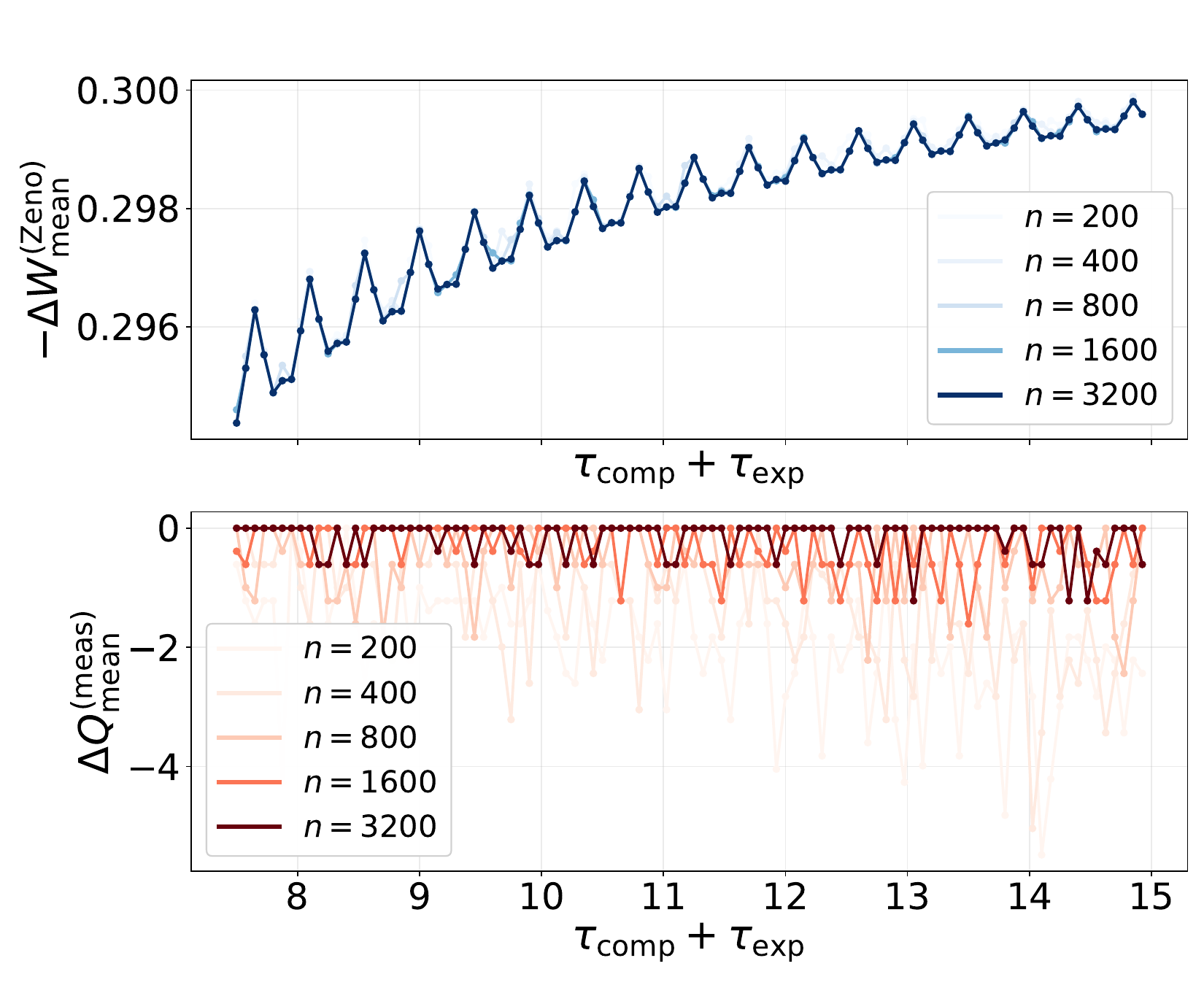}
    \caption{\textbf{Averaged extracted work and dissipated heat under a Zeno drive.} The results are shown for 50 trajectories. We plot the averaged work extracted under a Zeno drive with coupling strength $\Gamma = 20$ as the number of measurements $n$ is increased. Each point is averaged over $N_{\rm traj}=50$ trajectories. In the simulation, the compression stroke duration $\tau_{\rm comp}$ is varied between $5$ and $10$, while the expansion stroke duration is set to $\tau_{\rm exp} = \tau_{\rm comp}/2$, as before. In the Zeno limit $n \to \infty$, the dissipated heat approaches zero.  Parameters: $\Gamma = 20, N_{\rm traj} = 50$, $\tau_{\rm comp} = 2 \tau_{\rm exp}$, $\omega=\omega_L=1$,  $\Omega_0 = 3.01105$, $T_{\rm c} = 0.5$, $ T_{\rm h}=3$, and  $\gamma_{\rm h} = \gamma_{\rm c}=0.5$. }
    \label{fig:otto_averaged_zeno}
\end{figure}

\subsection{Costs associated with the lubricant qubit}\label{sec:lubricant_qubit_costs}

After revisiting our approach for estimating work, we turn our attention to the analysis of energetic costs which come from the introduction and manipulation of the lubricant system when operating the engine. In the following, we specifically investigate the work cost of turning the strong interaction on and off, and whether there is a decoupling penalty due to correlations between the working medium and the lubricant at the end of work strokes. We also examine the cost of resetting the lubricant state at the beginning of each work stroke, and the energetic cost of attaining Zeno stabilization on the lubricant during the Zeno-driven work strokes by accounting for the energetic cost of implementing a frequent monitoring of the lubricant's state.

\subsubsection{Strong interaction and correlations between the system and lubricant}

In this subsection, we show that the coupling and decoupling costs associated with lubricating the QHE according to the strong-coupling alone, such as the one considered in Fig.~\ref{fig:full_figure_test}(a)-(c), are significant, thus motivating the frequent monitoring of the lubricant. We start by noticing that, in the limit $\Gamma\to\infty$ the final composite state $\rho_{SL}(t_1)$ exhibits vanishingly small entanglement generation, as can be seen from Fig.~\ref{fig:log_neg}. We quantify entanglement~\cite{vedral1997quantifying,vedral1998entanglement} using the logarithmic negativity~\cite{plenio2005logarithmic}
\begin{equation}
    E_N(\rho_{SL}(t_1)) = \log_2 \Vert \rho^{T_L}_{SL}(t_1)\Vert_1 
\end{equation}
where $T_L$ here denotes the partial transpose relative to the lubricant system, and $\Vert \cdot \Vert_1$ denotes the trace norm. In our setting, since both the lubricant and the system are single-qubit states $E_N(\rho_{SL}(t_1)) = 0$ implies that $\rho_{SL}(t_1)$ is separable. Starting with $\rho_{SL}(t_0)$ as given by Eq.~\eqref{eq: initial_state_compression}, Fig.~\ref{fig:log_neg} shows the entanglement of the final state after evolving the system according to $U_{\rm tot}(t_1)$ for different durations of $\tau_{\rm comp}$. For sufficiently large $\Gamma$, entanglement is strongly suppressed even for short compression times.

\begin{figure}[t]
    \centering
    \includegraphics[width=\columnwidth]{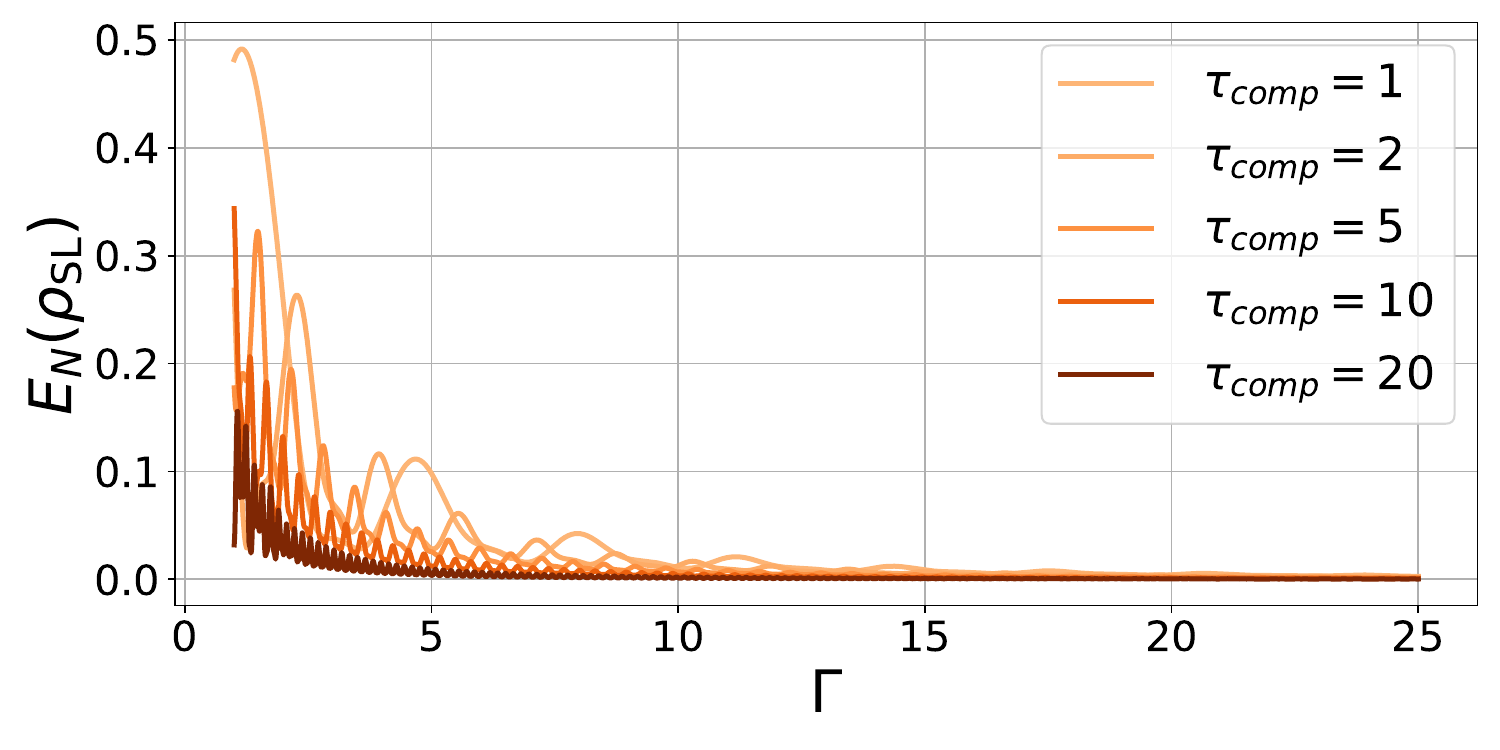}
    \caption{\textbf{Logarithmic negativity of the final state $\rho_{SL}(t_1)$ after the compression stroke.} Different colored curves stand for the varying duration of the compression stroke, as given in the legend. The coupling strength $\Gamma$ starts with $\Gamma = 1$ and changes in $0.01$ increments.  Parameters: $\omega=\omega_L=1$, $\Omega_0=5, T_{\rm c}=0.5$, and no measurements are present.}\label{fig:log_neg}
\end{figure}

However, entanglement is not the only relevant carrier of information. Provided that the working system starts in a mixed state, lubricant and working medium might share classical correlations, making the final state separable, but not necessarily a product state. Therefore, if the working fluid and the lubricant are correlated at the end of the work strokes, there is a work cost associated with decoupling them. Intuitively, this work cost can be viewed as the additional energy trapped in the correlations generated by the interaction~\cite{muller2018correlating,molitor2020stroboscopic}, and is given by
\begin{align}
    &\Delta W_{SL}^{\rm dec}(t_f,t_i)=  \nonumber \\  &\, \,\,\mathrm{Tr} \Big[H_{SL}(t_f) \rho_{SL}(t_f)\Big] - \mathrm{Tr}\Big [H_{SL}(t_i)\big(\rho_{S}(t_i)\otimes \rho_L(t_i)\big) \Big], \label{eq:on_off_cost}
\end{align}
where $H_{SL}(t)$ is the interaction term given by Eq.~\eqref{eq:H_SL(t)}, $\rho_{SL}(t_f) = U_{\rm tot}(t_f)\bigr (\rho_S(0) \otimes \vert + \rangle \langle + \vert \bigr) U_{\rm tot}(t_f)^\dagger$ is the joint state after respective work stroke, and $\rho_{L}(t_i)= \vert +\rangle \langle + \vert$ at the beginning of each work stroke. Here, $U_{\rm tot}(t)$ is generated by $H_{\rm tot}$ from Eq.~\eqref{eq:total_hamiltonian_SL}.

We are investigating two lubrication protocols  in parallel, (i) when only strong coupling between two qubits is present, and (ii) when the lubricant qubit is experiencing Zeno monitoring. Starting with the former, which has been analyzed in Subsection~\ref{sec:strong_coupling_COST}, we use Eq.~\eqref{eq:on_off_cost} to calculate the decoupling work cost, and show the results in Fig.~\ref{fig:dec_cost}. Even though the amount of entanglement decreases as $\Gamma \to \infty$, the presence of classical correlations makes $\Delta W^{\rm dec}_{SL}$ grow linearly in magnitude with the coupling strength $\Gamma$. Since the strong-coupling protocol requires $\Gamma$ to be large, a diverging decoupling penalty renders this approach ineffective. This provides a useful example of how a protocol that appears energetically favorable at first sight---as can be seen in Figs.~\ref{fig:full_figure_test}(a)-(c)---may exhibit significant drawbacks once more realistic energetic costs are taken into account.

\begin{figure}[t]
    \centering
    \includegraphics[width=\columnwidth]{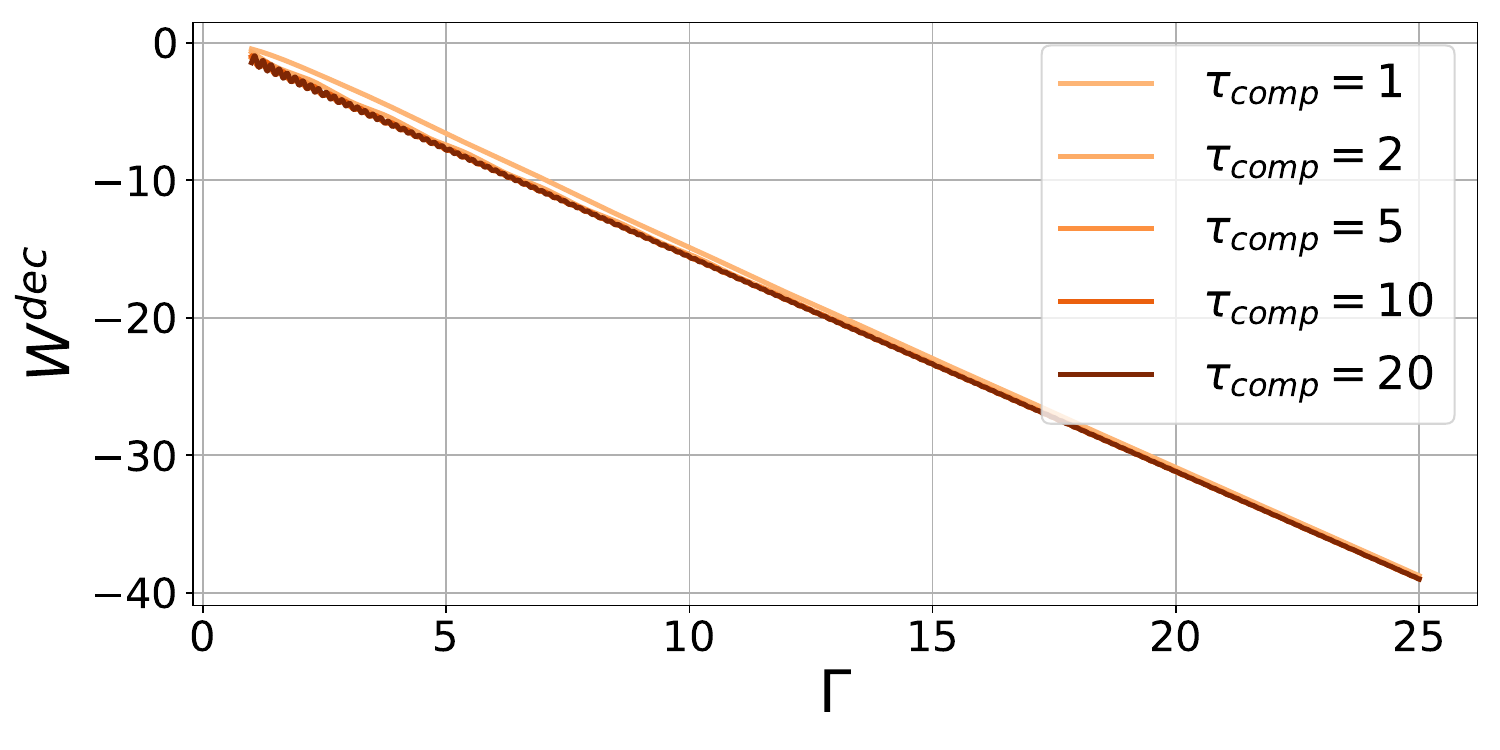}
    \caption{\textbf{Decoupling work cost for the compression stroke.} Using Eq.~\eqref{eq:on_off_cost}, we calculate the energy trapped in the interaction between the working system and the lubricant as a function of the coupling strength $\Gamma$ that starts with $\Gamma=1$ and changes in $0.01$ increments. Different colored curves stand for the varying duration of the compression stroke, as given in the legend. Parameters: $\omega=\omega_L=1$, $\Omega_0=5, T_{\rm c}=0.5$, and no measurements are present.}\label{fig:dec_cost}
\end{figure}

Figure~\ref{fig:dec_cost} shows why strong coupling alone is not a satisfactory thermodynamic solution: although it suppresses coherence generation, the decoupling cost grows with $\Gamma$. This motivates the addition of Zeno stabilization through frequent measurements. In the ideal Zeno regime, the final state after a work stroke is again a product state, so no interaction energy remains trapped in residual correlations when the coupling is switched off. In that regime, the decoupling penalty is therefore removed. 

The same conclusion can be reached analytically. In the ideal Zeno regime the
joint state stays in product form at the start and at the end of the compression stroke,
\begin{align*}
    \rho_{SL}(0) &= \underbrace{\left(p_0 \vert 0 \rangle \langle 0 \vert
      + p_1 \vert 1 \rangle \langle 1 \vert \right)}_{\equiv\, \rho_S(t=0)} \otimes \vert + \rangle \langle + \vert, \\
    \rho_{SL}(t_1) &= \underbrace{\left(p_0 \vert 0_{t_1} \rangle \langle 0_{t_1} \vert
      + p_1 \vert 1_{t_1} \rangle \langle 1_{t_1} \vert \right)}_{\equiv \, \rho_S(t=t_1)} \otimes \vert + \rangle \langle + \vert,
\end{align*}
where $\vert 0_{t}\rangle$ and $\vert 1_{t}\rangle$ are given by Eqs.~\eqref{eq:0t} and~\eqref{eq:1t}. Since $\langle + \vert X \vert + \rangle = 1$ and $\rho_{SL}(t) = \rho_S(t) \otimes \vert + \rangle \langle + \vert$, each term of Eq.~\eqref{eq:on_off_cost} reduces to
\begin{equation}\label{eq:dec_cost_reduces}
    \mathrm{Tr}\big[H_{SL}(t)\,\rho_S(t) \otimes \vert + \rangle \langle + \vert\big]
    = \Gamma \, \mathrm{Tr}\big[R(t)\,\rho_S(t)\big].
\end{equation}
Moreover, the ideal Zeno drive is an effective transitionless drive which keeps the populations fixed in the instantaneous
eigenbasis, so $\mathrm{Tr}[R(t)\rho_S(t)] = p_0 - p_1$ at both instants $t=0$ and $t=t_1$.
\begin{equation}
    \Delta W_{SL}^{\rm dec}(t_1,0) = \Gamma(p_0-p_1) - \Gamma(p_0-p_1) = 0.
\end{equation}
The same argument applies to the expansion stroke, with $\rho_S(t_2)$ diagonal
in $\{\vert 0_{\rm h} \rangle, \vert 1_{\rm h} \rangle\}$, which is the
instantaneous eigenbasis at $t_2$, and with populations $q_0$ and $q_1$:
\begin{equation}
    \Delta W_{SL}^{\rm dec}(t_3,t_2) = \Gamma(q_0-q_1) - \Gamma(q_0-q_1) = 0.
\end{equation}

\subsubsection{On the relation between decoupling and on/off work cost}

In our protocol, the interaction is switched off during the thermalization strokes so that the baths act only on the working medium. This avoids spurious heat exchange with the lubricant and keeps the thermalization model close to the standard Otto-cycle setting. If the interaction were left on, correlations between working medium and lubricant could alter both the thermalization time and the heat currents, as observed in  Ref.~\cite{weber2023thermodynamiccostspuredephasing}. Equation~\eqref{eq:on_off_cost} also quantifies the energetic cost associated with switching the strong interaction on and off, as shown below.

Turning the interaction on reduces the accessible parts of the total Hilbert space, which can be interpreted as projection to the subspace defined by the interaction Hamiltonian $H_{\rm int}(t)$. To estimate the associated work cost, we calculate the energy difference as the interaction term is added at the beginning of work strokes
\begin{align}
    &W_{\rm on}(t_i)= \nonumber \\
    &\,\,\, \mathrm{Tr}\Big[ H_{\rm tot}(t_i) \rho_{SL}(t_i) \Big] - \mathrm{Tr}\Big[ \big( H_S(t_i) + H_L \big) \rho_{SL}(t_i) \Big],
\end{align}
and the difference in energy as the interaction is removed at the end of work strokes
\begin{align}
    &W_{\rm off}(t_f) = \nonumber \\ 
    &\,\,\, \mathrm{Tr}\Big[ \big( H_S(t_f) + H_L \big)\rho_{SL}(t_f) \Big] - \mathrm{Tr}\Big[ H_{\rm tot}(t_f) \rho_{SL}(t_f) \Big],
\end{align}
with respect to the joint initial and final states. The total Hamiltonian is from Eq.~\eqref{eq:total_hamiltonian_SL}, while $H_S(t) + H_L = H_S(t) \otimes \mathbbm{1} + \mathbbm{1} \otimes H_L$. Using linearity of the trace and the initial product state, the above equations are reduced to
\begin{align}
    W_{\rm on}(t_i) &= \mathrm{Tr}\Big[ H_{SL}(t_i) \big( \rho_{S}(t_i)\otimes \rho_L(t_i) \big) \Big] \\
    W_{\rm off}(t_f) &= -\mathrm{Tr}\Big[ H_{SL}(t_f) \rho_{SL}(t_f) \Big],
\end{align}
yielding the negative of Eq.~\eqref{eq:on_off_cost} by addition. The opposite minus sign can be understood as follows: Eq.~\eqref{eq:on_off_cost} quantifies energy trapped in correlations, while the negative of it, obtained by analyzing the on/off cost, is energy required to \emph{create} those correlations. Since we have shown that in the ideal Zeno limit the decoupling cost vanishes, we also have that the on/off work cost vanishes in the same limit. 

Although a Zeno drive helps resolve the penalty due to strong coupling, it also suggests that one may be shifting the decoupling cost toward two other costs: the cost of implementing the Zeno monitoring and the dissipative cost of entropy production due to the inherent irreversibility of the measurement process. We now proceed to consider both in detail.

\subsubsection{Frequent monitoring} 

A substantial literature has been devoted to understanding the energetic requirements of implementing quantum measurements~\cite{takahiro2009minimal,jacobs2009secondlaw,Reeb2014,abdelkhalek2016energy, Esposito2011,kieu2006quantum}, and the ways in which they can be exploited in thermodynamic settings~\cite{buffoni2019quantummeasurementcooling}. From that perspective, ideal projective measurements are often treated as limiting operations whose exact implementation would require unbounded resources, because preparing pure states needed for a perfectly sharp measurement is itself costly~\cite{Guryanova2020,masanes2017general,scharlau2018quantumhornslemma,taranto2023landauer,clivaz2019unifying}. These observations are important, but they concern asymptotic costs (i.e.~the cost required for a \emph{perfect} pure state preparation), which do not by themselves rule out the use of measurement-based protocols in finite-resource settings. That is because in these cases the precise \emph{rates} at which approximately pure states are prepared play a more relevant quantitative role.

Our goal here is more modest and more practical. We do not attempt to model the microscopic details of the measuring device. Instead, we follow Ref.~\cite{abdelkhalek2016energy}, and use an effective energetic accounting that separates the measurement-induced change in the system from the cost of resetting the measurement register. This allows us to estimate how the monitoring cost scales in the regime relevant for the Zeno effect.

Following Ref.~\cite{abdelkhalek2016energy}, we assume that the measurement apparatus consists of two conceptual stages: a projective measurement described by a PVM $\{P_\ell\}_\ell$, and a reset step that restores the measurement register so that the process can be repeated. The reset is treated as an erasure process coupled to a bath---the \emph{resetting bath}---at inverse temperature $\beta$. In this framework, the total energetic cost of a projective measurement contains both a measurement-induced energy change and a Landauer-type contribution associated with resetting the register. Thus, the total energy change due to a projective measurement reads
\begin{equation}\label{eq:proj_M_cost}
    \Delta E_{\rm proj} = \Delta Q^{(\rm meas)} + \frac{1}{\beta}H(\{p_\ell\}_\ell),
\end{equation}
where $$\Delta Q^{(\rm meas)} = \mathrm{Tr}\bigr [H(\rho' - \rho) \bigr]$$ is the heat cost contribution due to the mapping onto a final state $$\rho' = \sum_\ell P_\ell \rho P_\ell,$$
and $\beta^{-1} H(\{p_\ell\}_\ell)$ is the dissipation cost quantified by the Shannon entropy $H$ of the distribution of Born rule probabilities $p_\ell = \mathrm{Tr}[P_\ell \rho P_\ell]$ from the measurement, weighted by the temperature of the resetting bath. 

Note that this is a different picture than the one we have  considered in Sec.~\ref{sec:revisiting}, where we have estimated the heat difference relative to the selective projected state $P_\ell \rho P_\ell/p_\ell$. The authors from Ref.~\cite{abdelkhalek2016energy} are not assigning a separate energy change to a particular outcome $\ell$ as they are interested in defining the energetic cost of the measurement step as a physical process, averaged over outcomes. Since we have already observed in Sec.~\ref{sec:revisiting} that a selective description has effectively no heat cost in the limit $n\to \infty$ we now consider the heat cost relative to this distinct approach.

To estimate this cost in our case we proceed similarly to Sec.~\ref{sec:revisiting}. We begin by applying the linear approximation to the total Hamiltonian $H_{\rm tot}$ from Eq.~\eqref{eq:total_hamiltonian_SL}, which generates system-lubricant evolution between two consecutive measurements. We keep the contributions up to the order $\delta t^2$, where $\delta t = t_{(k+1)}-t_{(k)},$ for all $k=1,\ldots,n$, is the time duration between measurements, leading to the approximate evolution
\begin{align}
    &\rho_{SL}(t_{(k)}) = \rho_{SL}(t_{(k-1)}) - i \delta t \bigr [H_{\rm tot}(t_{(k)}), \rho_{SL}(t_{(k-1)}) \bigr] \nonumber \\&- \frac{\delta t^2}{2}\Bigr [H_{\rm tot}(t_{(k)}), \bigr[H_{\rm tot}(t_{(k)}), \rho_{SL}(t_{(k-1)}) \bigr] \Bigr] + O(\delta t^3) \nonumber.
\end{align}
Note that the above approximation holds under the assumption that $\delta t \ll 1/\max(\omega,\omega_L,\Omega_0,\Gamma) = 1/\Gamma$. The lubricant state immediately before a measurement is $\rho_L \equiv \rho_L({ t_{(k)}})=\mathrm{Tr}_S\bigr [\rho_{SL}({t_{(k)}})\bigr ]$. In our case, we implement a local measurement of the form $\mathbbm{1} \otimes \vert \ell \rangle \langle \ell \vert$ with $\ell \in \{+,-\}$. Since the aim of the protocol is to keep the lubricant in the $\vert + \rangle$ state, the final state $\rho'_{SL}(t_{(k)})$ relevant to our calculation is given by 
\begin{equation}
\rho'_{SL}(t_{(k)}) = \sum_{\ell \in \{+,-\}} \mathbbm{1}\otimes \vert \ell \rangle \langle \ell \vert \, \rho_{SL}(t_{(k)}) \mathbbm{1}\otimes \vert \ell \rangle \langle \ell \vert.
\end{equation}
In this case, the change in heat due to the $k$-th measurement reads
\begin{equation}
    \delta Q_k^{\rm (meas)} = \mathrm{Tr}\left[H_{\rm tot}(t_{(k)}) (\rho_{SL}'(t_{(k)}) - \rho_{SL}(t_{(k)}))\right].
\end{equation}
Assuming the first change from $t_{(1)} = 0$ until $t_{(2)} = t_{(1)}+\delta t \sim \tau_{\rm comp}/n$, after a series of manipulations we end up with (keeping only up to second order terms) 
\begin{equation}
    \delta Q_2^{(\rm meas)} \approx  \frac{1}{2}(2p_1-1)\delta t^2 \Gamma \omega_L^2 = \frac{1}{2}(p_1-p_0)\frac{\tau_{\rm comp}^2\Gamma \omega_L^2}{(n-1)^2}.
\end{equation}
As in Sec.~\ref{sec:revisiting}, the measurement at $t_{(1)}$ acts on the prepared state $\vert + \rangle \langle + \vert$ and therefore contributes
$\delta Q_1^{(\rm meas)} = 0$.

Assuming that $\delta Q_2^{(\rm meas)} \approx \delta Q_k^{(\rm meas)}$ for the remaining $k=2,\ldots,n$, we end up with a total contribution of
\begin{equation}
    \Delta Q^{(\rm meas)} = \sum_{k=2}^n \delta Q_k^{(\rm meas)} \approx \frac{1}{2}(p_1-p_0)\frac{\tau_{\rm comp}^2\Gamma \omega_L^2}{n-1},
\end{equation}
which goes to zero as $\delta t \,\Gamma \ll 1$, i.e. as $n \to \infty$ at a faster rate than $\Gamma$.

It remains to estimate the entropy-related part of the monitoring cost, namely the contribution proportional to $\beta^{-1}\,H(\{p_\ell\}_\ell)$. The total contribution is given by $\beta^{-1}\,\sum_{i=2}^nH(\{p_{\ell_i}\}_{\ell_i})$ with each term dependent on incremental changes. We proceed in the same spirit as above by considering the first measurement interval $t^{(1)}=0$ to $t^{(2)}=t^{(1)}+\delta t$, expanding for small $\delta t$, and then assuming that the resulting leading-order expression is representative of the subsequent intervals. In this case, provided that we start with the lubricant in state $\vert + \rangle \langle + \vert$ we have that 
\begin{equation}\label{eq:Zeno_probability_jump}
    p_{\ell_2 \,= \,+} = 1 - \frac{\delta t^2 \omega_L^2}{4}+O(\delta t^3) \approx 1-\varepsilon_{\ell_2}, 
\end{equation}
leading to 
\begin{align*}
    H(\{p_{\ell_2\,=\,+},p_{\ell_2 \,=\,-}\}) &=- \sum_{\ell_2 \in \{+,-\}}p_{\ell_2} \ln p_{\ell_2}\\
    &= - (1-\varepsilon_{\ell_2})\ln(1-\varepsilon_{\ell_2})-\varepsilon_{\ell_2}\ln(\varepsilon_{\ell_2})\\
    &\approx \varepsilon_{\ell_2}(1-\ln(\varepsilon_{\ell_2})) \\
    &=\frac{\delta t^2 \omega_L^2}{4}\left(1- \ln \left(\frac{\delta t^2 \omega_L^2}{4}\right)\right),
\end{align*}
which converges to zero as $\varepsilon_{\ell_2}\to 0$. Assuming that for all $k$ the entropy term is approximately the same we end up with the total contribution of Zeno monitoring to be
\begin{align*}
&\sum_{i=2}^n H(\{p_{\ell_i}\}_{\ell_i}) \approx \\&(n-1)\, \frac{\tau_{\rm comp}^2 \omega_L^2}{4(n-1)^2}\left(1- \ln \left(\frac{\tau_{\rm comp}^2 \omega_L^2}{4(n-1)^2}\right)\right)\\
&\stackrel{n\to \infty}{\to}0.
\end{align*}

\subsubsection{Cost of preparing the lubricant}

The analysis above shows that the cost of monitoring the lubricant vanishes in
the Zeno limit, provided the lubricant is in the $\ket{+}$ state to begin with.
It remains to account for the cost of that preparation. We take the starting
point to be the thermal state $\rho_L^{\rm th}$ at inverse temperature $\beta$,
since this is the state the lubricant reaches for free by equilibrating with its
environment~\cite{lostaglio2019introductory}; any other starting state has higher free energy and would therefore
be cheaper to convert, so this choice is conservative.

Closely following the discussion present in Ref.~\cite{abdelkhalek2016energy}, we model the preparation as the following sequence. First, an $X$-basis
projective measurement is performed on $\rho_L^{\rm th}$, yielding $\pm$ with
equal probability. Second, if the outcome is $-$, a rotation about $z$ maps
$\vert - \rangle$ to $\vert + \rangle$; this unitary commutes with $H_L$ and is
therefore energetically free. Third, the register storing the outcome is reset so
that the apparatus can be reused. Applying Eq.~\eqref{eq:proj_M_cost} to this
sequence gives
\begin{align}
    \Delta E_{\rm proj}^{\rm th\rightarrow \ket{+}}
    &= \frac{\omega_L}{2}(p_1-p_0) + \frac{1}{\beta}\ln{2} \nonumber \\
    &= \frac{\omega_L}{2}\tanh{\left(\frac{\beta \, \omega_L}{2}\right)}
     + \frac{1}{\beta}\ln{2},
\end{align}
where $p_0$ and $p_1$ are here the thermal populations of the \emph{lubricant}
relative to $H_L$.

We note that in this case the ambiguity between the selective accounting of
Sec.~\ref{sec:revisiting} and the non-selective one of
Eq.~\eqref{eq:proj_M_cost} is immaterial. Because $H_L \propto Z$ while the
measurement is in the $X$ basis, both post-measurement states carry the same
energy, $\langle + \vert H_L \vert + \rangle = \langle - \vert H_L \vert -
\rangle = 0$, so the selective heat is independent of the outcome and coincides
with its average.

This cost can be a one-off cost if Zeno monitoring is continuous while the engine is working. In this case, $\Delta E_{\rm proj}^{\rm th\rightarrow \ket{+}}\approx0.7274$ with $\beta=2$ and $\omega_L=1$. This cost decreases as $\beta \rightarrow \infty$, leaving us with $\Delta E_{\rm proj}^{\rm th\rightarrow \ket{+}} = \frac{\omega_L}{2}$, and further minimized by reducing energy gap of the lubricant $\omega_L$.

\subsubsection{Entropy production}

Beyond the explicit energetic changes associated with measurement and strong coupling, the protocol also produces entropy through the irreversible removal of correlations between the working medium and the lubricant. In the previous subsection we accounted for entropy production associated with the measuring device following Ref.~\cite{abdelkhalek2016energy}. We now estimate the internal entropy production generated within the system-lubricant composite itself~\cite{landi2021irreversible}. 

In our protocol, each measurement disrupts the correlations that have built up during the preceding unitary pulse. Following Refs.~\cite{landi2021irreversible,Elouard2017}, we quantify the associated stochastic entropy production by comparing the probability of a forward measurement trajectory with that of the corresponding backward trajectory. 

Let $\rho_{SL}(t)$ denote, as before, the state of the composite system. We are performing selective measurements on the lubricant at regular time intervals $\delta t_{(k)}$ in the $X$ basis $\{\ket +,\ket -\}$. This procedure leads to a stochastic trajectory $\vec \ell = (\ell_1, \ldots , \ell_n)$. As previously mentioned, between measurements, the joint state $\rho_{SL}$ evolves according to a unitary generated by the total Hamiltonian from Eq.~\eqref{eq:total_hamiltonian_SL}. Assuming that within that interval $t_{(k)} \leq t \leq t_{(k+1)}$, the initial state is given by $\rho_{SL}^{(\ell_k)}(t_{(k)})$ the probability that after the unitary pulse of $\delta t$ we observe $\ell_{(k+1)}$ is given by
\begin{align}
    p({\ell_{k+1} \vert \ell_{k}}) =  \mathrm{Tr} \left[{U_k \, \rho_{SL}^{(\ell_k)}(t_{(k)})\, U_k^\dagger \,  \mathbbm{1}\otimes \vert \ell_{k+1} \rangle \langle \ell_{k+1} \vert} \right]
\end{align}
where we have momentarily used the notation $U_k \equiv U_{\rm tot}(t_{(k+1)},t_{(k)})$ to simplify the expression. In words, this gives the conditional probability of observing $\ell_{k+1}$ at the end of the interval $t_{(k)} \leq t \leq t_{(k+1)}$ given that the outcome observed at the start was $\ell_k$. The  conditional probability of obtaining a certain trajectory, is thus given by 
\begin{equation}\label{eq:forward probability}
    \mathcal{P}_F[\,\vec \ell \,\,] = p({\ell_n \vert \ell_{n-1}}) \cdots \,\,p({\ell_2 \vert \ell_1})\,\,p({\ell_1})
\end{equation}
where $p({\ell_1}) = 1$ if $\ell_1 = +$ in our case, since the initial state of the lubricant system is $\vert +\rangle \langle + \vert $. The probability that we observe $\ell_n$ after these $n-1$ steps is then given by
\begin{equation}
    p(\ell_n) = \sum_{\ell_1,\,\ldots,\,\ell_{n-1}} \mathcal{P}_F[\,\vec \ell \,\,].
\end{equation}

We denoted the conditional probability in Eq.~\eqref{eq:forward probability} as $\mathcal{P}_F[\,\vec \ell \,\,]$ since this is interpreted as the probability of the ``forward'' trajectory.  To quantify the amount of entropy produced by following the trajectory $\vec \ell$, we consider the reverse protocol. Starting from the final state after the observation of $\ell_n$ we evolve backwards by implementing the same procedure as above but changing $U_k \to U_k^\dagger$. This time-reversed trajectory, that we denote as $\overleftarrow{\ell}$ leads to the conditional probability given by 
\begin{equation}
    \mathcal{P}_B[\, \overleftarrow \ell \,\,] = p(\ell_{1} \vert \ell_{2})\, \cdots \,p(\ell_{n-1} \vert \ell_{n})\,  p(\ell_n).
\end{equation}
The stochastic entropy production defined via forward and backwards  trajectories~\cite{crooks1998nonequilibrium,evans1993probability,landi2021irreversible}
\begin{equation}
    \sigma(\ell_n) = \ln \left(\frac{\mathcal{P}_F[ \, \vec \ell \,\,]}{\mathcal{P}_B[\,\overleftarrow{\ell}\,]} \right)
\end{equation}
simplifies, in this case, to~\cite{Elouard2017,landi2021irreversible}
\begin{equation}\label{eq:trajectory_entropy_prod}
    \sigma(\ell_n) = \ln \left(\frac{p(\ell_1)}{p(\ell_n)}\right). 
\end{equation}
As we start with the lubricant in state $\vert + \rangle \langle + \vert$ we have that $p(\ell_1) = 1$, and therefore the entropy produced along the trajectory is, assuming that the result is $\ell_n = +$, given by 
\begin{align}\label{eq:Zeno_entropy_prod}
    \sigma (\ell_n = +) &= -\ln(p(\ell_n)) \\
    &= - \ln\left( \sum_{\ell_1,\ldots,\,\ell_{n-1}} p(\ell_n|\ell_{n-1})\cdots \,p(\ell_2|\ell_1)p(\ell_1) \right)\nonumber \\
    &\approx - \ln [(1-\varepsilon_{\ell_2})^{n-1}] \nonumber \\
    &\stackrel{n\to \infty}{\to}0 \nonumber,
\end{align}
where we approximated the probability of each jump to be the same and arbitrarily small in the Zeno limit. Note, importantly, that above we have used the fact that in the Zeno limit $\varepsilon_{\ell_2}=O(1/n^2)$, i.e. we have used the fundamental quadratic short-time behavior that is the key ingredient to the Zeno effect. 

\subsection{Cost of imperfect thermalization}\label{subsec:imperfect_therm}

So far we have concentrated on accelerating the work strokes. Once lubrication makes those strokes effectively short, the thermalization stages become the dominant contribution to the cycle time in a passive thermalization scheme. It is therefore necessary to examine explicitly when finite-time thermalization becomes the main bottleneck for power generation.

In the weak-coupling regime, the qubit asymptotically approaches a Gibbs state in the long-time limit, whereas in the strong-coupling regime the joint system-bath state relaxes towards a \emph{global} Gibbs state~\cite{breuer2002theory, rivas2012open, Trushechkin2022, manzano2020lindblad}. This global Gibbs state may incur additional correlations between the bath and the working qubit, which in turn lead to high decoupling costs. For an analysis of heat conductance in the limit of strong couplings between the working system and a heat bath we refer to Refs.~\cite{rivas2020strong,ivander2022strong,talkner2020strongcouplingthermo}, and to Ref.~\cite{burgarth2019generalized} for a description of the evolution in the context of strong damping Zeno limits $\gamma_{\rm h}, \gamma_{\rm c} \to \infty$. 

In the remainder of this subsection we adopt the standard weak-coupling description of thermalization. In that regime the main limitation is not an additional switching penalty but the long time required for the working medium to relax close to the relevant Gibbs state. Under the rotating-wave approximation in the system-bath interaction, the local Gibbs state of the working qubit, $\rho_{\beta}^{\rm th} = e^{-\beta H_S}/\mathrm{Tr}[e^{-\beta H_S}]$ is a steady state of the reduced dynamics. In deriving the Lindblad equation, two assumptions are particularly relevant for our analysis: (i) the initial system-bath state is uncorrelated, and (ii) due to the \emph{Born approximation}~\cite{breuer2002theory}, the system-bath state remains approximately factorized at all times, i.e.
\begin{equation}
    \rho_{SB}(t) \approx \rho_S(t) \otimes \rho_{B,\beta}^{\rm th}
\end{equation}
(where $\rho_{B,\beta}^{\rm th}$ is the thermal state of the bath), up to terms of order $O(\gamma^3)$, where $\gamma$ is the coupling strength. This approximation reflects the fact that correlations built up between the system and the bath remain negligible precisely because the coupling is weak, which is satisfied in the regime where $\gamma \ll \omega $. As a consequence, the final state is effectively a product state, so no additional cost is required to decouple the two subsystems.

One can characterize thermalization either through population relaxation or through the contraction of the full state toward the steady state. The former arises from the rate equation for the populations, and the latter follows from a rigorous definition of the mixing time based on the Liouvillian gap~\cite{rivas2012open, breuer2002theory}. For the single-qubit model considered here, these notions lead to the same characteristic decay rate, so we use the trace-distance relaxation timescale in the discussion below. Strictly speaking, this is a relaxation time rather than an exact finite-time thermalization time, since the approach to the Gibbs state is asymptotic.

Recall that, for any primitive GKLS semigroup $\{e^{t \mathcal{L}}\}_t$ the distance of a state at time $t$ from a steady state of the dynamics is
\begin{equation}
    \Vert  \rho(t)-\rho^{\rm ss} \Vert _1 \leq Ce^{-\lambda_{\rm gap}t},
\end{equation}
where $C$ is a constant, and $$\lambda_{\rm gap}= \min_{\lambda \in \mathrm{spec}(\mathcal{L})}(-\mathfrak{Re}(\lambda))$$ is known as the \emph{Liouvillian  gap}~\cite{mori2020resolving}. For the Lindbladians $\mathcal{L}_{\rm c}$ and $\mathcal{L}_{\rm h}$ discussed in Sec.~\ref{sec:background_section} a direct calculation (applied to our single-qubit case) yields the gaps
$$\lambda_{\rm gap}^{\rm (cold)} = \frac{\gamma_{\rm c}}{2}(2 n_{\rm c}+1), \quad \lambda_{\rm gap}^{\rm (hot)} = \frac{\gamma_{\rm h}}{2}(2 n_{\rm h}+1),$$
where $n_{{\rm c}} = (e^{\omega \beta_{\rm c}}-1)^{-1}$ and $n_{\rm h} = (e^{\Omega \beta_{\rm h}}-1)^{-1}$ are the thermal photon number at the inverse temperature of respective baths.

Under these considerations, the thermalization time for the cold thermalization stroke then is  approximately given by
\begin{equation}
    \tau_{\rm cold} \approx \frac{1}{\lambda_{\rm gap}^{\rm (cold)}} \approx \frac{2}{\gamma_{\rm c}(2n_{\rm c}+1)},
\end{equation}
and similarly for the hot thermalization stroke
\begin{equation}
    \tau_{\rm hot} \approx \frac{1}{\lambda_{\rm gap}^{\rm (hot)}} \approx \frac{2}{\gamma_{\rm h}(2n_{\rm h}+1)}.
\end{equation}

\begin{figure}[t]
    \centering
    \includegraphics[width=\columnwidth]{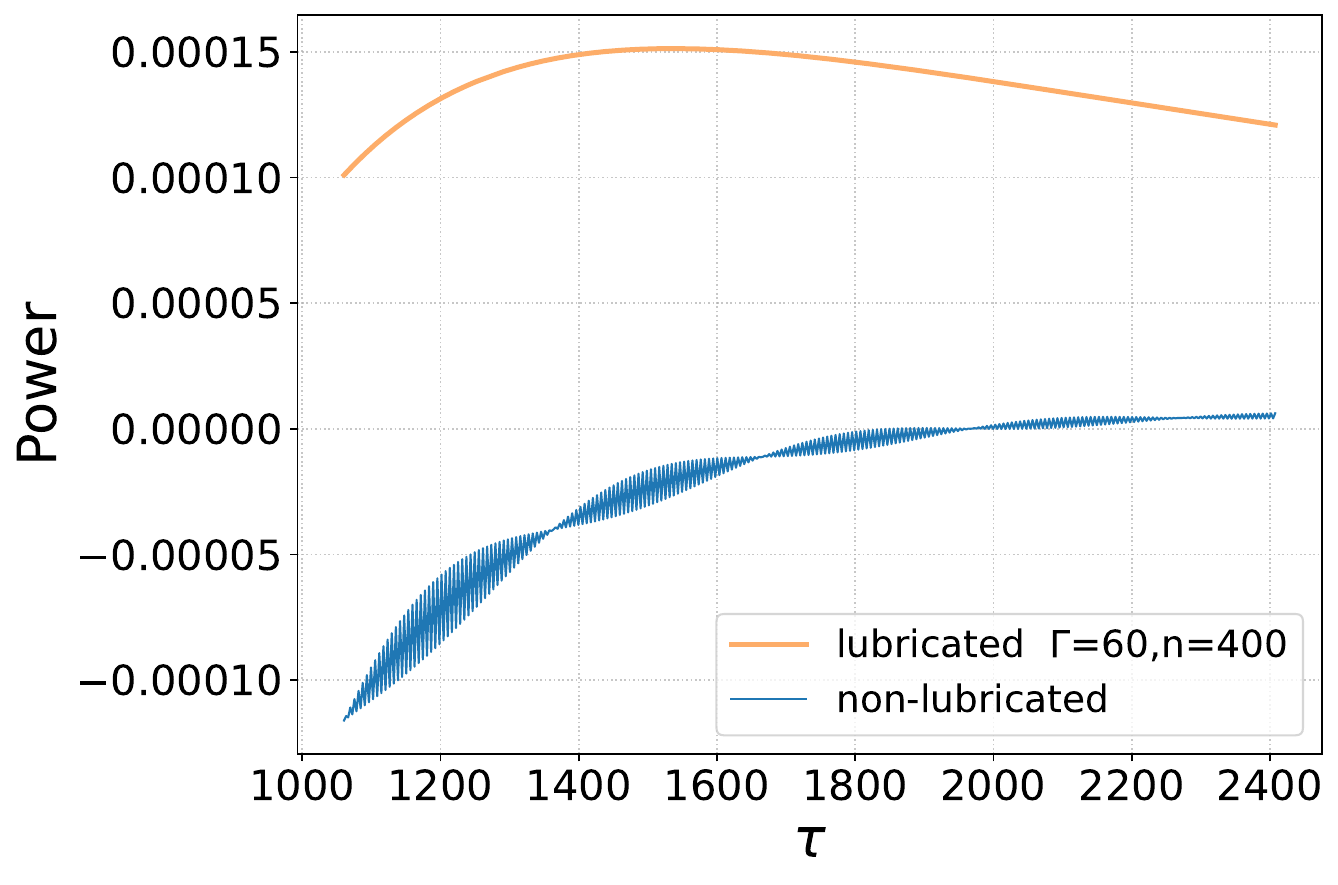}
    \caption{\textbf{Power generation in the regime of weak-coupling between working system and thermal baths.} Within our lubrication scheme, the work strokes can be made significantly shorter than the thermalization strokes. In our passive thermalization setting, choosing $\gamma_{\rm h}=\gamma_{\rm c}=0.005$ makes the thermalization timescale orders of magnitude larger than the duration of the work strokes. The cycle duration is varied by changing $\tau_{\rm hot}$ in the range $300 \leq \tau_{\rm hot} \leq 800$, with $\tau_{\rm cold}=2\tau_{\rm hot}$. For $\tau \lesssim 2000$, the power remains negative in the non-lubricated case, since more work is performed during the expansion stroke than is extracted during compression. Parameters: $\tau_{\rm comp} = 5$, $\tau_{\rm exp} = 2.5$, $\Gamma = 60$, $n = 400$, $\omega = \omega_L = 1$, $\Omega_0 = 3.01105$, $T_{\rm c} = 0.5$, $T_{\rm h} = 3$, and $\gamma_{\rm c} = \gamma_{\rm h} = 0.005$.}\label{fig:imperfect_thermalization}
\end{figure}

We note, however, that in our case the weak-coupling limit requires $\gamma_{\rm h}, \gamma_{\rm c} \ll 1$, i.e.\ significantly smaller than the value $\gamma_{\rm h}=\gamma_{\rm c}=0.5$ we have considered in our previous simulations. In Sec.~\ref{sec:Zeno_assisted_quantum_heat_engine}, we chose these values so as to obtain comparable cycle times for the heat and work strokes, which makes the improvements more easily visible in the simulations. If instead we take, for instance, $\gamma_{\rm h},\gamma_{\rm c}\approx 0.005$, then $\tau_{\rm cold}$ and $\tau_{\rm hot}$ must be two orders of magnitude larger. This is shown in Fig.~\ref{fig:imperfect_thermalization}. As seen there, lubrication continues to significantly improve power generation. Here, we consider the case in which the initial and final states during the work strokes are qubit states that have been only imperfectly thermalized. In this regime, the non-lubricated engine cannot generate positive power until much longer thermalization times, whereas the lubricated engine can.

\subsection{Driving cost}

In our model, the time-dependent Hamiltonian $H_{\mathrm{tot}}(t)$ in Eq.~\eqref{eq:total_hamiltonian_SL}  may be understood as a modification of $H_{S}(t)$ arising from an externally implemented control field interacting with the working system and the lubricant. Our earlier semiclassical formulation of work, from Eq.~\eqref{eq:total_work_explicit}, ignored the energetic cost of that control, as is standard in many treatments of QHEs ~\cite{Dann2023}. Here, however, we estimate the cost of maintaining the strong coupling between the working qubit and the lubricant captured by $H_{SL}(t)$, the interaction term in Eq.~\eqref{eq:total_hamiltonian_SL}.~\footnote{Note that we do not consider the cost of driving associated to the term $\frac{f(t)}{2}X$ from $H_S(t)$. This term is simply taken as \emph{contributing} to the net work change from Eq.~\eqref{eq:total_work_explicit}. Similarly, if we interpret the working system and the lubricant as a new composite lubricant system, thus accounting for the work generated using Eq.~\eqref{eq:s.c.total_work_joint}, a semiclassical formulation of work would not consider this as a cost but merely as being part of the Hamiltonian terms contributing to the net work changes.}

Naturally, the control cost has been considered in the context of counter-diabatic drive~\cite{guery-odelin_shortcuts_2019, zheng2016counterdiabaticcost,Abah2019,torrontegui2017energy,Campbell2017,Kiely2022,santos2015superadiabatic,abah2018performance}, where a \emph{control-cost functional} is defined
\begin{equation}\label{eq:cost-functional}
    C_{\rm drive}(\tau) = \nu \int_0^{\tau} \left\Vert  H_{\rm drive}(t) \right \Vert_{\rm F}  \,\mathrm d t,
\end{equation}
such that $\nu$ is a setup-dependent constant~\footnote{The value of $\nu$ can depend on the physical implementation, field geometry, impedance, losses, and possible recovery of supplied energy, as it describes the relation between the mathematical modeling of the drive and the actual energetic cost of physically maintaining the drive during the stroke.}, $\tau$ is the duration of the drive, $\Vert B\Vert_{\rm F} =\sqrt{\mathrm{Tr}\left[B^\dagger B\right]}$ denotes the Frobenius norm, and $H_{\rm drive}$ is a control operator of interest. 

We note that the control-cost functional in Eq.~\eqref{eq:cost-functional} is nonzero even for an operator proportional to identity.  This relates to the question of whether a thermodynamic cost refers to the work done on the quantum system---which assigns zero cost to a global phase shift---or the energy required to maintain the control field, which is nonzero whenever the drive is active regardless of its effect on the state, as with the cost functional. The second interpretation is adopted here: the field costs energy to maintain regardless of its effect on the quantum state. Thus, $C_{\rm drive}$ can be understood as a semiclassical hardware-dependent estimate of the power required to maintain the strong externally generated system-lubricant coupling (associated, in some settings, to the entropy produced by the classical apparatus generating the control fields~\cite{Kiely2022}), not as a fundamental functional estimate.

This distinction is also reflected in how the cost functional is used in the quantum-control and shortcut-to-adiabaticity literature~\cite{Campbell2017,Kiely2022,santos2015superadiabatic}. In the context of counter-diabatic driving, this quantity estimates the power required for achieving a transitionless dynamics, while taking the initial time-dependent drive to be free. The Hamiltonian is usually written as $H_{\rm CD}(t) = H_0(t) + \tilde H(t)$, so only the cost of implementing $\tilde H$ with classical control fields is taken into account. In our case, the transitionless (adiabatic) dynamics \emph{emerges} from the Zeno limit. Because of that, the meaningful choice here is not to consider the counter-diabatic drive, but the interaction term responsible for its emergence given by $H_{SL}(t)$.

In our case, the working qubit and lubricant experience $H_{\rm tot}(t)$, which implies that the resulting control drive term is given by the time-dependent contributions in this Hamiltonian, namely 
\begin{align}
    H_{SL}(t) &= \Gamma  \cos(\theta_t) Z \otimes X+ \Gamma \sin(\theta_t) X \otimes X. 
\end{align}
We apply Eq.~\eqref{eq:cost-functional} to our protocol taking into account that the Zeno driving evolution happens in pulses interrupted by measurements on the lubricant. Hence, the total duration of the compression stroke is divided in $n-1$ intervals, leading to $\delta t=\tau_{\rm comp}/(n-1)$
\begin{align}
    C_{\rm drive}(\tau_{\rm comp}) &= \nu\sum_{k=1}^{n-1} \int_{t_{(k)}}^{t_{(k)}+\delta t} \mathrm dt \,  \Vert H_{SL}(t)\Vert_{\rm F}  \nonumber \\
    &= \nu\sum_{k=1}^{n-1} \int_{t_{(k)}}^{t_{(k)}+\delta t} \mathrm dt \,  2 \Gamma \nonumber \\
    &= 2\nu\,\Gamma (n-1) \delta t \nonumber\\
    &= 2\nu\,\Gamma \tau_{\rm comp}.
\end{align}
The same result is obtained for the expansion stroke, substituting $\tau_{\rm comp}$ by $\tau_{\rm exp}$. This control-cost is such that $\nu$ has dimensions of inverse time (equivalently, of energy in our units). We can relate it to the cost of the power required to generate the strong coupling drive by dividing it by the total cycle time. In that case, we note that the contributions per cycle are given by
$$C_{\rm drive}(\tau_{\rm comp})+C_{\rm drive}(\tau_{\rm exp}) \approx 2 \nu \Gamma (\tau_{\rm comp}+\tau_{\rm exp}),$$ 
which needs to be deducted from the theoretical prediction in $P_{\rm tot} = -\Delta W_{\rm tot}/\tau$ (using the work from Eq.~\eqref{eq:total_work_implicit}) leading to
\begin{equation}\label{eq:power_driving_cost}
    P_{\rm net} 
    = P_{\rm tot} - \frac{2 \nu \Gamma (\tau_{\rm comp}+\tau_{\rm exp})}{\tau}.
\end{equation}
This points to a trade-off between power generated and the cost of implementing the drive.

Among the costs considered in this section, the external driving cost is the most serious practical limitation. Successful lubrication requires $\Gamma$ to be large, while the control overhead scales as $O(\Gamma)$. This creates a nontrivial trade-off between improving the work stroke and paying for the control field that implements it.

One possible mitigation is hardware dependent: the prefactor $\nu$ in Eq.~\eqref{eq:power_driving_cost} is setup-specific and may be small in favorable experimental platforms. A more promising route is to work in architectures where the relevant strong coupling is native to the device rather than generated dynamically by an external field. Examples include nearby NV centers with geometry-fixed dipole coupling or strongly interacting Rydberg platforms~\cite{carmele2014opto}. In such cases the overhead need not scale as the cost of a continuously applied dynamical drive, and the conceptual advantages of the Zeno-assisted protocol may be retained.

\begin{mainresultsbox}{\hypertarget{box_2}{BOX 2: Summary of thermodynamic costs}}
\vspace{0.15cm}
{\parbox[h]{\dimexpr\linewidth\relax}{

\Checkmark~\textbf{Coupling and decoupling:} Vanishes in the ideal Zeno limit, where the joint state remains a product state and the populations are preserved. Away from that limit, strong coupling alone leaves a penalty growing with
$\Gamma$.\newline 

\Checkmark~\textbf{Switching on and off:} Equal in magnitude to the decoupling cost, and therefore vanishing under the same conditions.\newline 

\Checkmark~\textbf{Entropy production:} The stochastic entropy production along a measurement trajectory vanishes as $n \to \infty$, as expected from the evolution being unitary within a fixed Zeno subspace.\newline 

\Checkmark~\textbf{Zeno jumps:} Each jump between Zeno subspaces dissipates an energy of order $\Gamma$, but such jumps are suppressed as $O(1/n)$, so their averaged contribution vanishes in the Zeno limit.\newline 

\Checkmark~\textbf{Lubricant preparation:} Bounded by
$\omega_L/2 + \beta^{-1}\ln 2$ and incurred once (in the ideal Zeno limit) rather than per cycle, so it is amortized over repeated operation. It decreases with increasing $\beta$ and decreasing the lubricant gap $\omega_L$.\newline 

\Checkmark~\textbf{Imperfect thermalization:} Not specific to lubrication, but it limits the attainable power once the work strokes are short. Lubrication still improves performance markedly; accelerating the heat strokes is left to future
work.\newline 

\XSolidBrush~\textbf{External driving:} If the strong coupling is generated by an external field, and the lubricant is not counted as part of the working medium, maintaining the drive costs $O(\Gamma)$. This overhead is hardware dependent and absent when the coupling is native to the platform.\newline 

\usym{2753}~\textbf{Costs not considered:} Limitations from quantum speed limits and from Zeno-adapted thermodynamic uncertainty relations, as well as practical overheads such as non-ideal or finite-duration measurements, are left for future
work. \newline 
}}\par
\vspace{0.4cm}
\end{mainresultsbox}

\section{Discussion and outlook}\label{sec:discussion_and_outlook} 

In this work we have introduced a Zeno-assisted lubrication protocol for stroke-based finite-time quantum heat engines. The central idea is to suppress the coherence-generating transitions responsible for quantum friction by coupling the working medium to an auxiliary lubricant and using strong coupling together with frequent monitoring to confine the dynamics to a suitable Zeno subspace. In the ideal Zeno limit, we show the resulting effective Hamiltonian contains the counter-diabatic term required for transitionless driving during the work strokes.

We have also examined the extent to which this idealized control advantage survives once a more complete thermodynamic accounting is performed. This required revisiting both the definition of extracted work and the additional costs associated with switching, monitoring, control, and imperfect thermalization. Our findings are summarized in~\secondbox. The resulting picture is more nuanced: the protocol does provide a clear control-theoretic shortcut to adiabaticity, but its practical value depends strongly on which physical resources are counted and on how the protocol is implemented.

Within the set of costs analyzed here, the dominant practical limitation is the cost of implementing the strong-coupling drive through an external control field. Other challenges---such as isolating the lubricant during the heat strokes, performing sufficiently sharp and frequent measurements, and controlling timing errors---are also relevant, but they are more platform specific. This suggests that the protocol is most promising in devices where strong coupling is naturally available or where measurement and control overhead can be made comparatively small.

It is also important to stress that the present protocol does not eliminate the cost of suppressing nonquasistatic excitations; rather, it transfers the burden from slow quasistatic driving to an externally engineered Zeno stabilization mechanism involving strong coupling and monitoring. In this sense, the protocol should be understood not as a free circumvention of finite-time thermodynamic constraints, but as a redistribution of control and thermodynamic resources. In other words, our protocol does not simply propose a Zeno-assisted engine but articulates a coherent framework for understanding Zeno stabilization as an emergent route to counter-diabatic thermodynamic control. The Zeno control should itself be regarded as a thermodynamic resource. From this perspective, the protocol exemplifies how structured control resources can reshape the accessible thermodynamic trajectories of quantum devices.

\subsection{Relation with previous work}

Our results can be placed within the broader context of strong-coupling quantum thermodynamics~\cite{talkner2020strongcouplingthermo}. Much of that literature has focused on strong coupling between a working medium and a bath, and on the consequences of such coupling for efficiency, power, and nonequilibrium thermodynamic structure~\cite{rivas2020strong,gonzalez2024hamiltonian}. Here the strong interaction plays a different role: it is used as a control resource that, together with monitoring, generates an effective shortcut to adiabaticity. The auxiliary system is therefore not merely dressing the thermodynamics; it is actively shaping the work-stroke dynamics.

Other works have studied work extraction in strongly coupled settings, but typically not in the presence of a time-dependent coupling generated by the drive itself~\cite{perarnau2018strong}, and without the realization that in certain regimes these limits result in the formation of an effective QZD. Importantly, Ref.~\cite{perarnau2018strong} showed that, for cyclic work extraction under their assumptions, strong coupling can have a detrimental effect on the work output. Our results do not contradict that conclusion, since the assumptions are different: in our protocol both the strong coupling term and the effective unitary acting on the working medium are time dependent, and the role of the auxiliary system is not merely to dress the thermodynamics but to generate an effective shortcut to adiabaticity. 

A significant conceptual link also exists between our protocol and dephasing-assisted QHEs. In particular, we have compared our results with the dephasing-bath model studied by some of us in Ref.~\cite{weber2023thermodynamiccostspuredephasing}, and found that the optimal regimes numerically identified there can also be interpreted as arising from strong-coupling manifestations of the QZD. This connection provides a unifying perspective on two seemingly distinct lubrication strategies---one based on dissipative engineering and the other on quantum measurement---in specific regimes and suggests that both can be understood as different realizations of constrained dynamics in an appropriate subspace.

Our protocol is also closely related to the notion of \emph{Zeno dragging}, namely the use of frequent measurements or kicks to steer a system adiabatically within a constrained subspace~\cite{hacohen2018incoherent,lewalle2024optimal}. This approach provides a closely related dual picture, which is obtained when one uses frequent \emph{time-dependent} strong measurements on a composite system and its lubricant rather than on the working medium itself~\cite{burgarth2013nonabelian}. This duality suggests that a broader family of Zeno-based control protocols may be available for finite-time quantum thermodynamics. Recently, Barontini~\cite{barontini2025quantumzeno} has considered this form of Zeno-assisted implementation where, instead of lubricating an existing engine, they substitute the usual unitary drive with a Zeno-dragging protocol. Also there, one finds that such types of engines incur promising improvements on the power and efficiency with negligible costs. 

On a different note, our findings may be of independent interest to the quantum-control and shortcuts-of-adiabaticity literature as we provide one such shortcut induced by the Zeno effect. Again, our approach uses different manifestations of the Zeno effect, and a related perspective may be taken where a form of shortcut can be obtained via Zeno-dragging (i.e. implementing time-dependent projections) in such a manner that they carry out the generator of the quasistatic transporter as recently discovered by Ref.~\cite{delcampo2026shortcutsadiabaticityadaptivequantum}.

\subsection{Future directions}
A natural next question is whether other manifestations of QZD can be combined in a similarly useful way. In particular, strong damping can also induce effectively unitary Zeno dynamics~\cite{burgarth2019generalized}. This raises the possibility of replacing explicit projective monitoring by suitably engineered continuous dissipation, thereby connecting the present protocol more directly to dephasing-assisted lubrication schemes such as Ref.~\cite{weber2023thermodynamiccostspuredephasing}.

One may then interpret the resulting dynamics as exhibiting two distinct manifestations of continuous QZD: one arising from strong coupling and one arising from time-dependent strong damping~\cite{burgarth2019generalized,diMeglio2024timedependent}. From this viewpoint, the strong Markovian environment acts as an effective measuring apparatus, and induces Zeno subspaces defined by the peripheral projections of the Lindbladian~\cite{burgarth2019generalized}. An interesting open question is whether these two manifestations of the QZD can also be mathematically shown to produce a counter-diabatic drive that is useful for lubricating work strokes. For the ideas discussed here to apply to the type of QHEs considered in Sec.~\ref{sec:relationship_between_decoherence_and_zeno}, it would be necessary to generalize the theorems of Refs.~\cite{burgarth2019generalized,burgarth2022oneboundtorulethem} to the case of \emph{time-dependent} and \emph{unbounded} operators generating quantum dynamical semigroups more typical in open quantum systems. Recently, Ref.~\cite{burgarth2026rotatingwavesecularapproximationsopen} has structured the first results in this direction investigating the time-dependent case. 

More broadly, the present framework invites several (some mentioned in the last point of~\secondbox). One obvious direction is to move beyond a single-qubit working medium and study multilevel, multipartite, or many-body working substances, where both the structure of the Zeno subspaces and the finite-time control problem become richer~\cite{hacohen2018incoherent,burgarth2014exponential}. Another is to relax idealizations of the monitoring process---for example by allowing nonprojective measurements, finite measurement duration, or explicit timekeeping costs~\cite{woods2023autonomous,lautenbacher2025physically}. These generalizations are likely to be essential for any realistic comparison between different Zeno-based thermodynamic control strategies.

It will also be important to analyze our protocol in the context of thermodynamic uncertainty relations~\cite{Horowitz2019,Hasegawa2019,Van_Vu2025,Prech2025}. Numerically, we observed that the fluctuations decrease as the Zeno regime becomes stronger (as expected, since in the Zeno limit the evolutions are unitary within a certain Zeno subspace), which may indicate an unexplored link between enhanced stabilization and entropy production. Quantifying this trade-off would be valuable for assessing the fundamental performance of Zeno-assisted thermal machines and for determining the consistency bounds governing their operation. More generally, our findings suggest that a careful study of the interplay between QZD, fluctuation theorems~\cite{campisi2010fluctuation,campisi2011influence}, and thermodynamic uncertainty relations~\cite{tanvanvu2023thermodynamic,Van_Vu2025} is a particularly promising direction for future work. 

A related concern arises from the notion of a quantum speed limit~\cite{deffner2017qsls}. Concretely, these impose fundamental bounds relating the norm of a time-dependent Hamiltonian with how fast we can drive a quantum system. These have deep connections with thermodynamic uncertainty relations~\cite{tanvanvu2023thermodynamic,vovantuan2020unified,yunoki2025quantumspeedlimitquantum} and will effectively impact fundamental power generation in Zeno-assisted QHEs such as those we have introduced.

In conclusion, the quantum Zeno dynamics offers a clean and conceptually distinctive route to lubricating finite-time quantum heat engines. By recasting friction suppression as a subspace-control problem, it connects quantum thermodynamic performance with strong coupling, monitoring, and shortcut-to-adiabaticity techniques in a unified way. Whether this route is practically advantageous will depend on the physical platform and on the cost model adopted, but as a theoretical framework it opens a promising line of inquiry at the interface of measurement, control, and quantum thermodynamics.\newline

\section*{Author contributions and AI use disclosure}

SM and RW contributed equally to this work. SFH and MBP supervised the project and reviewed the manuscript. AI tools were used for language editing and polishing, \texttt{LaTeX} assistance, identifying typographical and other minor errors, surveying the literature, assisting with coding, and generating and refining plots. AI tools were also used during the development of some initial calculations and to assist with routine or tedious mathematical manipulations. The main theoretical results of the paper were developed without AI assistance. The same applies to the theoretical results presented in Appendix~\ref{app:friction_terms_work}. 
The initial manuscript and all initial text were written by the authors and subsequently refined through human-AI collaboration. All figures were created by the authors without AI-assistance and are not AI-generated. We also used \href{https://www.wolfram.com/mathematica/}{\texttt{Mathematica}} for analytical calculations. 

\section*{Replication material}

All code used for implementing the simulations, as well as the required data for replicating our findings, are openly available and  can be found in a \href{https://doi.org/10.5281/zenodo.22734267}{\texttt{Zenodo}} repository~\cite{zenodo2026zenoassisted}.

\begin{acknowledgments}
We would like to thank Raphael Weber, Rudi B. P. Pietsch, Adolfo del Campo, Som Kanjilal, Franklin L. S. Rodrigues, Filipa Peres, Lennart Bosch, Michele Campisi, Giovanni Di Meglio, Nicolás T. Domingues, and Henning Kirchberg for fruitful discussions. The authors acknowledge support from the European Research Council (ERC) under the European Union's Horizon 2020 research and innovation programme (Grant Agreement No. 856432, HyperQ), the EU-Project SPINUS (Grant No. 101135699) and from the Alexander von Humboldt Foundation.
\end{acknowledgments}

\bibliography{bibliography}

\begin{appendix}

\section{Quantum friction contribution to work}\label{app:friction_terms_work}

Let us consider that the state of our working medium is pure, and evolves according to the Schrödinger equation
\begin{equation}
    i\frac{\mathrm d}{\mathrm dt}\vert \psi(t)\rangle = H(t) \vert \psi(t) \rangle.  
\end{equation}
Here, $t \mapsto H(t): \mathbbm{C}^2 \to \mathbbm{C}^2$ is a time-dependent self-adjoint operator for every $t$. With respect to its instantaneous eigenbasis $\{\vert 0_t\rangle, \vert 1_t\rangle \}$, where 
\begin{equation}
    H(t) \vert i_t\rangle = (-1)^i\frac{\varepsilon(t)}{2}\vert i_t\rangle 
\end{equation}
we can write the state of the system as \begin{equation*}
\vert \psi(t) \rangle = c_0(t)\vert 0_t\rangle + c_1(t) \vert 1_t\rangle. 
\end{equation*}

The evolution of the coherences $c_0(t)$ and $c_1(t)$ is given by the equations
\begin{align}\label{eq:coupled_coef}
    \dot c_1 &= - \langle 1_t\vert \dot 0_t\rangle c_0-\langle 1_t\vert \dot 1_t\rangle c_1+i\frac{\varepsilon(t)}{2}c_1, \\
    \dot c_0 &= - \langle 0_t\vert \dot 1_t\rangle c_1-\langle 0_t\vert \dot 0_t\rangle c_0 -i\frac{\varepsilon(t)}{2}c_0.
\end{align}
These equations follow directly from the Schrödinger equation after expanding the state in the instantaneous eigenbasis. They show that the terms proportional to
\begin{equation}
    \langle 1_t\vert \dot 0_t\rangle = -\frac{\dot \theta_t}{2}
\end{equation}
couple the two amplitudes. Above, we have used Eq.~\eqref{eq:instantaneous_states} from the main text.  This coupling is precisely what generates coherence. In the quasistatic regime, $\dot \theta_t \approx 0$, these terms become negligible. 

Let us consider a time-dependent two-level Hamiltonian with $H_S(t)$ which is a combination of constant Pauli term $Z$ and time-dependent Pauli $X$ term, so that the instantaneous eigenbasis $\{\vert 0_t \rangle ,\vert 1_t\rangle \}$ can be viewed as a rotation relative to some angle $\theta_t$ as we have considered in the main text. We begin from the instantaneous spectral decomposition
\begin{equation}
    H_{S}(t) =  \frac{\varepsilon(t)}{2}R(t)
\end{equation}
noting that both eigenvalues $\left\{\pm \frac{\varepsilon(t)}{2} \right\}$ and eigenvectors $\{\vert 0_t \rangle, \vert 1_t \rangle\}$ are time-dependent. Let us momentarily denote $\varepsilon_0(t) = \frac{\varepsilon(t)}{2}$  and $\varepsilon_1(t) =-\frac{\varepsilon(t)}{2}$. 
 
Taking a time derivative on both sides of $H_S(t)\ket{i_t} = \varepsilon_i(t)\ket{i_t}$ one obtains the following:
\begin{align}
    \frac{\mathrm d}{\mathrm dt}\big( H_{S}(t)\ket{i_t} \big) &= {\dot H_{S}(t) \ket{i_t}} + H_{S}(t) \frac{\mathrm d}{\mathrm dt}\ket{i_t} \nonumber \\
    =\frac{\mathrm d}{\mathrm dt} \big(\varepsilon_i(t)\ket{i_t} \big) &= \dot \varepsilon_i(t) \ket{i_t} + \varepsilon_i(t) \frac{\mathrm d}{\mathrm dt} \ket{i_t}.
\end{align}
We now isolate the ${\dot H_{S}(t)}$ term 
\begin{equation}\label{eq:term_H_dot_comp}
    \dot H_{S}(t)\,|i_t\rangle = \dot\varepsilon_i(t)|i_t\rangle + \varepsilon_i(t)\frac{\mathrm d}{\mathrm dt}|i_t\rangle - H_{S}\frac{\mathrm d}{\mathrm d t}|i_t\rangle,
\end{equation}
and turn our attention to the expression for obtaining the work during the compression stroke which is an integral over $\mathrm{Tr}[\rho \dot H_{S}]$ as given by Eq.~\eqref{eq:work}. We can thus calculate this term as
\begin{widetext}
\begin{align*}
    \mathrm{Tr}[\rho \dot H_{S}]
        &= \sum_{i}\langle i_t| \rho \dot H_{S}|i_t\rangle = \sum_{i,j}\langle i_t| \rho |j_t\rangle \langle j_t| \dot H_{S}|i_t\rangle = \sum_{i,j} \rho_{i_t j_t} \langle j_t| \dot H_{S}|i_t\rangle \\
        &\stackrel{\text{Eq.}~\eqref{eq:term_H_dot_comp}}{=} \sum_{i,j} \rho_{i_tj_t} \langle j_t| \Big(\dot\varepsilon_i(t)|i_t\rangle + \varepsilon_i(t)\frac{\mathrm d}{\mathrm dt}|i_t\rangle - H_{S}\frac{\mathrm d}{\mathrm d t}|i_t\rangle\Big) \\
        &= \sum_{i,j} \rho_{i_t,j_t}  \Big( \delta_{ij}\dot\varepsilon_i(t) + \langle j_t| \varepsilon_i(t) \frac{\mathrm d}{\mathrm dt}| i_t\rangle - \langle j_t|H_{S}\frac{\mathrm d}{\mathrm dt}|i_t\rangle \Big) \\
        &= \sum_{i,j} \rho_{i_tj_t} \Big( \delta_{ij}\dot\varepsilon_i(t) + \langle j_t| \varepsilon_i(t) \frac{\mathrm{d}}{\mathrm dt}|i_t\rangle - \langle j_t| \varepsilon_j(t) \frac{\mathrm d}{\mathrm{d}t}|i_t\rangle \Big) \\
        &= \sum_{i,j} \rho_{i_tj_t} \Big( \delta_{ij}\dot\varepsilon_i(t) + (\varepsilon_{i}(t) - \varepsilon_{j}(t)) \langle j_t|\frac{\mathrm d}{\mathrm dt}\vert i_t\rangle \Big) = \sum_{i,j} \rho_{i_tj_t} \Big( \dot \varepsilon_{i}(t) \delta_{ij} + \varepsilon_{ij}(t) \langle j_t| \frac{\mathrm d}{\mathrm dt}\vert i_t\rangle \Big),
\end{align*}
\end{widetext}
where $\varepsilon_{ij}(t) \equiv \varepsilon_i(t) - \varepsilon_j(t)$ is the difference between the eigenvalues. For our case, we have that $\varepsilon_i(t) = (-1)^i\varepsilon(t)/2$ and therefore  
\begin{align}
    \varepsilon_{ij}(t) &= ((-1)^i -(-1)^j)\frac{\varepsilon(t)}{2} \nonumber \\
    &= \left\{\begin{array}{ll}
       0  & \text{ if } \,\,i=j \\
       (-1)^i\varepsilon(t)  & \text{ if } \,\, i\neq j
    \end{array} \right.
\end{align}
In what follows, we assume that there are no level crossings, meaning that for all $t$ we have $\varepsilon_{ij}(t) \neq 0$ if $i \neq j$. We split the sum into two cases: 1) $i=j$ for which the second term is zero, as we are subtracting the same eigenvalues, and 2) $i\neq j$ for which the first term is zero since $\ket{i_t}$ and $\ket{j_t}$ are orthogonal. Hence, we arrive at the following expression
\begin{align}
    \mathrm{Tr}[\rho \dot H_{S}] &= +\rho_{0_t0_t} \frac{\dot{\varepsilon}(t)}{2}
     - \rho_{1_t1_t} \frac{\dot \varepsilon(t)}{2}\nonumber \\& +\rho_{0_t1_t}\varepsilon(t)\langle 1_t \vert \dot 0_t \rangle -\rho_{1_t0_t}\varepsilon(t) \langle 0_t \vert \dot 1_t\rangle \nonumber \\
     &= \frac{(\rho_{0_t0_t} - \rho_{1_t1_t})}{2} \dot \varepsilon(t) -\frac{(\rho_{0_t1_t} +\rho_{1_t0_t})}{2}\varepsilon(t)\dot \theta_t
\end{align}
from which we obtain
\begin{equation}
    \mathrm{Tr}[\rho \dot H_{S}]=\frac{\dot{\varepsilon}(t)}{2}(\rho_{0_t 0_t}-\rho_{1_t 1_t})-\varepsilon(t)\,\dot \theta_t \,\mathfrak{Re}[\rho_{0_t 1_t}]. 
\end{equation}
Above, we have used that $\langle 1_t \vert \dot 0_t \rangle = - \dot \theta_t/2 = - \langle 0_t \vert \dot 1_t \rangle $, with $\theta_t$ being the instantaneous rotation angle in $xz$-plane of $\{\vert 0_t \rangle, \vert 1_t \rangle\}$ relative to $\{\vert 0 \rangle, \vert 1\rangle \}$, and $\mathfrak{Re}[\rho_{0_t1_t}]=(\rho_{0_t1_t}+\rho_{0_t1_t}^*)/2$. Note that we have been using the notation $\rho_{ij} \equiv \langle i\vert \rho \vert j \rangle $. 

Now, we turn our attention to the reduction in work due to the friction, which is described by the last term in the above expression. Let us then write
\begin{equation}
    \delta W_{\rm fric}(t) := -\varepsilon(t)\,\dot \theta_t \,\mathfrak{Re}[\langle 0_t \vert \rho_S(t) \vert 1_t \rangle ]
\end{equation}
for a given quantum state $\rho_S$. To get expressions for compression and expansion strokes found in the main text, we simply plug in the corresponding time-dependent eigenvalue gaps (e.g. given by Eq.~\eqref{eq:eigenvalues_compression}), time-dependent eigenvectors, as well as $\dot \theta_t$, and then integrate along the specific interval within the Otto cycle.

Let us consider first the contribution during the compression stroke. The frictional part of the work during the compression stroke (setting $t_0 = 0$) is
\begin{equation}
    \delta W_{\rm comp}^{\rm (fric)}(t) = -\frac{\omega \Omega_0 \mathfrak{Re} \left[\langle 0_t \vert \rho_{S}(t)\vert 1_t \rangle \right]}{\sqrt{(\omega \tau_{\rm comp})^2 + (\Omega_0 t)^2}}.
\end{equation}
Similarly, for the expansion stroke, the expression for the frictional part reads
\begin{equation}
    \delta W_{\rm exp}^{\rm (fric)} = \frac{\omega \Omega_0 \mathfrak{Re}\left[\langle 0_t \vert \rho_{S}(t)\vert 1_t\rangle \right]}{\sqrt{(\omega \tau_{\rm exp})^2 + (\Omega_0 (\tau_{\rm exp} - t + t_2))^2}}.
\end{equation}
The total contribution of a finite-time description of work during the compression and expansion strokes is then given by
\begin{align}
     \Delta W(t_1,0) &= \int_0^{t_1} \frac{\dot{\varepsilon}_{\rm comp}(t)}{2}(\,\rho_{0_t0_t}-\rho_{1_t1_t})\mathrm dt \\
     &- \int_0^{t_1}  \frac{\omega \Omega_0 \, \mathfrak{Re}\!\big[\langle 0_t\vert \rho_S(t)\vert 1_t \rangle ]} {\sqrt{\omega^2 \tau_{\rm comp}^2+\Omega_0^2t^2} }  \mathrm dt,
\end{align}
and
\begin{align}
     \Delta W(t_3,t_2) &= \int_{t_2}^{t_3} \frac{\dot{\varepsilon}_{\rm exp}(t)}{2}(\,\rho_{{0}_t {0}_t}-\rho_{{1}_t {1}_t})\mathrm dt \\
     &+ \int_{t_2}^{t_3} \frac{\omega \Omega_0 \, \mathfrak{Re}\left[\langle  0_t \vert \rho_{S}(t)\vert  1_t\rangle \right]}{\sqrt{(\omega \tau_{\rm exp})^2 + (\Omega_0 (\tau_{\rm exp} - t + t_2))^2}} \mathrm dt,
\end{align}
as we wanted to show. 

\section{Extensions of the shortcut to adiabaticity beyond the single-qubit drive}\label{app:generalization}

In  Section~\ref{sec:Zeno_assisted_quantum_heat_engine} we showed, for the specific two-level drive used in the Otto cycle, how strong coupling plus frequent monitoring produces an effective shortcut to adiabaticity. The purpose of this appendix is to state the more general mechanism clearly: the same construction applies to arbitrary finite-dimensional working systems, provided the assumptions of Theorem~\ref{theorem:strong_coupling_theorem}  below (adapted from Ref.~\cite{burgarth2022oneboundtorulethem}) are satisfied. 

\begin{theorem}[Adapted from Ref.~\cite{burgarth2022oneboundtorulethem}]\label{theorem:strong_coupling_theorem}
    Let $t \in [0,\tau] \mapsto H_{\Gamma}(t)$ be an integrable Hamiltonian, with $H_\Gamma(t)$ self-adjoint and bounded for all $t \in [0,\tau]$ and all $\Gamma > 0$. Let 
    \begin{equation}
        U_\Gamma(t) := \mathcal{T}\left[\exp\left(-i\int_0^t \mathrm{d}s \, H_\Gamma(s)\right)\right]
    \end{equation}
    be the evolution generated by $H_\Gamma(t)$. Assume that for all $t \in [0,\tau]$ there exists the limit 
    \begin{equation}
        H_0(t) = \lim_{\Gamma \to \infty}\frac{1}{\Gamma}H_\Gamma(t),
    \end{equation}
    where $H_0(t)$ is self-adjoint, has the finite spectral representation 
    \begin{equation}
        H_0(t) = \sum_{\ell = 1}^m E_\ell (t)P_\ell (t),
    \end{equation}
    with $\{E_\ell(t)\} \subseteq \mathbbm{R}$, while $P_\ell(t)=P_\ell(t)^\dagger$, $P_\ell(t)P_\kappa(t)=\delta_{\ell \kappa}P_\ell(t)$ for all $\ell,\kappa$. Moreover, assume that $E_\ell(t)$ is $C^1$ (i.e., every $E_\ell(t)$ has a time derivative and that derivative is a continuous function), and $P_\ell(t)$ is $C^2$ (i.e., every $P_\ell(t)$ has a first and second derivative, and the second derivative is a continuous function), and that there are no level crossings, i.e.,
    \begin{equation}
        \vert \omega_{\kappa,\ell}(t)\vert := \vert E_\ell(t)-E_\kappa(t)\vert > 0,
    \end{equation}
    for all $t \in [0,\tau]$ and $\kappa \neq \ell$. Now, assume that 
    \begin{equation}
        G_\Gamma(t) := H_\Gamma(t)-\Gamma \,H_0(t)
    \end{equation}
    is differentiable and $\Vert G_\Gamma \Vert_{\infty,\tau}, \Vert \dot G_{\Gamma}\Vert_{\infty,\tau} = o(\sqrt{\Gamma})$. Then, we have that $U_\Gamma(t)$ uniformly converges to 
    \begin{equation}
       \mathcal{T}\left[\exp\left(-i\int_0^t \mathrm{d}s \big(\Gamma H_0(s)+G_{\Gamma,\rm Zeno}(s)+A(s)\big) \right)\right] 
    \end{equation}
    as $\Gamma \to \infty$ for any $t \in [0,\tau]$, where $A(t)$ is the generator of the quasistatic  transporter
    \begin{equation}
        A(t) = \frac{i}{2}\sum_{\ell=1}^m [\dot P_\ell(t),P_\ell (t)]
    \end{equation}
    and
    \begin{equation}
        G_{\Gamma,\rm Zeno}(t) = \sum_{\ell=1}^m P_\ell (t)G_\Gamma (t) P_\ell(t)
    \end{equation}
    is the time-dependent Zeno Hamiltonian of $G_\Gamma(t)$ relative to $\{P_\ell(t)\}$. The convergence error is bounded by
    \begin{widetext}
    \begin{align}\label{eq:precise_bound_operator_norm}
        &\left \Vert U_\Gamma(t)-\mathcal{T}\left[\exp\left(-i\int_0^t \mathrm{d}s \left(\Gamma H_0(s)+G_{\Gamma,\rm Zeno}(s)+A(s)\right)\right)\right] \right\Vert \leq \frac{\sqrt{m}}{\Gamma \eta}(1 + \tau \Vert A \Vert_{\infty, \tau}+2\Vert G_{\Gamma}\Vert_{\infty,\tau} ) \times \nonumber \\& \times \Bigr [ \left(2 + \frac{\eta'}{\eta}\tau\right)(\Vert A \Vert_{\infty, \tau}+\Vert G_{\Gamma}\Vert_{\infty,\tau})+\tau(\Vert \dot A \Vert_{\infty, \tau}+\Vert\dot  G_{\Gamma}\Vert_{\infty,\tau}+2\Vert A\Vert_{\infty,\tau} \Vert G_{\Gamma}\Vert_{\infty,\tau})\Bigr], 
    \end{align}
    \end{widetext}
    where $\eta$ and $\eta'$ are the minimal spectral gap and the maximal spectral slope, defined respectively as 
    \begin{equation}
        \eta = \min_{\kappa,\ell:\kappa\neq \ell} \,\min_{t \in [0,\tau]} \vert \omega_{\kappa,\ell}(t)\vert,\quad \eta' =  \max_{\kappa,\ell:\kappa\neq \ell} \,\max_{t \in [0,\tau]} \vert \dot \omega_{\kappa,\ell}(t)\vert.
    \end{equation}
\end{theorem}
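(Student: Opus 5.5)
The plan is the standard adiabatic-frame argument of Burgarth et al.: pass to the frame generated by the fast part, integrate the off-diagonal remainder by parts, and control the boundary and residual terms by $1/(\Gamma\eta)$. First introduce the Kato transporter $W(t)$ solving $\dot W(t) = -iA(t)W(t)$, $W(0)=\mathbbm{1}$. Because $A$ is built from the $P_\ell$, it satisfies the intertwining property $W(t)P_\ell(0)=P_\ell(t)W(t)$. This uses only the $C^2$ regularity of the projections. Next define the reference (adiabatic) evolution
\begin{equation*}
V_\Gamma(t)=\mathcal{T}\exp\left(-i\int_0^t \mathrm{d}s\,\big(\Gamma H_0(s)+A(s)+G_{\Gamma,\rm Zeno}(s)\big)\right).
\end{equation*}
Its generator preserves the instantaneous block structure, so $V_\Gamma(t)P_\ell(0)=P_\ell(t)V_\Gamma(t)$.

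Second, use the Duhamel/interaction-picture formula. The difference of generators is
\begin{equation*}
H_\Gamma-\Gamma H_0-A-G_{\Gamma,\rm Zeno}=G^{\rm off}_\Gamma-A,
\end{equation*}
where $G^{\rm off}_\Gamma=G_\Gamma-G_{\Gamma,\rm Zeno}=\sum_{\ell\neq\kappa}P_\ell G_\Gamma P_\kappa$. Note that $A$ is itself purely block off-diagonal, since $P_\ell\dot P_\ell P_\ell=0$. This gives
\begin{equation*}
U_\Gamma(t)-V_\Gamma(t)=-i\int_0^t \mathrm{d}s\,U_\Gamma(t)U_\Gamma(s)^\dagger\,\big(G^{\rm off}_\Gamma(s)-A(s)\big)\,V_\Gamma(s),
\end{equation*}
so it suffices to show that $\int_0^t V_\Gamma(s)^\dagger \big(G^{\rm off}_\Gamma-A\big)(s)V_\Gamma(s)\,\mathrm{d}s=O\big((\Gamma\eta)^{-1}\big)$. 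Using the intertwining property, each block $P_\ell(s)XP_\kappa(s)$ is conjugated into a piece that oscillates as $\exp\big(i\Gamma\int_0^s\omega_{\kappa\ell}\big)$ times slowly varying factors.

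Third, solve the commutator equation blockwise: set $X_{\ell\kappa}=P_\ell XP_\kappa/(\Gamma\omega_{\kappa\ell})$, so that $[\Gamma H_0,Y]=P_\ell XP_\kappa$ with $Y=\sum_{\ell\neq\kappa}X_{\ell\kappa}$. This makes the off-diagonal integrand an approximate total derivative:
\begin{equation*}
V_\Gamma^\dagger XV_\Gamma = i\frac{\mathrm{d}}{\mathrm{d}s}\big(V_\Gamma^\dagger YV_\Gamma\big)-iV_\Gamma^\dagger\big(\dot Y+i[A+G_{\Gamma,\rm Zeno},Y]\big)V_\Gamma.
\end{equation*}
Integrating by parts, the boundary term is bounded by $2\Vert Y\Vert_{\infty,\tau}\lesssim\sqrt m(\Vert A\Vert+\Vert G_\Gamma\Vert)/(\Gamma\eta)$. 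The bulk term involves:
\begin{itemize}
\item $\dot Y$, which brings in $\dot A$, $\dot G_\Gamma$, the derivatives of $P_\ell$ (expressible through $A$), and the derivative of $1/\omega_{\kappa\ell}$, which supplies the factor $\eta'/\eta$;
\item the commutator $[A+G_{\Gamma,\rm Zeno},Y]$, which gives the product $\Vert A\Vert\Vert G_\Gamma\Vert$.
\end{itemize}
Collecting these reproduces the bracket in Eq.~\eqref{eq:precise_bound_operator_norm}. The prefactor $(1+\tau\Vert A\Vert+2\Vert G_\Gamma\Vert)$ arises when the outer Duhamel factor, which involves the full generator, is differentiated and re-estimated, so that $\Gamma H_0$ cancels and only $A$ and $G$ survive.

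Finally, under the hypothesis $\Vert G_\Gamma\Vert,\Vert\dot G_\Gamma\Vert=o(\sqrt\Gamma)$, the right-hand side is $o(\Gamma)/\Gamma\to0$ uniformly in $t\in[0,\tau]$. This gives the claimed convergence, since the leading product of the two $o(\sqrt{\Gamma})$ factors is $o(\Gamma)$.

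The main obstacle is the bookkeeping in the integration by parts, in particular tracking the operator norm of the blockwise inverse with the $\sqrt m$ factor. This requires bounding $\sum_{\ell\neq\kappa}P_\ell XP_\kappa/\omega_{\kappa\ell}$ without losing a factor of $m$, which I would try via a Cauchy–Schwarz estimate over blocks. The differentiation of the gaps is the other delicate piece. I would follow the treatment of Ref.~\cite{burgarth2022oneboundtorulethem} closely there.
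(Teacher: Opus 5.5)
Your overall architecture matches the argument of Ref.~\cite{burgarth2022oneboundtorulethem}, to which the paper defers: Kato transporter, Duhamel formula, blockwise inverse of $[\Gamma H_0,\cdot\,]$, integration by parts, and the row-orthogonality estimate $\Vert\sum_{\ell\neq\kappa}c_{\ell\kappa}P_\ell ZP_\kappa\Vert\le\sqrt m\,\max_{\ell\neq\kappa}|c_{\ell\kappa}|\,\Vert Z\Vert$ for the $\sqrt m$. The gap is the \emph{frame} in which you integrate by parts. Because $V_\Gamma$ contains $G_{\Gamma,\rm Zeno}$, your bulk term contains $i[G_{\Gamma,\rm Zeno},Y]$. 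Since $G_{\Gamma,\rm Zeno}$ commutes with every $P_\ell$, this term equals $i\sum_{\ell\neq\kappa}P_\ell[G_{\Gamma,\rm Zeno},G_\Gamma-A]P_\kappa/(\Gamma\omega_{\kappa\ell})$. Its size is up to $2\sqrt m\,\Vert G_\Gamma\Vert(\Vert G_\Gamma\Vert+\Vert A\Vert)/(\Gamma\eta)$. That is quadratic in $G_\Gamma$, since generically $[G_{\Gamma,\rm Zeno},G_\Gamma]\neq0$, and not the product $\Vert A\Vert\Vert G_\Gamma\Vert$ you claim. So your bracket acquires an extra $\tau\Vert G_\Gamma\Vert^2$ and does not reproduce the stated bound. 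Worse, your outer step multiplies this by $1+\tau\Vert G^{\rm off}_\Gamma-A\Vert\sim\tau\Vert G_\Gamma\Vert$, so your error estimate is of order $\tau^2\Vert G_\Gamma\Vert^3/(\Gamma\eta)$. This need not vanish under $\Vert G_\Gamma\Vert=o(\sqrt\Gamma)$; take for instance $\Vert G_\Gamma\Vert\sim\Gamma^{0.45}$. Your final counting argument silently uses the paper's bracket rather than the one your route produces. As written, you get convergence only for $\Vert G_\Gamma\Vert=o(\Gamma^{1/3})$, which does include the $\Gamma$-independent case used in the main text. The same problem hides in ``it suffices to show $\ldots=O((\Gamma\eta)^{-1})$'': that constant grows with $\Vert G_\Gamma\Vert$ and is then multiplied by $\tau\Vert G_\Gamma\Vert$.

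The remedy is to keep $G_{\Gamma,\rm Zeno}$ out of the frame.
\begin{itemize}
\item Let $U_{\rm ad}$ be generated by $\Gamma H_0+A$ alone; it still satisfies $U_{\rm ad}(t)P_\ell(0)=P_\ell(t)U_{\rm ad}(t)$. Write $U_\Gamma=U_{\rm ad}\tilde U$ and $V_\Gamma=U_{\rm ad}\tilde V$, where $\tilde U$ and $\tilde V$ are generated by $\tilde H_1=U_{\rm ad}^\dagger(G_\Gamma-A)U_{\rm ad}$ and $\tilde H_2=U_{\rm ad}^\dagger G_{\Gamma,\rm Zeno}U_{\rm ad}$.
\item Set $S(t)=\int_0^t(\tilde H_1-\tilde H_2)\,\mathrm ds$. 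One Duhamel step plus one integration by parts gives $\Vert U_\Gamma(t)-V_\Gamma(t)\Vert\le\Vert S(t)\Vert+\int_0^t\Vert\tilde H_1S-S\tilde H_2\Vert\,\mathrm ds\le\Vert S\Vert_{\infty,\tau}\big(1+\tau\Vert A\Vert_{\infty,\tau}+2\tau\Vert G_\Gamma\Vert_{\infty,\tau}\big)$. On dimensional grounds, the statement's $2\Vert G_\Gamma\Vert$ is presumably a typo for $2\tau\Vert G_\Gamma\Vert$.
\item Now integrate $S$ by parts in the $U_{\rm ad}$ frame only. Take your $Y$, rewritten as $\sum_{\ell\neq\kappa}P_\ell(G_\Gamma-A)P_\kappa/(\Gamma\omega_{\kappa\ell})$. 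The bulk term is then just $\dot Y+i[A,Y]$. The identity $\tfrac{\mathrm d}{\mathrm dt}(P_\ell ZP_\kappa)+i[A,P_\ell ZP_\kappa]=P_\ell(\dot Z+i[A,Z])P_\kappa$, which follows from $\dot P_\ell=-i[A,P_\ell]$, reduces it to the off-diagonal blocks of $\dot G_\Gamma-\dot A+i[A,G_\Gamma]$ divided by $\Gamma\omega_{\kappa\ell}$, plus the $\dot\omega_{\kappa\ell}/\omega_{\kappa\ell}^2$ term.
\end{itemize}
Together with the boundary term $2\Vert Y\Vert_{\infty,\tau}$, this yields exactly the stated bracket, which is linear in $G_\Gamma$ apart from $\Vert A\Vert\Vert G_\Gamma\Vert$. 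The total error is therefore at most quadratic in $\Vert G_\Gamma\Vert,\Vert\dot G_\Gamma\Vert$ divided by $\Gamma$, i.e.\ $o(1)$. A minor point: your total-derivative identity has the overall sign reversed. It should read $V_\Gamma^\dagger XV_\Gamma=-i\tfrac{\mathrm d}{\mathrm ds}(V_\Gamma^\dagger YV_\Gamma)+iV_\Gamma^\dagger(\dot Y+i[A+G_{\Gamma,\rm Zeno},Y])V_\Gamma$; this is harmless for norms.
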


Above, we have used the norm $\Vert \cdot \Vert_{\infty,T}$ is defined as $\Vert A \Vert_{\infty,T} := \sup_{t \in [0,T]} \Vert A \Vert$, where $\Vert A \Vert$ is the operator norm. The proof of this theorem can be found in Ref.~\cite{burgarth2022oneboundtorulethem}. Here we use it as a technical input and specialize it to the class of Hamiltonians relevant for our lubrication protocol. The resulting corollary makes explicit that the shortcut-to-adiabaticity mechanism is not restricted to the single-qubit model studied in the main text.

\begin{corollary}[Shortcuts to adiabaticity from strong coupling]
    Let $t \in [0,\tau] \mapsto  H_\Gamma(t)$ be a time-dependent Hamiltonian given by 
    \begin{equation}\label{eq:starting_Hamiltonian_of_the_corollary}
        H_\Gamma(t) = \Gamma H_{\rm int}^{S}(t) \otimes H_{\rm int}^{L}+H_S(t)\otimes \mathbbm{1}+ \mathbbm{1} \otimes H_L
    \end{equation}
    which satisfies all the conditions from Theorem~\ref{theorem:strong_coupling_theorem}, where $H_S(t)$ is an arbitrary time-dependent Hamiltonian for a $d_S$-dimensional system $S$ and $H_L$ is an arbitrary time-independent Hamiltonian for a $d_L$-dimensional system $L$. Further, suppose that $[H_S(t),H_{\rm int}^S(t)] = 0$ for all $t \in [0,\tau]$, that $H_{\rm int}^S(t)$ and $H_{\rm int}^L$ are invertible, and, writing the spectral decomposition $$H_{\mathrm{int}}^L = \sum_{m=1}^{d_L} q_{m}^{\rm (int)} Q_m,$$ that 
    \begin{align}
        \sum_{m=1}^{d_L} Q_m \, H_L \, Q_m = 0
    \end{align}
    holds. Then, in the limit of $\Gamma \to \infty$ the unitary evolution $U_\Gamma(t)$ converges uniformly to  $U_{\rm eff}(t)$ generated by the Hamiltonian 
    \begin{equation}
        H_{\rm eff}(t) = \Gamma H_{\rm int}^S(t) \otimes H_{\rm int}^L + \big(H_S(t) + A_S(t)\big)\otimes \mathbbm{1},
    \end{equation}
    where $A_S(t)$ is the quasistatic transporter relative to the spectral decomposition of $H_S(t)$.
\end{corollary}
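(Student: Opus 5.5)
The plan is to apply Theorem~\ref{theorem:strong_coupling_theorem} with $H_0(t) = H_{\rm int}^S(t)\otimes H_{\rm int}^L$ and $G_\Gamma(t) = G(t) = H_S(t)\otimes\mathbbm{1} + \mathbbm{1}\otimes H_L$. Since $G$ is independent of $\Gamma$, the growth conditions $\Vert G\Vert_{\infty,\tau},\Vert\dot G\Vert_{\infty,\tau} = o(\sqrt{\Gamma})$ hold trivially. The remaining hypotheses (smoothness, no level crossings) are assumed in the statement. The theorem then gives convergence to the evolution generated by $\Gamma H_0 + G_{\rm Zeno} + A$. What remains is to compute $G_{\rm Zeno}$ and $A$ explicitly, mirroring the qubit calculation around Eqs.~\eqref{eq:AS_comp}--\eqref{eq:strong Zeno Hamiltonian2}.

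\textbf{Spectral projectors.} Write $H_{\rm int}^S(t) = \sum_j s_j(t) P_j(t)$ and $H_{\rm int}^L = \sum_m q_m^{\rm (int)} Q_m$. The spectral projectors of $H_0(t)$ are
\begin{equation*}
\Pi_\lambda(t) = \sum_{(j,m):\, s_j q_m = \lambda} P_j(t)\otimes Q_m .
\end{equation*}
Invertibility guarantees that no product $s_j q_m$ vanishes, so degeneracies come only from coinciding products. Since $[H_S(t),H_{\rm int}^S(t)]=0$, we may refine the $P_j(t)$ so that $H_S(t)$ is block diagonal, i.e. $H_S = \sum_j P_j H_S P_j$. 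I will take $P_j(t)$ to coincide with the spectral projectors of $H_S(t)$; this is the implicit identification behind the phrase ``$A_S$ relative to $H_S$''.

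\textbf{Zeno term.} Terms with $(j,m)\neq(j',m')$ inside a block $\Pi_\lambda$ are killed by orthogonality in at least one factor. This gives
\begin{equation*}
\sum_\lambda \Pi_\lambda (H_S\otimes\mathbbm{1})\Pi_\lambda = \sum_{j,m} P_j H_S P_j\otimes Q_m = H_S\otimes\mathbbm{1},
\end{equation*}
and
\begin{equation*}
\sum_\lambda \Pi_\lambda(\mathbbm{1}\otimes H_L)\Pi_\lambda = \sum_{j,m,m'} P_j\otimes Q_m H_L Q_{m'},
\end{equation*}
with $m,m'$ restricted to the same $\lambda$-class. The hypothesis $\sum_m Q_m H_L Q_m = 0$ removes the diagonal $m=m'$ terms.

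\textbf{Transporter.} Since $\dot Q_m = 0$, each $\dot\Pi_\lambda$ involves only $\dot P_j\otimes Q_m$. Expanding $[\dot\Pi_\lambda,\Pi_\lambda]$ produces the diagonal pieces $[\dot P_j,P_j]\otimes Q_m$, which sum over $m$ to $A_S\otimes\mathbbm{1}$. It also produces cross pieces $[\dot P_j,P_{j'}]\otimes Q_m$ with $j\neq j'$ in the same class. Collecting, one obtains $A = A_S\otimes\mathbbm{1}$ plus these cross terms, in exact analogy with the two-qubit step where $P_1 = \mathbbm{1}-P_0$ allowed them to recombine.

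\textbf{Main obstacle.} The difficulty is the accidental degeneracies $s_jq_m = s_{j'}q_{m'}$ with $(j,m)\neq(j',m')$. These generate the off-diagonal cross terms in both $G_{\rm Zeno}$ and $A$, and the vanishing-diagonal hypothesis does not remove them. I would first try to show that they cancel or are harmless. In the qubit case $R\otimes X$ is degenerate and the cross terms do combine, via completeness relations as in Eq.~\eqref{eq:intermediate_At}, because each degenerate block pairs every $j$ with a unique $m$. If a general cancellation fails, I would add the assumption that degenerate classes have this pairing structure (or are nondegenerate apart from such pairings) and argue that the cross contributions vanish. Concretely, the relevant matrix elements would be shown to be zero using orthogonality together with $[H_S,P_j]=0$. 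Making this precise is the step I expect to demand the most care.
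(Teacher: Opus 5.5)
Your setup (the choice $H_0=H_{\rm int}^S\otimes H_{\rm int}^L$, $G=H_S\otimes\mathbbm{1}+\mathbbm{1}\otimes H_L$, the trivial growth bounds, and the reduction to computing $G_{\rm Zeno}$ and $A$) is the same as the paper's. However, your argument stops at the step you call the main obstacle. The missing idea is a short consequence of invertibility, which you used only to note that no product $s_jq_m$ vanishes.

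Let $P_j(t)$ be the spectral projectors of $H_{\rm int}^S(t)$, identified, as the statement intends, with those of $H_S(t)$. Then the $s_j$ are pairwise distinct and nonzero, and so are the $q_m$. Suppose $(j,m)\neq(j',m')$ lie in the same block $\Pi_\lambda$. Then $j\neq j'$ \emph{and} $m\neq m'$:
if $j=j'$, then $s_jq_m=s_jq_{m'}$, and $s_j\neq0$ gives $q_m=q_{m'}$, i.e. $m=m'$;
if $m=m'$, then $q_m\neq0$ gives $s_j=s_{j'}$, i.e. $j=j'$.

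Hence every cross term vanishes. In $\Pi_\lambda(H_S\otimes\mathbbm{1})\Pi_\lambda$ and in $[\dot\Pi_\lambda,\Pi_\lambda]$ (recall $\dot Q_m=0$) they die by $Q_mQ_{m'}=0$. In $\Pi_\lambda(\mathbbm{1}\otimes H_L)\Pi_\lambda$ they die by $P_jP_{j'}=0$. In particular, the terms $P_j\otimes Q_mH_LQ_{m'}$ with $m\neq m'$ that you retained never occur. The computation then reduces to the finer family $\{P_j\otimes Q_m\}_{j,m}$, and $H_S\otimes\mathbbm{1}$ and $A_S\otimes\mathbbm{1}$ follow at once. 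This is exactly the paper's argument. The ``pairing structure'' you planned to add as an assumption is this lemma, so it already follows from the hypotheses and no cancellation needs to be found.

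The second half of the lemma uses distinctness of the $s_j$, which is why your ``refinement'' step has to go. Suppose you split the spectral projectors of $H_{\rm int}^S$ into finer ones, so that $s_j=s_{j'}$ for some $j\neq j'$. Then the pieces $[\dot P_j,P_{j'}]\otimes Q_m$ genuinely survive. No argument based on orthogonality and $[H_S,P_j]=0$ can remove them, and they recombine into the transporter of $H_{\rm int}^S$, not of $H_S$.

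For example, take $H_{\rm int}^S=\mathbbm{1}$, $H_{\rm int}^L=Z$, $H_L=X$. This satisfies every stated hypothesis, but $H_0=\mathbbm{1}\otimes Z$ has time-independent projectors, so $A=0$, whereas $A_S\neq0$ for any rotating $H_S(t)$. So the coincidence of the spectral projectors of $H_{\rm int}^S(t)$ and $H_S(t)$ is a genuine hypothesis, not a consequence of $[H_S,H_{\rm int}^S]=0$. The paper assumes it tacitly: it writes $H_{\rm int}^S(t)=\sum_n\varepsilon_n^{(\rm int)}(t)P_n(t)$ with the spectral projectors of $H_S$ and treats the $\varepsilon_n^{(\rm int)}$ as distinct. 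State this assumption explicitly, do not refine, and the proof closes as above.
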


\begin{proof}
    The proof follows closely the strategy employed in the main text. We start noticing that for the specific Hamiltonian $H_\Gamma(t)$ given by Eq.~\eqref{eq:starting_Hamiltonian_of_the_corollary} the Hamiltonian $H_0(t) = \lim_{\Gamma \to \infty}\frac{1}{\Gamma} H_\Gamma (t)$ is precisely
    \begin{equation}
        H_0(t) = H_{\rm int}^S(t) \otimes H_{\rm int}^L \,.
    \end{equation}
    Consider the spectral decomposition $H_S(t) = \sum_{n=1}^{d_S} \varepsilon_n(t) P_n(t)$
    of the system's local Hamiltonian. Since, by assumption,
    $[H_{\rm int}^S(t),H_S(t)] = 0$ for all $t$ we can write
    \begin{equation}
        H_{\rm int}^S(t) = \sum_{n=1}^{d_S} \varepsilon^{(\rm int)}_n(t) P_n(t),
    \end{equation}
    and, together with $H_{\rm int}^L = \sum_{m=1}^{d_L} q_m^{(\rm int)} Q_m$, we obtain
    the finite spectral representation
    \begin{align}
    H_0(t) &= \sum_{n,m}E_{n,m}(t)P_{n,m}(t)\\
    &= \sum_{n,m} \varepsilon_n^{\rm (int)}(t)\,q_m^{(\rm int)}\, P_n(t) \otimes Q_m.
    \end{align}    

    If the products $\varepsilon_n^{(\rm int)}(t) q_m^{(\rm int)}$ are not all distinct, the spectral projectors of $H_0(t)$ are the coarser sums $P_\ell(t) = \sum_{(n,m) \in S_\ell} P_n(t)\otimes Q_m$ over each degenerate block $S_\ell$. Two distinct pairs $(n,m) \neq (n',m')$ can only lie in the same block with $n=n'$ if $\varepsilon_n^{(\rm int)}(t) = 0$, and with $m = m'$ if $q_m^{(\rm int)} = 0$; both are excluded by invertibility. Hence every pair sharing a block has $n \neq n'$ and $m \neq m'$, and all cross terms below vanish through $P_n(t)P_{n'}(t) = 0$ or $Q_mQ_{m'} = 0$. The computation may therefore be carried out with the finer family $\{P_n(t)\otimes Q_m\}_{n,m}$.

\begin{figure}[t]
    \centering    \includegraphics[width=\columnwidth]{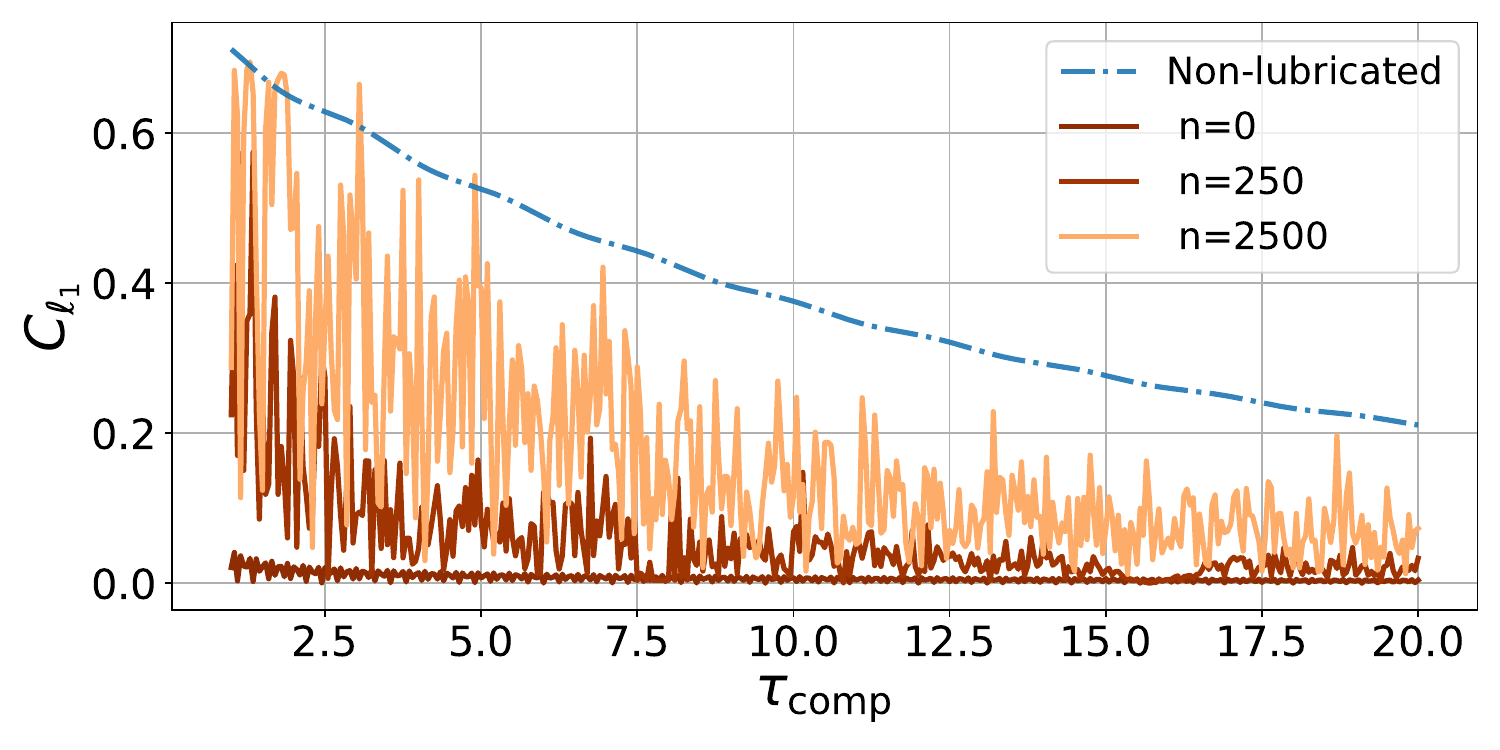}
    \caption{\textbf{The effect of frequently monitoring the lubricant system with the computational basis.} If $\Gamma/n \ll 1$ the unitary pulses $U_{\rm tot}(\delta t)$ are far from the strong-coupling regime, and thus far from the Zeno limit. The curve is approximated from simulated values at intervals of 0.05 in $\tau_{\rm comp}$. Parameters: $\Gamma = 50$, $\omega=\omega_L=1$, $\Omega_0=5$, $T_{\rm c}=0.5$.}    \label{fig:measurements_make_things_worse}
\end{figure}

    From Theorem~\ref{theorem:strong_coupling_theorem} we know that the evolution will converge uniformly to $U_{\rm eff}(t)$ generated by the effective Hamiltonian 
    \begin{equation}
        H_{\rm eff}(t) = \Gamma H_{\rm int}^S(t) \otimes H_{\rm int}^L+ A(t)+G_{\Gamma,\rm Zeno}(t),
    \end{equation}
    where we calculate $A(t)$ and $G_{\Gamma,\rm Zeno}(t)$ from the projectors $\{P_n(t) \otimes Q_m\}_{n,m}$. In this case, we find 
    \begin{align}
        &G_{\Gamma, \rm Zeno}(t) =  \nonumber \\ 
        &=\sum_{n,m}P_n(t) \otimes Q_m \left( H_S(t)\otimes \mathbbm{1}+\mathbbm{1}\otimes H_L\right)P_n(t) \otimes Q_m \nonumber \\
        &= H_S(t)\otimes \mathbbm{1} 
    \end{align}
    where we have used the assumption that $\sum_m Q_m H_L Q_m = 0$. Moreover, we have also that calculating $A(t)$ one obtains
    \begin{align}
        A(t) &= \frac{i}{2}\sum_{n,m}\left[ \frac{\mathrm{d}}{\mathrm{d}t} \left (P_n(t) \otimes Q_m \right) , P_n(t) \otimes Q_m\right] \\
        &= \frac{i}{2}\sum_{n,m} \left [\dot P_n(t), P_n(t) \right] \otimes Q_m \\
        &= \frac{i}{2}\sum_n \left [\dot P_n(t), P_n(t) \right] \otimes \mathbbm{1}.
    \end{align}
    This concludes the proof.
\end{proof}

With the above, it is now clear that any $d_S$-dimensional system can be lubricated via our method, for example, by considering an auxiliary lubricant system as we have done in the main text. \newline

\section{Frequent monitoring and the role of commutativity}\label{app:commutativity}

In the main text we first took the strong-coupling approximation for the evolution generated by $H_{\mathrm{tot}}(t)$ in Eq.~\eqref{eq:total_hamiltonian_SL} and then applied frequent monitoring of the lubricant in a basis that commutes with the interaction term $\Gamma R(t)\otimes X$. This appendix clarifies why that choice is not important in our construction, and when the order of the two limits matters.

Let us instead consider taking the frequent-monitoring limit first, which leads to an evolution generated by the effective Zeno Hamiltonian
\begin{align}
    {H'}_{\rm eff}^{(\ell)}(t) &= \mathbbm{1}\otimes \vert \ell \rangle \langle \ell \vert H_{\rm tot}(t) \mathbbm{1}\otimes \vert \ell \rangle \langle \ell \vert\\
    &=\ell \, \Gamma R(t) \otimes \vert \ell \rangle \langle \ell \vert + H_{S}(t)\otimes \vert \ell \rangle \langle \ell \vert,
\end{align}
since $\langle \ell \vert Z \vert \ell \rangle = 0$ for $\ell \in \{+,-\}$. Applying the strong-coupling Theorem~\ref{theorem:strong_coupling_theorem} to this Hamiltonian, the effective evolution in each Zeno subspace is then generated by
\begin{equation}
    {H'}_{\rm Zeno}^{(\ell)}(t) = \ell \Gamma R(t) \otimes \vert \ell \rangle \langle \ell \vert + \bigr(H_{S}(t)+A_{S}(t) \bigr) \otimes \vert \ell \rangle \langle \ell \vert .
\end{equation}
We thus conclude that ${H'}_{\rm Zeno}^{(\ell)}(t)$ just obtained coincides with $H_{\rm Zeno}^{(\ell)}(t)$ derived in the main text.

The situation changes if the lubricant is monitored in a basis that anticommutes with the interaction term. For example, if one performs frequent monitoring in the computational basis $\{\vert i \rangle \}_{i=0,1}$ of the lubricant, the two limits no longer match. Theoretically, this is seen because in that case the monitoring suppresses the very interaction responsible for the shortcut to adiabaticity. One is then combining two limits that work against each other: if $\Gamma/n$ remains finite, jumps between Zeno subspaces persist; if $\Gamma/n\ll 1$, the interaction is effectively averaged away and the protocol reduces to the non-lubricated dynamics. Figure~\ref{fig:measurements_make_things_worse} illustrates this competition.

\end{appendix}

\end{document}

%% file: preamble.tex
\usepackage{amsthm}
\usepackage{mathtools}
\usepackage{physics}
\usepackage{bbm}
\usepackage{amsfonts}
\usepackage{amssymb}
\usepackage{graphicx}
\usepackage{xfrac}
\usepackage[T1]{fontenc}
\usepackage{bbold}
\usepackage[utf8]{inputenc}
\usepackage{babel}
\usepackage{bbding}
\usepackage{xcolor}
\usepackage{quantikz}
\usepackage{booktabs}
\usepackage{tabularx}
\usepackage{cancel}
\usepackage{utfsym}

\usepackage[table,xcdraw]{xcolor}
\definecolor{color_scheme}{RGB}{141, 186, 128}
\usepackage[colorlinks=true,linktocpage=true]{hyperref}
\hypersetup{
    allcolors  = {color_problem_box!60!black},
}

\theoremstyle{plain}
\newtheorem{theorem}{Theorem}

\newtheorem{corollary}{Corollary}

\theoremstyle{definition}

\usepackage{amsmath,amsthm,amssymb,amsfonts}

\usepackage{graphicx}
\usepackage{multirow}
\usepackage{url}
\usepackage[usenames,dvipsnames]{xcolor}
\usepackage{bbold}
\usepackage[normalem]{ulem}

\usepackage[most]{tcolorbox} 

\newtcolorbox{question}{
  colback=white,
  colframe=blue!50!black,
  fonttitle=\bfseries,
  title=Question,
  sharp corners,
  boxrule=0.8pt
}

\definecolor{color_problem_box}{RGB}{141, 186, 128}

\newtcolorbox{mainresultsbox}[2][]{%
  attach boxed title to top center = {yshift=-8pt},
  colback      = color_problem_box!5!white,
  colframe     = green!50!black,
  halign       = flush left,
  fonttitle    = \bfseries\sffamily,
  colbacktitle = color_problem_box!95!black,
  title        = #2,
  enhanced,
  float        = t,
}

\newcommand{\firstbox}{\hyperlink{box_1}{\color{color_scheme!60!black}{{Box 1}}}}

\newcommand{\secondbox}{\hyperlink{box_2}{\color{color_scheme!60!black}{{Box 2}}}}